\newcommand{\D}{\Delta}
\newcommand{\pd}{\partial}
\newcommand{\g}{\gamma}
\renewcommand{\l}{\lambda}
\renewcommand{\L}{\Lambda}
\renewcommand{\r}{\rho}
\renewcommand{\t}{\tau}
\renewcommand{\a}{\alpha}
\newcommand{\n}{\nabla}
\newcommand{\w}{w}
\newcommand{\e}{\epsilon}
\newcommand{\la}{\langle}
\newcommand{\ra}{\rangle}
\newcommand{\s}{\sigma}
\newcommand{\sC}{\mathscr{C}}
\newcommand{\cC}{\mathcal{C}}
\newcommand{\sH}{\mathcal{H}}
\newcommand{\G}{\mathscr{G}}
\newcommand{\sN}{\mathcal{N}}
\renewcommand{\S}{\mathscr{S}}
\newcommand{\Ft}{\mathcal{F}_t}
\newcommand{\Pf}{\Psi_f}
\newcommand{\uu}{\underline{u}}
\newcommand{\uo}{u^{(0)}}
\newcommand{\uf}{u^{(1)}}
\newcommand{\us}{u^{(2)}}
\newcommand{\ut}{u^{(3)}}
\newcommand{\ufo}{u^{(4)}}
\newcommand{\uj}{u^{(j)}}
\newcommand{\vno}{v_{N_0}}
\newcommand{\uno}{u^{(0)}_{N_0}}
\newcommand{\unf}{u^{(1)}_{N_1}}
\newcommand{\uns}{u^{(2)}_{N_2}}
\newcommand{\unt}{u^{(3)}_{N_3}}
\newcommand{\unfo}{u^{(4)}_{N_4}}
\newcommand{\unj}{u^{(5)}_{N_5}}
\newcommand{\lu}{\underline{\lambda}}
\newcommand{\muu}{\underline{\mu}}
\newcommand{\vu}{\underline{v}}
\newcommand{\C}{\mathbb{C}}
\newcommand{\R}{\mathbb{R}}
\newcommand{\N}{\mathbb{N}}
\newcommand{\tor}{\mathbb{T}}
\newcommand{\Z}{\mathbb{Z}}
\newcommand{\cX}{\mathcal{X}}
\newcommand{\ov}{\overline}
\def\eqnn{\begin{eqnarray*}}
\def\eeqnn{\end{eqnarray*}}
\def\eqn{\begin{eqnarray}}
\def\eeqn{\end{eqnarray}}
\def\prf{\begin{proof}}
\def\endprf{\end{proof}}
\newtheorem{theorem}{Theorem}[section]
\newtheorem{cor}[theorem]{Corollary}
\newtheorem{lemma}[theorem]{Lemma}
\newtheorem{prop}[theorem]{Proposition}
\theoremstyle{definition}
\newtheorem{defin}{Definition}[section]
\theoremstyle{remark}
\newtheorem{rem}{Remark}[section] 
\numberwithin{equation}{section}
\begin{document}

\title[NLSS on the 3D Torus]{On the Well-posedness and Stability of Cubic and Quintic Nonlinear Schr\"odinger Systems on $\tor^3$}
                                                     
\author{Thomas Chen}  	 

\address[T.Chen]{Department of Mathematics, University of Texas at Austin} 

\email{tc@math.utexas.edu}

\author{Amie Bowles Urban}   

\address[A. Bowles Urban]{Department of Mathematics, University of Texas at Austin, and Johns Hopkins University Applied Physics Laboratory}

\maketitle

\begin{abstract}
In this paper, we study cubic and quintic nonlinear Schr\"odinger systems on 3-dimensional tori, with initial data in an adapted Hilbert space $H^s_{\lu},$ and all of our results hold on rational and irrational rectangular, flat tori. In the cubic and quintic case, we prove local well-posedness for both focusing and defocusing systems. We show that local solutions of the defocusing cubic system with initial data in $H^1_{\lu}$ can be extended for all time. Additionally, we prove that global well-posedness holds in the quintic system, focusing or defocusing, for initial data with sufficiently small $H^1_{\lu}$ norm. Finally, we use the energy-Casimir method to prove the existence and uniqueness, and nonlinear stability of a class of stationary states of the defocusing cubic and quintic nonlinear Schr\"odinger systems. 	
\end{abstract}

\section{Introduction}
\index{Introduction@\emph{Introduction}}%

In this work, we study properties of nonlinear Schr\"odinger systems on flat three-dimensional tori. Our results build on several lines of existing research: The study of nonlinear Schr\"odinger systems (NLSS) on $\R^d,$ the study of nonlinear Schr\"odinger equations (NLS) on flat tori, and the use of the energy-Casimir method to investigate certain stationary states of interacting quantum systems. 

The systems we consider may be used to model the dynamics of a system of fermions confined to a box with periodic boundary conditions. In particular, if we consider a dilute gas of fermions subject only to the pairwise interaction potential $\w,$ the one particle density operator of the system, $\g,$ solves the Landau-von Neumann equation with Hartree-type interaction,
\begin{equation*}
\begin{cases}
&i\pd_t\g = [-\D + \w\ast\r, \g] \\
&\g(t = 0)= \g_0
\end{cases}
\end{equation*}
where $\r$ is the total particle density given by $\r(t,x) = \g(t,x,x).$ This equation for $\g$ can be derived from the Schr\"odinger evolution equation for the wavefunction of the fermionic system through a combined mean-field and semiclassical limit, in which the expected particle number, $\text{Tr}\,\g,$ remains finite. See \cite{ElgErd} and \cite{FK} for details. 

If we take $\w$ to be a positive or negative delta function in the Hartree system above, and allow either two-particle or three-particle interactions, we obtain the system
\begin{equation}\label{Cauchy Sys}
\begin{cases}
&i\pd_t\g = [-\D \pm \r^{\a}, \g] \\
&\g(t = 0)= \g_0
\end{cases}
\end{equation}
The exponent $\a \in\{ 1,2\}$ indicates $(\a+1)$-body interactions, and the choice of sign on $\r^{\a}$ determines if the system is \emph{defocusing} $(+)$ or \emph{focusing} $(-).$
 
The one particle density operator $\g$ for a system of fermions is a positive, trace-class, self-adjoint operator on $L^2(\tor^3).$ Therefore, for each $t,$ its integral kernel $\g(t,x,y)$ has a spectral decomposition over $L^2(\tor^3).$ In particular, the initial data $\g_0(x,y)$  may be expressed as
\begin{equation}\label{DensOpInit}
\g(0,x,y) = \sum_{j \in \N}\l_j u_{j,0}(x)\overline{u_{j,0}}(y)
\end{equation} 
where $\{u_{j,0}\}_{j \in \N}$ is an orthonormal basis of $L^2(\tor^3),$ and $\lu:=\{\l_j \}_{j \in \N} \in \ell^1$ with $0 \leq \l_j \leq 1$ for all $j \in \N.$ 

Due to the commutator structure of \eqref{Cauchy Sys},  $\g$ and $i\D \mp i\s \r^{\a}$ form a Lax pair, hence the flow of $\g$ is isospectral, and $\{\l_j \}_{j \in \N}$ is constant in time. The evolution of $\g$ is therefore given by the evolution of the functions $\uu:=\{u_{j}\}_{j \in \N},$ and we may write
\begin{equation}\label{eq-gamma-spec-1}
\g(t,x,y) = \sum_{j=1}^{\infty} \l_j  u_{j}(t,x)\overline{u_{j}}(t,y),
\end{equation}
where the set $\{u_j\}_{j \in \N}$ remains orthonormal as long as the solution $\g$ exists. The particle density is given by $\r(t,x)\equiv\r_{\uu(t),\lu } = \g(t,x,x)$ so that in terms of the basis $\{u_{j}\}_{j \in \N},$
\begin{equation*}
\r(t,x) =  \sum_{j=1}^{\infty} \l_j  |u_{j}(t,x)|^2.
\end{equation*}
The Landau-von Neumann equations for $\g(t)$ in equation \eqref{Cauchy Sys} then have the form
\begin{eqnarray*}
	\lefteqn{
	i\partial\gamma(t,x,y)=
	\sum_{j=1}^{\infty} \l_j  ((i\partial_t u_{j})(t,x)\overline{u_{j}}(t,y)
	- u_{j}(t,x)\overline{((i\partial_t u_{j})}(t,y))
	}
	\nonumber\\
	&=&
	\sum_{j=1}^{\infty} \l_j \Big[((-\Delta+\sigma\rho^\alpha) u_{j})(t,x)\overline{u_{j}}(t,y) -  u_{j}(t,x)(\overline{(-\Delta+\sigma\rho^\alpha) u_{j}})(t,y)\Big]
\end{eqnarray*}
in the spectral decomposition \eqref{eq-gamma-spec-1}.
This is equivalent to the infinite nonlinear Schr\"odinger system (NLSS) for $\uu(t)=\{u_{j}(t)\}_{j \in \N}$, 
\begin{equation}\label{NLSS}
\begin{cases}
& i\pd_t u_j = -\D u_j+ \s\r^{\a} u_j, \qquad j \in \mathbb{N}\\
&u_j(0,x) = u_{j,0}(x), \qquad x \in \tor^3,
\end{cases}
\end{equation}
with  $\a \in\{ 1,2\}$ and $\s \in \{-1,1\}$.
The initial data $\{u_{j,0}\}$ for the NLSS and the sequence $\{\l_j \}$ are determined by the initial data $\g(0)$ in the Cauchy problem \eqref{DensOpInit}.

In this paper, we extend previous results of Markowich, Rein, and Wolanski,  \cite{MRW}, and of Abou Salem, Chen, and Vougalter, \cite{AsCV}, proving the existence and nonlinear stability of a class of stationary states of Schr\"odinger-Poisson systems via the energy-Casimir method. This approach is based on the fact that the sequence $\lu=\{\lambda_j\}$ is conserved under the NLSS flow, and uses it to construct an energy-Casimir functional $\sH_f$, labeled by a Casimir class function $f$, see Definition \ref{def-Casimir-fct-1}. $\sH_f$ then is a conserved quantity of the NLSS flow for any such $f$. The stability of stationary solutions of the NLSS is proven by use of $\sH_f$ in a similar way as Lyapunov functions are used for the corresponding problem in classical Hamiltonian dynamics. 
In particular, the stationary states arise as minimizers of energy-Casimir functionals, which are conserved quantities of the system. 
To be more precise, let $(\uu_0,\lu_0, \r_0)$ label a stationary state of the defocusing NLSS with $\sigma=1$ and $\a \in \{1,2\}$, and let $(\uu(t), \lu)$ account for another solution on the time interval $[0,T)$ with $T\leq\infty$, and initial datum $(\uu(0), \lu)$ (see Theorem \ref{StabThm} for the precise formulation), then 
\begin{equation*}
\frac{1}{\a+1} \| \r_{\uu(t), \lu} - \r_0\|^{\a+1}_{L^{\a+1}(\tor^3)} \leq 
|\sH_f(\uu(0),\lu) - \sH_f(\uu_0,\lu_0)|,  
\end{equation*}
for all $t \in [0, T).$

In both \cite{MRW} and \cite{AsCV}, classical solutions to the system were considered; hence, higher regularity was required than that controlled by the conserved energy. The nonlinear stability is obtained from a uniform in time upper bound on the squared distance (measured in some Sobolev norm) between $\r_0$ and $\r,$ where $\r_0$ is the particle density for a stationary state, and $\r$ is the particle density of another solution of the system. 

The energy-Casimir method employed in \cite{MRW} and \cite{AsCV} requires that the system is posed on a bounded spatial domain, that the flow of the system is isospectral, and that the potential function of the Hamiltonian is related to the probability density function. As the last two properties hold for the NLSS, it is natural to consider whether the NLSS possesses such stationary states. For this purpose, we pose the system on a bounded spatial domain, or more specifically, on $\tor^3$. In particular, we relax the criteria on the regularity of stationary states, using only mild solutions in the energy space, for which we establish well-posedness. 

Systems similar to \eqref{Cauchy Sys} have been previously studied on $\R^d$, and well-posedness results have been obtained under various assumptions on $\g_0.$ In particular, Hong, Kwon, and Yoon \cite{Hong} established the well-posedness theory and blow-up criteria for \eqref{Cauchy Sys} with $\a=1$ on $\R^3$ for $\g_0$ satisfying $\text{Tr}|\sqrt{-\D}\g_0\sqrt{-\D}| < \infty.$ In \cite{ChenPav}, Chen, Hong and Pavlovi\'{c} proved the global well-posedness of the defocusing system with $\a =1$ on $\R^2$ and $\R^3$ in the case $\g_0$ is not trace-class, provided it has finite operator norm and is a suitable perturbation of a reference state. 

In the present work, we will employ methods and results from the study of well-posedness for NLS on $\tor^d$. Following a series of fundamental works by Bourgain starting in 1993, \cite{bourg1}, this topic has attracted extensive research activity. Crucial advances include the development of Strichartz estimates on the torus and their extensions to irrational square tori, due to works by Bourgain \cite{bourg3}, Bourgain and Demeter \cite{BourDem}, and Guo, Oh, and Wang \cite{GOW}, among many others; Killip and Vi\c{s}an  proved the full range of Strichartz estimates on rational and irrational rectangular tori in \cite{kv}. We refer to those works for references.

Our analysis of the quintic NLSS is closely related to that of the $H^1$-critical quintic NLS on $\tor^3.$ Of specific importance for our work is the approach developed by Herr, Tataru, and Tzvetkov via $X^s$ and $Y^s$ function spaces, used in \cite{HTaTzv} to prove local and global well-posedness for the quintic NLS with small initial data in $H^1(\tor^3).$ Killip and Vi\c{s}an extended these results in \cite{kv}, proving local well-posedness of the $H^1$-critical NLS on rational and irrational rectangular tori in $3$ and $4$ dimensions for arbitrary initial data in $H^1.$ In \cite{IonPau}, Ionescu and Pausader obtained global well-posedness of the defocusing quintic NLS on the square torus for arbitrary initial data in $H^1(\tor^3).$ We show local well-posedness on $\tor^3$ for the cubic NLSS with initial data in $H^s_{\lu}$ with $s>\frac{1}{2}$. 
and for the quintic NLSS with initial data in $H^1_{\lu}.$ Furthermore, we prove that solutions to the defocusing cubic NLSS can be extended globally in time, as can solutions to the quintic NLSS with sufficiently small initial data.

We now outline our results and the organization of this paper.  In Sections \ref{sec-2} to \ref{sec-5}, we prove local well-posedness of the cubic NLSS, \eqref{Cauchy Sys} with $\a=1,$ on a flat rational or irrational 3-torus for initial data in $H^s_{\lu},$ for $s>\frac{1}{2}$. 
In Sections \ref{sec-6} to \ref{sec-9}, we prove the local well-posedness in $H^1_{\lu}$ for the quintic NLSS, \eqref{Cauchy Sys} with $\a =2$, using 
the $X^s$ and $Y^s$ spaces as in \cite{HTaTzv} and \cite{kv}.  In Sections \ref{sec-10} to \ref{sec-14}, we define a class of stationary states for the NLSS on $\tor^3$ corresponding to a Casimir function $f,$ treating both the cubic and quintic systems. Assuming their existence, we first prove the nonlinear stability of these stationary states using an energy-Casimir functional. We then use a dual formulation and tools from convex analysis to prove the existence and uniqueness of the stationary states and show that they are indeed minimizers of the energy-Casimir functional determined by $f.$
\section{Preliminaries}
\label{sec-2}
The rectangular, flat 3-torus can be realized as $\R^3/(L_1\Z \times L_2\Z \times L_3\Z)$ with $L_1, L_2, L_3 \in (0, \infty).$ The torus is \emph{irrational} if at least one of the ratios $\frac{L_i}{L_j}$ is irrational, otherwise we say it is \emph{rational.}

 For notational convenience, we use the coordinates for the standard torus $\tor^3:=\R^3 / \Z^3$ and incorporate the geometry of the torus into the Riemannian metric, using the corresponding Laplace-Beltrami operator 
\begin{equation*}
\D = \theta_1 \frac{\pd^2}{\pd x_1^2} +  \theta_2 \frac{\pd^2}{\pd x_2^2} +  \theta_3 \frac{\pd^2}{\pd x_3^2},\quad\text{where }\theta_j = L_j^{-2}.
\end{equation*}
We then define the Schr\"odinger propagator $e^{it\D}$ by
\begin{equation*}
\widehat{ e^{it\D}f }(\xi) = \exp\left(-2\pi i t Q(\xi)\right)\hat{f}(\xi)
\end{equation*}
for $\xi = (\xi_1, \xi_2, \xi_3) \in \Z^3,$ where $Q(\xi) := \theta_1\xi_1^2 + \theta_2\xi_2^2 + \theta_3 \xi_3^2.$ By making a change of variables in time, we may assume $\theta_j \in (0,1],$ for each $j\in\{1,2,3\}.$

Next, we define the Littlewood-Paley frequency projections used in Chapters 2 and 3. Let $\phi$ be smooth, radial, cutoff on $\R$ with $supp(\phi) \subset (-2,2)$ such that $\phi(x) = 1$ for $x \in [-1,1].$ For a dyadic integer $N,$ define the projections
\begin{align*}
\widehat{P_1 f}(\xi) &:= \hat{f}(\xi)\prod_{j=1}^3 \phi(\xi_j)\\
\widehat{P_{ \leq N} f}(\xi) &:= \hat{f}(\xi)\prod_{j=1}^3 \phi(\tfrac{\xi_j}{N})\\
\widehat{P_N f}(\xi) &:= \hat{f}(\xi)\prod_{j=1}^3\left( \phi(\tfrac{\xi_j}{N}) - \phi(\tfrac{2\xi_j}{N})\right).
\end{align*}
For  $\cC_N \subset \R^3$ an arbitrary cube of side length $N,$ the sharp Fourier projection onto $\cC_N$ is given by
\begin{equation*}
\widehat{P_{\cC_N} f}(\xi) = \mathbf{1}_{\cC_N}(\xi)\hat{f}(\xi).
\end{equation*}

We close this chapter with the following overview of the notational conventions we use in this work.
\begin{enumerate}[label=\textbullet]
\item We write $X \lesssim Y$ to represent $X \leq CY$ where $C$ is some constant that is permitted to depend only on the spatial dimension $d.$
\item Unless otherwise indicated, the domain of a spatial integral is understood to be $\tor^3,$ i.e.
\[\int f(x)\,dx:=\int_{\tor^3}f(x)\,dx\]
\item An underlined variable denotes a sequence in the corresponding variable, e.g. $\vu:=\{v_j\}_{j \in \N}.$
\item For any set $\cX,$ with elements that are real-valued, $\cX_+$ denotes the subset
\[\cX_+:=\{f \in \cX \big\arrowvert f \geq 0\}\]
\item We adopt the following condensed notation for frequency projections:
\begin{equation*}
f_N:=P_N f \qquad\text{and}\qquad f_{\cC_N}:=P_{\cC_N}f
\end{equation*}
\item We use the mixed space-time norms defined by
\[ \|f(t,x)\|_{L^p_t L^q_x([0,T) \times \tor^3)}:=\left(\int_0^T\left(\int_{\tor^3}|f(t,x)|^q\,dx\right)^{\frac{p}{q}}dt\right)^{\frac{1}{p}}\]
\item Given a Banach space $\cX$ and a real-valued sequence $\lu \in \ell^1_+,$ let $\cX_{\lu}$ denote the space of sequences $\uu = \{u_j\}_{j=1}^{\infty} \subset \cX$  equipped with the norm
\begin{equation*}
\| \uu \|_{\cX_{\lu}}:=\left(\sum_{j=1}^{\infty} \l_j \| u_j \|^2_{\cX}\right)^{\frac{1}{2} }.
\end{equation*}
\end{enumerate}

\section{Well-posedness of the Cubic NLS on $\tor^3$}
\index{Well-posedness of the Cubic NLSS on $\tor^3$%
@\emph{Well-posedness of the Cubic NLSS on $\tor^3$}}%

\label{sec-2}

The cubic nonlinear Schr\"odinger system is given by
\begin{equation}\label{CNLSS}
\begin{cases}
& i\pd_t u_j =-\D u_j+ \s\r u_j, \qquad j \in \mathbb{N}\\
&u_j(0,x) = u_{j,0}(x), \qquad x \in \tor^3,
\end{cases}
\end{equation}
where $\s \in \{-1,1\},$ $\lu \in \ell^1_+$, and $\r(t,x) = \sum \l_j |u_j(t,x)|^2.$ The mass and energy, 
\begin{align}\label{CubCons}
M_{\lu}(\uu)&:=\sum_k \l_k \|u_k\|_{L^2(\tor^3)}^2 = \|\uu\|_{L^2_{\lu}}^2 \\
E_{\lu}(\uu) &:=\frac{1}{2}\sum_k \l_k \|\n u_k\|_{L^2(\tor^3)}^2 +\s \frac{1}{4}\int_{\tor^3} \r^2\,dx 
\end{align}
are conserved quantities along solutions of the system. This chapter is dedicated to the proof of the following theorem:
\begin{theorem}[Local and global well-posedness of the cubic NLSS]\label{CubMainThm}
Let  $\lu \in \ell^1_+,$ and suppose $\uu_0 \in H_{\lu}^s(\tor^3)$ for $s>\frac{1}{2}$. 
There exists a time $T$ depending on $\|\uu_0\|_{H_{\lu}^s(\tor^3)}$ such that the system \eqref{CNLSS} is locally well-posed for $t \in [0,T)$. Moreover, if $\uu_0 \in H_{\lu}^1(\tor^3),$ the solution to the defocusing system is global in time.
\end{theorem}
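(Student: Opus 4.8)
The plan is to set up a contraction mapping argument in an appropriate function space adapted to the sequence $\lu$. First I would rewrite the system in Duhamel (integral) form, so that a solution $\uu$ of \eqref{CNLSS} on $[0,T)$ is a fixed point of the map
\begin{equation*}
\Phi(\uu)_j(t) := e^{it\D}u_{j,0} - i\s\int_0^t e^{i(t-t')\D}\bigl(\r(t')u_j(t')\bigr)\,dt', \qquad \r = \sum_k \l_k |u_k|^2.
\end{equation*}
I would work in the Banach space $\cX_{\lu}$ built from a Strichartz-type space $\cX$ on $[0,T)\times\tor^3$ (for $\tfrac12 < s < \tfrac32$ the natural choice is an $X^s$ or $L^p_t W^{s,q}_x$ space controlling $H^s$ regularity), using the weighted norm $\|\uu\|_{\cX_{\lu}} = (\sum_j \l_j\|u_j\|_{\cX}^2)^{1/2}$ defined in the Preliminaries. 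The fixed point then yields the solution.

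The two ingredients the contraction needs are the linear homogeneous estimate for $e^{it\D}u_{j,0}$ and the inhomogeneous (Duhamel) estimate for the nonlinear term, both summed against the weights $\l_j$. The homogeneous piece follows immediately from the per-mode Strichartz estimates on rational and irrational rectangular tori from \cite{kv}, squared and summed in $j$: $\|e^{it\D}\uu_0\|_{\cX_{\lu}} \lesssim \|\uu_0\|_{H^s_{\lu}}$. The heart of the matter is the nonlinear estimate, which I expect to be the main obstacle. I would establish a multilinear bound of the form $\|\r\,u_j\|_{\cX'} \lesssim \|\r\|_{\cdot}\,\|u_j\|_{\cX}$ and then, crucially, control $\r = \sum_k \l_k|u_k|^2$ by $\|\uu\|^2_{\cX_{\lu}}$. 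This last step is where the $\ell^1$ structure of $\lu$ is essential: since $\rho$ is a $\lu$-weighted sum of the $|u_k|^2$, I would use the triangle inequality together with $\sum_k \l_k \le \|\lu\|_{\ell^1}$ and Cauchy--Schwarz in $k$ to convert the weighted sum of quadratic quantities into $\|\uu\|^2_{\cX_{\lu}}$. Collecting these, the Duhamel map obeys
\begin{equation*}
\|\Phi(\uu)\|_{\cX_{\lu}} \lesssim \|\uu_0\|_{H^s_{\lu}} + T^{\theta}\|\uu\|_{\cX_{\lu}}^3
\end{equation*}
for some $\theta > 0$ coming from Hölder in time, with an analogous difference estimate $\|\Phi(\uu)-\Phi(\vu)\|_{\cX_{\lu}} \lesssim T^\theta(\|\uu\|^2_{\cX_{\lu}}+\|\vu\|^2_{\cX_{\lu}})\|\uu-\vu\|_{\cX_{\lu}}$. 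Choosing $T$ small depending on $\|\uu_0\|_{H^s_{\lu}}$ makes $\Phi$ a contraction on a ball, giving local existence, uniqueness, and continuous dependence.

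**The plan for global extension.**

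For the second assertion I would work at $s=1$ and use the conserved quantities \eqref{CubCons}. The strategy is the standard energy-method blow-up alternative: the local theory gives a maximal time of existence $T^*$ such that either $T^* = \infty$ or $\|\uu(t)\|_{H^1_{\lu}} \to \infty$ as $t \to T^*$. To rule out the latter in the defocusing case $\s = +1$, I would show the $H^1_{\lu}$ norm stays bounded on bounded time intervals. Mass conservation controls $\|\uu\|^2_{L^2_{\lu}} = M_{\lu}(\uu_0)$ directly. For the gradient part, defocusing ($\s=+1$) makes the potential term $\tfrac14\int\r^2\,dx$ nonnegative, so
\begin{equation*}
\tfrac12\sum_k \l_k\|\n u_k\|_{L^2}^2 \le E_{\lu}(\uu_0),
\end{equation*}
which bounds $\sum_k \l_k\|\n u_k\|_{L^2}^2$ uniformly in time. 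Combining this with the conserved mass yields a time-independent bound on $\|\uu(t)\|_{H^1_{\lu}}^2 = \sum_k \l_k(\|u_k\|_{L^2}^2 + \|\n u_k\|_{L^2}^2)$, contradicting blow-up and forcing $T^* = \infty$. The only subtlety worth flagging is that the conservation laws \eqref{CubCons} must be justified for the mild (not classical) solutions produced by the contraction; I would handle this by the usual density/approximation argument, verifying conservation for smooth data and passing to the limit using the continuous dependence from the local theory.
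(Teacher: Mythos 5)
Your overall architecture coincides with the paper's: rewrite \eqref{CNLSS} in Duhamel form, run a contraction in a weighted space of the form $\cX_{\lu}$ (the paper takes $\cX = X^{s,b}_T$, the restriction Bourgain space), use the weighted-norm structure to absorb the sum $\r = \sum_k \l_k |u_k|^2$ into $\|\uu\|^2_{\cX_{\lu}}$, and, for the global claim, combine conservation of mass and of the defocusing energy to get a time-independent bound on $\|\uu(t)\|_{H^1_{\lu}}$ so that the local argument iterates with a uniform time step. Your global-in-time argument is essentially identical to the paper's, and your observation that the conservation laws must be justified for the mild solutions produced by the contraction is a legitimate point that the paper leaves implicit.

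The genuine gap is in the local theory: the entire proof hinges on the multilinear estimate that you only assert you ``would establish,'' and on $\tor^3$ that estimate is not a routine H\"older-plus-Strichartz computation --- it is the main technical content of the paper. Strichartz estimates on rational and irrational rectangular tori (Theorem \ref{Strich1}) hold only for $p > \frac{10}{3}$ and carry the derivative loss $N^{\frac32 - \frac5p}$, so the $L^p_t W^{s,q}_x$ route you float as an alternative does not close for the cubic nonlinearity in the range $\frac12 < s < \frac32$; this is precisely why the paper works in $X^{s,b}$ spaces. There, the needed bound is the trilinear estimate of Lemma \ref{ThreeFuncs}, $\|\uf\,\ov{\us}\,\ut\|_{X^{s,-b'}} \lesssim \prod_{j=1}^3 \|\uj\|_{X^{s,b}}$ with $\frac14 < b' < \frac12 < b$ and $b + b' < 1$, and its proof requires a chain of nontrivial steps: a bilinear Strichartz estimate for frequency-localized free solutions (Proposition \ref{BSE}, proved by tiling the lower frequency scale into cubes and using almost orthogonality), a transfer principle converting it into a bilinear $X^{0,b}$ bound, an interpolation argument lowering the exponent to some $b' < \frac12$ (Proposition \ref{InterpProp}), and finally a dyadic decomposition of all four factors in a dual pairing with a case analysis according to which two frequencies are comparable, summed via Cauchy--Schwarz. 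The exponent arithmetic matters for your claimed factor $T^{\theta}$ as well: in the Bourgain-space framework there is no direct ``H\"older in time'' step; the smallness comes from Ginibre's inhomogeneous estimate (Proposition \ref{DuhamBds}), which yields $T^{1-b-b'}$ and is only useful because the interpolation step pushes $b'$ strictly below $\frac12$ while keeping $b + b' < 1$. Without this chain of lemmas, your contraction has no small parameter and the fixed-point argument does not close, so the local well-posedness part of your proposal remains a plan rather than a proof.
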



Our goal is to use the contraction mapping principle to show that the Duhamel formula corresponding to \eqref{CNLSS} has a fixed point. In order to bound the terms of the Duhamel formula in the desired function space, we will decompose factors of the nonlinear function $|u_k|^2u_j$ frequency cubes, apply the appropriate Strichartz estimates on each frequency cube, and find an upper bound for the sum over all such decompositions. Thus, the primary tools we use are the following Strichartz estimates on $\tor^d,$ due to Killip and Vi\c{s}an:
\begin{theorem}{\cite{kv}}\label{Strich1}
For $d\geq 1$, $\theta_1,...\theta_d \in (0,1],$ $1\leq N \in 2^{\Z},$ and $p > \frac{2(d+2)}{d}.$ Then,
\begin{equation}\label{KVStrich}
\| e^{it\D}P_{\leq N} f\|_{L^p_{t,x}([0,1]\times \tor^d)} \lesssim N^{\frac{d}{2} - \frac{d+2}{p}} \| f \|_{L^2(\tor^d)}
\end{equation}
where $\D:=\theta_1 \pd_{x_1}^2 + ... + \theta_d \pd_{x_d}^2$
\end{theorem}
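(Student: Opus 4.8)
The plan is to recognize the left-hand side as an exponential sum whose spacetime frequencies lie on the elliptic paraboloid $\{(\xi,Q(\xi)):\xi\in\Z^d,\ |\xi_j|\lesssim N\}$, and to derive the bound from the Bourgain--Demeter $\ell^2$ decoupling theorem for the paraboloid in $\R^{d+1}$. Writing
\[ e^{it\D}P_{\leq N}f(t,x)=\sum_{\xi\in\Z^d,\,|\xi_j|\lesssim N}\hat f(\xi)\,e^{2\pi i(x\cdot\xi-tQ(\xi))}, \]
Plancherel gives $\|f\|_{L^2}^2=\sum_\xi|\hat f(\xi)|^2$, so the target is a bound on this sum in $L^p_{t,x}$ in terms of $\|\hat f\|_{\ell^2}$. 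The decoupling critical exponent for a nondegenerate hypersurface in $\R^n$ is $\tfrac{2(n+1)}{n-1}$; with $n=d+1$ this is exactly $p_c:=\tfrac{2(d+2)}{d}$, the threshold appearing in the statement. Since $Q(\xi)=\sum_j\theta_j\xi_j^2$ with each $\theta_j\in(0,1]$, the surface is a genuine (if tilted) paraboloid, and because decoupling is invariant under invertible linear maps of the frequency variable, its constant is independent of $(\theta_j)$ --- this is precisely what will make the estimate uniform across rational and irrational rectangular tori.

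First I would parabolically rescale. Setting $\xi=N\omega$, $x=y/N$, $t=s/N^2$ and using the homogeneity $Q(N\omega)=N^2Q(\omega)$, the phase becomes $y\cdot\omega-sQ(\omega)$ with $\omega$ running over an $N^{-1}$-separated subset of a fixed box and lying on the fixed paraboloid $(\omega,Q(\omega))$. Tracking the Jacobian $dt\,dx=N^{-(d+2)}ds\,dy$ and invoking spatial periodicity (the fundamental domain tiles a spacetime ball $B_{N^2}\subset\R^{d+1}$ roughly $N^d$ times) converts the torus estimate into an estimate for an extension-type sum on $B_{N^2}$. The delicate point is that for irrational $\theta_j$ there is no exact periodicity in $s$; I would handle this by working with Schwartz weights adapted to $B_{N^2}$ rather than sharp cutoffs, matching the length of the time interval to the ball radius $R=N^2=\delta^{-1}$.

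Next I would apply decoupling into single-frequency caps. Partitioning the box into caps of side $N^{-1}$ --- the scale $\delta^{1/2}$ with $\delta=N^{-2}$ --- each cap contains one lattice point, so the cap pieces $G_\tau$ have constant modulus $|\hat f(N\omega_\tau)|$ and their individual $L^{p_c}(B_{N^2})$ norms are just $|\hat f(N\omega_\tau)|$ times a volume factor. Bourgain--Demeter at $p=p_c$ then yields
\[ \Big\|\sum_\tau G_\tau\Big\|_{L^{p_c}(B_{N^2})}\lesssim_\epsilon N^{\epsilon}\Big(\sum_\tau\|G_\tau\|_{L^{p_c}(B_{N^2})}^2\Big)^{1/2}\approx N^{\epsilon}\,|B_{N^2}|^{1/p_c}\,\|\hat f\|_{\ell^2}, \]
and after unwinding the rescaling the volume factors cancel against the Jacobian, leaving the critical estimate $\|e^{it\D}P_{\leq N}f\|_{L^{p_c}([0,1]\times\tor^d)}\lesssim_\epsilon N^{\epsilon}\|f\|_{L^2}$, i.e. the exponent $\tfrac d2-\tfrac{d+2}{p_c}=0$ up to an $\epsilon$-loss.

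Finally I would pass to general $p$ and remove the epsilon, which I expect to be the main obstacle. The trivial bound $\|e^{it\D}P_{\leq N}f\|_{L^\infty_{t,x}}\le\sum_\xi|\hat f(\xi)|\lesssim N^{d/2}\|f\|_{L^2}$ is $\epsilon$-free and realizes the exponent $\tfrac d2$ at $p=\infty$; since $\alpha(p)=\tfrac d2-\tfrac{d+2}{p}$ is affine in $1/p$ and passes through $(1/p_c,0)$ and $(0,d/2)$, interpolation immediately gives the sharp power for every $p>p_c$ --- but with a residual $N^{\epsilon}$ carried over from the critical endpoint. The decoupling constant is genuinely $N^\epsilon$ at $p_c$ and cannot be improved there, so the clean bound must come from exploiting the strict supercriticality $p>p_c$. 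The plan, following Killip--Vi\c{s}an, is to run the decoupling-and-rescaling as a self-improving induction on dyadic frequency scales in which, when $p>p_c$, the per-scale single-cap $L^p$ gain strictly beats the cap-counting loss; the resulting strictly negative net exponent makes the sum over the $\sim\log N$ scales a convergent geometric series of size $O(1)$, thereby absorbing the $N^\epsilon$ and producing the scale-invariant bound $N^{\frac d2-\frac{d+2}{p}}$ as stated.
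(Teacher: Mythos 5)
The paper offers no proof of this statement: Theorem \ref{Strich1} is imported verbatim from Killip--Vi\c{s}an \cite{kv}, so there is no internal argument to compare against, and your proposal should be judged as a reconstruction of the cited proof. In architecture it is faithful to that source: the identification of $e^{it\D}P_{\leq N}f$ as an exponential sum with spacetime frequencies on the paraboloid $(\xi,Q(\xi))$, the parabolic rescaling plus spatial periodicity converting the estimate on $[0,1]\times\tor^d$ into one on a ball of radius $N^2$ tiled by $\sim N^d$ fundamental domains (with Schwartz weights in time, correctly flagged as necessary since irrational tori have no time-periodicity), single-lattice-point caps at scale $N^{-1}$, Bourgain--Demeter at the critical exponent $p_c=\tfrac{2(d+2)}{d}$, and the recognition that interpolation with the trivial $L^\infty$ bound recovers the sharp exponent for all $p>p_c$ only up to an $N^\epsilon$ which must then be removed by exploiting strict supercriticality. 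This is exactly the division of labor in \cite{kv}, and your numerology (critical exponent, tiling count, cancellation of Jacobian against volume factors) checks out.

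Two caveats. First, your claim that the decoupling constant is independent of $(\theta_j)$ ``because decoupling is invariant under invertible linear maps of the frequency variable'' is too quick: such a change of variables also transforms the lattice $\Z^d$ and the separation of the frequency points, and the estimate cannot be uniform as $\min_j\theta_j\to 0$ --- in that limit the propagator degenerates toward the identity in the corresponding direction, and the Dirichlet-kernel example already violates the exponent $\tfrac d2-\tfrac{d+2}{p}$. What is true, and what \cite{kv} uses, is uniformity for $\theta_j$ in a fixed compact subinterval of $(0,\infty)$; for a fixed torus one normalizes by rescaling time, and the constant is then allowed to depend on the ratios $\theta_j/\theta_k$, i.e.\ on the torus. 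Second, the $\epsilon$-removal paragraph is a pointer rather than a proof, and it is precisely the step carrying the real content of \cite{kv} beyond Bourgain--Demeter. You correctly observe that soft manipulations cannot help (any level-set or interpolation argument applied to the lossy bound reproduces the loss, since the top level set $\lambda\sim N^{d/2}\|f\|_{L^2}$ dominates), but the slogan ``per-scale single-cap gain beats cap-counting loss, sum the geometric series'' does not yet close: the naive recursion one gets by decoupling $[-N,N]^d$ into cubes of side $M$ and applying Galilean invariance on each cube has the form $C(N)\lesssim_\epsilon (N/M)^{\epsilon}\,C(M)$, which reintroduces an epsilon at every scale and, iterated, still yields an unbounded constant. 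Making the supercritical gain quantitative enough to defeat this is the delicate part of the Killip--Vi\c{s}an argument, and as written your proposal defers to it rather than reproducing it; for the purposes of this paper that is acceptable (the paper itself defers entirely), but the step should be marked as cited, not proved.
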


As we only consider problems posed on $\tor^3,$ we note that the above inequality with $d =3$ reads
\begin{equation*}
\| e^{it\D}P_{\leq N} f\|_{L^p_{t,x}([0,1]\times \tor^3)} \lesssim N^{\frac{3}{2} - \frac{5}{p}} \| f \|_{L^2(\tor^3)}
\end{equation*}
for $p > \frac{10}{3}.$

\begin{rem}
Due to the invariance of $e^{it\D}f(x)$ under Galileian transformations,  if $\cC_N$ is a cube of side length $N$ in $\R^3$ and $p > \frac{10}{3},$ we have 
\begin{equation*}
\|e^{it\D}P_{\cC_N} f\|_{L^p_{t,x}([0,1]\times \tor^3)} \lesssim N^{\frac{3}{2} - \frac{5}{p}} \| f \|_{L^2(\tor^3)}
\end{equation*}
\end{rem}


The Bourgain space $X^{s,b}:=X^{s,b}(\R \times \tor^3)$ is the completion of $C^{\infty}\big(\R; H^s(\tor^3)\big)$ under the norm
\begin{align*}
\|u\|_{X^{s,b}} :&= \|e^{-it\D}u(t,x)\|_{H^b_t(\R;H^s_x(\tor^3))}\\ &=\Big( \sum_{\xi \in \Z^3 }\int _\R d\tau \la \tau + Q(\xi)\ra^{2b}\la \xi \ra^{2s}|\widehat{u}(\tau,\xi)|^2\,d\tau\Big)^{\frac{1}{2}},
\end{align*}
where $Q(\xi) := \theta_1\xi_1^2 + \theta_2\xi_2^2+\theta_3\xi_3^2.$ For $0 < T \leq 1,$ define the restriction space $X^{s,b}_T := X^{s,b}([0,T] \times \tor^3)$ with the norm 
\begin{equation*}
\|u \|_{X^{s,b}_T} = \inf_{w \in X^{s,b}}\left\{ \|w\|_{X^{s,b}},\text{ with } w\arrowvert_{[0,T]} = u \right\}
\end{equation*}

\begin{rem}\label{EmbRem}
We will make use of the following embedding properties of the $X^{s,b}$ spaces:
\begin{enumerate}[label = (\arabic*), labelindent=\parindent]
\item For $s_1 \leq s_2$ and $b_1 \leq b_2,$ $X^{s_2,b_2} \hookrightarrow X^{s_1, b_1}.$
\item For $b > \frac{1}{2},$ $X^{0,b} \hookrightarrow C_t L^2_x.$
\item $X^{0,\frac{1}{4}} \hookrightarrow  L^4_tL^2_x.$
\end{enumerate}
Property (1) is a direct consequence of the definition of the $X^{s,b}$ norm and monotonicity. Property (2) follows from the observation that $\la \tau + Q(\xi) \ra^{-b} \in L^2_\tau(\R)$ for $b >\frac{1}{2}.$ Property (3) can be shown by the Sobolev embedding $H^{\frac{1}{4}}(\R) \hookrightarrow L^4(\R)$ applied to the $ L^4_tL^2_x$ norm of $e^{it\D}U(t,x)$ for $U(t,x) = e^{-it\D}u(x,t).$
\end{rem}


\section{Nonlinear Estimates for the Cubic NLSS}

The following proposition, due to Ginibre, gives an upper bound for $X^{s,b}_T$ norm of the nonlinear term of the Duhamel formula, thus it motivates the development of the nonlinear estimates in this section. We refer the interested reader to \cite{BGTzv} and \cite{Gin} for the proof of the proposition.
\begin{prop}\label{DuhamBds}
Suppose $0 < T \leq 1.$ For $(b,b') \in \R^2$ satisfying $0 < b' < \frac{1}{2} < b$ and $b + b' < 1,$
\begin{equation*}
\left\Vert \int_0^t e^{i\D(t - t')}F(t')\,dt\right\Vert_{X^{s,b}_T} \lesssim T^{1 - b - b'}\|F \|_{X^{s,-b'}_T}
\end{equation*}
\end{prop}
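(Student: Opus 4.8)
The plan is to reduce the inequality, via the interaction picture, to a one-dimensional statement about the time-antiderivative operator, and then prove that scalar statement by decomposing the Duhamel kernel in temporal frequency. First, because the spatial multiplier $\la\xi\ra^s$ commutes with both $e^{it\D}$ and time integration, it suffices to take $s=0$. Writing $g:=e^{-it\D}F$, the Duhamel term equals $e^{it\D}\mathcal{I}g$, where $\mathcal{I}g(t):=\int_0^t g(t')\,dt'$, and by the definition of the $X^{s,b}$ norm we have $\|e^{it\D}\mathcal{I}g\|_{X^{0,b}}=\|\mathcal{I}g\|_{H^b_t L^2_x}$ and $\|F\|_{X^{0,-b'}}=\|g\|_{H^{-b'}_t L^2_x}$. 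Inserting a smooth temporal cutoff $\psi_T(t)=\psi(t/T)$ with $\psi\equiv 1$ on $[-1,1]$ to pass to the restriction norm on $[0,T]$ (handled by the standard extension-and-cutoff argument, since $\psi_T\equiv1$ on $[0,T]$ for $T\le 1$), the whole proposition reduces to the scalar-in-time estimate
\begin{equation*}
\big\|\psi_T(t)\,\mathcal{I}g\big\|_{H^b_t L^2_x}\lesssim T^{1-b-b'}\|g\|_{H^{-b'}_t L^2_x},
\end{equation*}
the spatial variable riding along as an inert $L^2_x$ parameter (the kernel below is scalar in $t$, so Minkowski's inequality carries $|\cdot|$ to $\|\cdot\|_{L^2_x}$ throughout).

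The second step is the kernel representation $\mathcal{I}g(t)=\frac{1}{2\pi}\int_{\R}\frac{e^{it\tau}-1}{i\tau}\widehat{g}(\tau)\,d\tau$, and I would split the $\tau$-integral into the low band $|\tau|\le 1/T$ and the high band $|\tau|>1/T$. On the low band $|t\tau|\lesssim 1$ throughout $\mathrm{supp}\,\psi_T$, so I expand $\frac{e^{it\tau}-1}{i\tau}=\sum_{m\ge 0}\frac{i^m}{(m+1)!}\tau^m\,t^{m+1}$; this factorises the contribution into $\big(t^{m+1}\psi_T(t)\big)$ times the constant $c_m=\frac{1}{2\pi}\int_{|\tau|\le 1/T}\tau^m\widehat{g}(\tau)\,d\tau$. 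Rescaling $t\mapsto t/T$ gives $\|t^{m+1}\psi_T\|_{H^b}\lesssim T^{m+\frac{3}{2}-b}$, while Cauchy--Schwarz gives $|c_m|\lesssim T^{-m-\frac12-b'}\|g\|_{H^{-b'}}$; the two powers combine to exactly $T^{1-b-b'}$ for every $m$, with the constants $\|t^{m+1}\psi_T\|/\|\cdot\|$ summable against $1/(m+1)!$.

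On the high band I split $\frac{e^{it\tau}-1}{i\tau}=\frac{e^{it\tau}}{i\tau}-\frac{1}{i\tau}$. The $-\frac{1}{i\tau}$ term produces a constant multiple of $\psi_T$, and pairing $\|\psi_T\|_{H^b}\lesssim T^{\frac12-b}$ with $\big|\int_{|\tau|>1/T}\frac{\widehat{g}(\tau)}{i\tau}\,d\tau\big|\lesssim T^{\frac12-b'}\|g\|_{H^{-b'}}$ again yields $T^{1-b-b'}$; the convergence of $\int_{|\tau|>1/T}\la\tau\ra^{2b'}\tau^{-2}\,d\tau$ is exactly where the hypothesis $b'<\tfrac12$ is used. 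I expect the oscillatory term $\psi_T(t)\,h(t)$, with $\widehat{h}=\mathbf{1}_{|\tau|>1/T}\,\widehat{g}/(i\tau)$, to be the main obstacle: it is the only piece that does not separate variables, and one must take an $H^b$ norm of a product with $b\in(\tfrac12,1)$. I would control it by a fractional Leibniz (Kato--Ponce) estimate $\|\psi_T h\|_{H^b}\lesssim \|\psi_T\|_{L^\infty}\|h\|_{\dot H^b}+\|\psi_T\|_{\dot H^b}\|h\|_{L^\infty}+\|\psi_T h\|_{L^2}$, the decisive gain being the tail bound $\|h\|_{\dot H^b}^2=\int_{|\tau|>1/T}|\tau|^{2b-2}|\widehat{g}|^2\,d\tau\lesssim T^{2-2b-2b'}\|g\|_{H^{-b'}}^2$, which is precisely where $b+b'<1$ enters (it makes the exponent $2(b+b')-2$ negative, so the tail is governed by its endpoint $|\tau|=1/T$). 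Bookkeeping the remaining factors as above gives $T^{1-b-b'}$ for each piece. The crux of the whole argument is tracking the powers of $T$ through this Leibniz rule while simultaneously respecting both constraints $b'<\tfrac12$ and $b+b'<1$, which are exactly the two places convergence and the positive $T$-gain are secured.
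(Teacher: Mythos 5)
The paper itself does not prove this proposition; it defers to Ginibre \cite{Gin} and Burq--G\'erard--Tzvetkov \cite{BGTzv}, and your argument is correct and is essentially the classical proof found in those references: the interaction-picture reduction to the one-dimensional antiderivative operator, the kernel $\frac{e^{it\tau}-1}{i\tau}$ split at $|\tau|=1/T$, Taylor expansion on the low band, the constant-plus-oscillatory splitting on the high band, with the powers of $T$ tracked correctly and the hypotheses $b'<\tfrac12$ and $b+b'<1$ invoked in exactly the right places. The only cosmetic difference is your use of a Kato--Ponce inequality to handle the product $\psi_T h$; the cited proofs obtain the same bound by an elementary convolution/Leibniz estimate exploiting that $\widehat{h}$ is supported in $\{|\tau|>1/T\}$ (so that $\|h\|_{H^b}\sim\|h\|_{\dot H^b}$ and no spurious factor of $T^{-b}$ is incurred), which is the same gain your $\dot H^b$ tail bound provides.
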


The next lemma is the crucial nonlinear estimate for local well-posedness of the cubic NLSS in $H^s(\tor^3)$ for $s>\frac{1}{2}$, 
which we will prove in this section.
\begin{lemma}\label{ThreeFuncs}
Let $s>\frac{1}{2}$. 
 There exists $C > 0$ and $(b, b') \in \R^2$ with $\frac{1}{4} < b'<\frac{1}{2} < b$ satisfying $b + b' < 1,$ such that for every triple $(\uf, \us, \ut)$ with $\uj \in X^{s,b}(\R \times \tor^3)$ for $j = 1, 2, 3,$
\begin{equation}
\|\uf \ov{\us} \ut \|_{X^{s,-b'}(\R \times \tor^3)} \leq C \prod_{j=1}^{3}\|u^{(j)}\|_{X^{s,b}(\R \times \tor^3)}
\end{equation}
\end{lemma}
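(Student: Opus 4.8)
The plan is to prove the trilinear estimate
\[
\|\uf \ov{\us} \ut \|_{X^{s,-b'}} \lesssim \prod_{j=1}^3 \|\uj\|_{X^{s,b}}
\]
by a Littlewood--Paley decomposition of each factor and a further decomposition of each dyadic piece into frequency cubes, as foreshadowed in the paragraph following the statement of Theorem~\ref{CubMainThm}. First I would dualize: it suffices to bound $\left|\int_{\R\times\tor^3} \uf\,\ov{\us}\,\ut\,\ov{v}\right|$ uniformly over $v$ with $\|v\|_{X^{0,b'}}\le 1$, since the $X^{s,-b'}$ norm is dual to $X^{-s,b'}$ and the weight $\la\xi\ra^s$ can be moved onto the output frequency. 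Decomposing each of the four functions into Littlewood--Paley pieces $\unj = \Pnj \uj$ and $v_{N_0}=\Pno v$, I would reduce to estimating the frequency-localized quantity
\[
\sum_{N_0,N_1,N_2,N_3} \la N_0 \ra^{s} \left|\int u^{(1)}_{N_1}\,\ov{u^{(2)}_{N_2}}\,u^{(3)}_{N_3}\,\ov{v_{N_0}}\right|.
\]

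The key structural point is the frequency constraint: the spatial integral vanishes unless $N_0 \lesssim \max(N_1,N_2,N_3)$, so the highest output frequency is controlled by the largest input frequency. By symmetry among the inputs I would assume an ordering such as $N_1 \ge N_2 \ge N_3$, so that $N_0 \lesssim N_1$. The next step is to control the inner integral via a bilinear pairing. The natural approach is a $L^4_{t,x}\cdot L^4_{t,x}\cdot L^\infty_{t,x}\cdot L^4_{t,x}$ type Hölder split, but since we want to exploit the Strichartz gain I would instead tile the two highest-frequency factors by cubes $\cC_{N_2}$ (or $\cC_{N_3}$) of the smaller side length and estimate each pairing $\|u_{\cC}^{(1)} \ov{u_{\cC'}^{(2)}}\|_{L^2_{t,x}}$ using a transference principle that converts the $X^{s,b}$ norm ($b>\tfrac12$) into the free-evolution Strichartz estimate from Theorem~\ref{Strich1}, together with the embedding $X^{0,1/4}\hookrightarrow L^4_tL^2_x$ from Remark~\ref{EmbRem}(3). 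The cube decomposition is what lets the Galilean-invariant Strichartz estimate in the Remark apply with the \emph{smaller} scale $N$, producing favorable powers of $N_0/N_1$, etc.

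Concretely, after applying Theorem~\ref{Strich1} at the critical-type exponent on each cube and summing the almost-orthogonal cube contributions via Cauchy--Schwarz (using that cubes of a fixed scale partition frequency space), I expect each dyadic block to be bounded by
\[
\la N_0\ra^s\, N_{\min}^{\theta}\,(N_1 N_2 N_3)^{-s}\,\prod_{j=1}^3 \|\unj\|_{X^{s,b}}\,\|v_{N_0}\|_{X^{0,b'}}
\]
for some positive power $\theta$ coming from the Strichartz gain, where $N_{\min}$ is the lowest input frequency. The final step is the dyadic summation: using $N_0\lesssim N_1$ and the ordering, the net power of the top frequency $N_1$ must be nonpositive and the powers of $N_2,N_3$ summable, which is exactly where the constraint $\tfrac12 < s < \tfrac32$ enters — the lower bound $s>\tfrac12$ guarantees enough decay to sum the off-diagonal frequencies, while $s<\tfrac32$ keeps the Strichartz exponent in the admissible range $p>\tfrac{10}{3}$ with a small spare $\varepsilon$ of regularity to absorb.

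The main obstacle I anticipate is the bookkeeping in the high--high--low frequency interactions and choosing the exponents $(b,b',p)$ simultaneously: one needs $b>\tfrac12$ so the transference principle from Remark~\ref{EmbRem}(2) applies, $b'>\tfrac14$ so that Remark~\ref{EmbRem}(3) yields the $L^4_tL^2_x$ control of the dual function, and $b+b'<1$ so that the Duhamel bound of Proposition~\ref{DuhamBds} remains available downstream — all while extracting a strictly positive Strichartz power $\theta$ from $p>\tfrac{10}{3}$. Balancing these constraints against the requirement that the triple dyadic sum converge for the full range $\tfrac12<s<\tfrac32$ is the delicate part; the geometric Strichartz estimates themselves are essentially black-boxed from Theorem~\ref{Strich1} and its Galilean-invariant corollary.
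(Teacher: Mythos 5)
Your overall strategy --- dualizing, Littlewood--Paley decomposition, bilinear $L^2_{t,x}$ estimates obtained by tiling into cubes of the smaller frequency scale, transference plus the embedding $X^{0,\frac14}\hookrightarrow L^4_tL^2_x$ to bring the bilinear bound down to exponents $b'<\tfrac12$, and Cauchy--Schwarz for the dyadic sums --- is essentially the paper's. But there is a genuine gap in your frequency bookkeeping, and it breaks the final summation. You invoke only the constraint $N_0\lesssim\max(N_1,N_2,N_3)$. The convolution constraint $\xi_0=\xi_1-\xi_2+\xi_3$ forces something strictly stronger, which the paper states and uses: the \emph{two largest of all four} frequencies $N_0,N_1,N_2,N_3$ must be comparable. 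This is not a technical refinement; it is what closes the argument. In your scheme, after summing over $N_0\lesssim N_1$ (which by Cauchy--Schwarz costs a factor $N_1^{s}$ against $\|v\|_{X^{0,b'}}$), the $s$-weights of the output and of the top input cancel exactly, and the Strichartz gain $N_{\min}^{\theta}$ together with the remaining weights only produces decay in $N_2,N_3$. You are then left with $\sum_{N_1}\|u^{(1)}_{N_1}\|_{X^{s,b}}$, an $\ell^1$ sum over dyadic blocks, which is \emph{not} controlled by $\|u^{(1)}\|_{X^{s,b}}$, an $\ell^2$ quantity (take, e.g., $\|u^{(1)}_{2^n}\|_{X^{s,b}}=n^{-3/4}$). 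Your stated hope that ``the net power of the top frequency must be nonpositive'' is insufficient: in the worst configuration that power is exactly zero, and a zero power cannot close an $\ell^1$ sum.

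The repair is the paper's case split. Since the two highest frequencies are comparable, up to symmetry either $N_0\sim N_1\geq N_2\geq N_3$ or $N_0\leq N_1\sim N_2\geq N_3$. In each case one applies the bilinear estimate of Proposition \ref{InterpProp} to two pairings, sums first over the genuinely low frequencies (where the exponent $s'-s<0$ gives geometric decay), and then closes the sum over the two comparable high frequencies by Cauchy--Schwarz \emph{across the diagonal}, pairing the two square-summable dyadic norms at the same scale, e.g. $\sum_{N_0\sim N_1}\|v_{N_0}\|_{X^{-s,b'}}\|u^{(1)}_{N_1}\|_{X^{s,b'}}\leq\|v\|_{X^{-s,b'}}\|u^{(1)}\|_{X^{s,b'}}$, and similarly for $N_1\sim N_2$. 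Once this case split and diagonal summation are in place, your choice of pairing the two highest inputs against each other also works (it yields the same gains $N_2^{s'}N_3^{s'}$, resp. comparable ones, as the paper's high-with-low pairings), so the pairing is not the issue; the missing two-highest-comparable constraint is. A minor further correction: the restriction $s<\tfrac32$ does not come from Strichartz admissibility --- the exponent $p=4>\tfrac{10}{3}$ is fixed throughout --- but from the interpolation producing the bilinear estimate with gain $\min(N_1,N_2)^{s'}$, $\tfrac12<s'<\tfrac32$, at some $b'$ strictly below $\tfrac12$, which is the content of Proposition \ref{InterpProp} and is what makes the exponents compatible with Proposition \ref{DuhamBds}.
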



We begin by establishing bilinear Strichartz estimates for frequency-localized functions on $\tor^3,$ then derive bilinear estimates in the $X^{s,b}$-spaces. We follow arguments similar to \cite{BGTzv} with some improvements due to the Strichartz estimates stated in Theorem \ref{KVStrich}. 
 
\begin{prop}[Bilinear Strichartz Estimates]\label{BSE}
Suppose $u_1$ and $u_2 \in L^2(\tor^3),$ have spectra in $[-N_1, N_1]^3$ and  $[-N_2, N_2]^3,$ respectively. Then,
\begin{equation*}
\|e^{it\D}u_1e^{it\D}u_2\|_{L^2_t L^2_x([0,1]\times \tor^3)} \lesssim \min(N_1, N_2)^{\frac{1}{2}}\|u_1\|_{L^2(\tor^3)}\|u_2\|_{L^2(\tor^3)}
\end{equation*}
\end{prop}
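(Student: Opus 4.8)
My plan is to reduce to the favorable case and then exploit an almost-orthogonal decomposition into frequency cubes of the smaller side length, on each of which the $L^4$ Strichartz estimate of Theorem \ref{Strich1} is available (in the Galilean-invariant form recorded in the Remark following it). By symmetry I may assume $N_1 \le N_2$, so that $\min(N_1,N_2)=N_1$. I would begin by partitioning the frequency cube $[-N_2,N_2]^3$ into $O\big((N_2/N_1)^3\big)$ pairwise disjoint cubes $\cC$ of side length $N_1$, and write $u_2 = \sum_\cC P_{\cC} u_2$, so that $\sum_\cC \|P_\cC u_2\|_{L^2(\tor^3)}^2 = \|u_2\|_{L^2(\tor^3)}^2$ by orthogonality of the sharp projections.

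The key step, and the one I expect to be the main obstacle to phrase cleanly, is the almost-orthogonality of the pieces $e^{it\D}u_1\, e^{it\D}P_\cC u_2$ in $L^2_{t,x}$. Here it can be obtained purely from the \emph{spatial} frequency support, with no analysis of the time-frequency variable: at each fixed $t$, the Fourier support in $\xi$ of $e^{it\D}u_1\, e^{it\D}P_\cC u_2$ is contained in $[-N_1,N_1]^3 + \cC$. Hence a given output frequency $\xi$ can be produced only by those cubes $\cC$ meeting $\xi + [-N_1,N_1]^3$, a cube of side $2N_1$; since the $\cC$ tile at scale $N_1$, at most $O(1)$ of them qualify. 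Applying Cauchy–Schwarz over these $O(1)$ terms at each $\xi$, then Plancherel in $x$ and integration over $t\in[0,1]$, yields $\|e^{it\D}u_1\, e^{it\D}u_2\|_{L^2_{t,x}}^2 \lesssim \sum_\cC \|e^{it\D}u_1\, e^{it\D}P_\cC u_2\|_{L^2_{t,x}}^2$.

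It then remains to estimate each summand, which is where the gain $N_1^{1/2}$ enters. I would apply H\"older in $(t,x)$ with exponents $(4,4)$ and then the $L^4$ Strichartz bound (valid since $4 > \tfrac{10}{3}$, with exponent $\tfrac32 - \tfrac54 = \tfrac14$) to $e^{it\D}u_1$, whose spectrum lies in the single cube $[-N_1,N_1]^3$, and—via the Galilean-invariant version in the Remark—to $e^{it\D}P_\cC u_2$, whose spectrum lies in a cube of side $N_1$. This gives $\|e^{it\D}u_1\, e^{it\D}P_\cC u_2\|_{L^2_{t,x}} \lesssim N_1^{1/4}\|u_1\|_{L^2(\tor^3)}\cdot N_1^{1/4}\|P_\cC u_2\|_{L^2(\tor^3)}$. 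Summing the squares over $\cC$ and invoking $\sum_\cC \|P_\cC u_2\|_{L^2(\tor^3)}^2 = \|u_2\|_{L^2(\tor^3)}^2$ produces $N_1\,\|u_1\|_{L^2(\tor^3)}^2\,\|u_2\|_{L^2(\tor^3)}^2$, and taking square roots gives the stated estimate with $\min(N_1,N_2)^{1/2}$ in front. The case $N_2 \le N_1$ is handled identically by exchanging the roles of $u_1$ and $u_2$.
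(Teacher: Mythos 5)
Your proof is correct and takes essentially the same route as the paper's: reduce by symmetry to $N_1 \le N_2$, decompose $u_2$ over cubes of side length $N_1$, use almost orthogonality of the products to pass to a square-sum, and then apply H\"older with exponents $(4,4)$ together with the $L^4$ Strichartz estimate (in its Galilean-invariant form for the cubes $\cC$) before resumming via $\sum_\cC \|P_\cC u_2\|_{L^2}^2 = \|u_2\|_{L^2}^2$. Your explicit justification of the almost-orthogonality step---fixed-time spatial Fourier supports, the $O(1)$ overlap of translated cubes, Cauchy--Schwarz and Plancherel---is a careful spelling-out of what the paper invokes in one line, but it is the same argument.
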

\begin{proof} As the time domain $t \in [0,1]$ and and spatial domain $x \in \tor^3$ are fixed, we suppress the domain of the $L^p_t$ and $L^p_x$ norms throughout the proof.
By symmetry, suppose  $N_1 \leq N_2.$ Decompose $\R^3$ into a disjoint collection of cubes $\{\cC_j\},$ each of side length $N_1,$ and observe that $u_1(P_{\cC_j}u_2)$ has spectrum localized in a fixed dilate of $\cC_j.$ Thus we may use almost orthogonality to conclude
\begin{equation*}
\|e^{it\D}u_1e^{it\D}u_2\|_{L^2_t L^2_x} \leq \left(\sum_j \|e^{it\D}u_1e^{it\D}(P_{\cC_j}u_2)\|^2_{L^2_t L^2_x}\right)^{\frac{1}{2}}.
\end{equation*}
By H\"older's inequality, the right hand side is bounded above by
\begin{equation*}
 \|e^{it\D}u_1\|_{L^4_t L^4_x}\left(\sum_j \|e^{it\D}(P_{\cC_j}u_2)\|^2_{L^4_t L^4_x}\right)^{\frac{1}{2}}.
\end{equation*} 
Applying Strichartz estimates to the above upper bound, we conclude
\begin{align*}
\|e^{it\D}u_1e^{it\D}u_2\|_{L^2_t L^2_x}& \lesssim N_1^{\frac{1}{4}}\|u_1\|_{L^2_x}\left(\sum_j N_1^{\frac{1}{2}}\|P_{\cC_j}u_2\|^2_{L^2_x}\right)^{\frac{1}{2}}\\
&\lesssim  N_1^{\frac{1}{2}}\|u_1\|_{L^2_x}\|u_2\|_{L^2_x}
\end{align*}
\end{proof}

The next proposition allows us to move between the previous bilinear Strichartz estimates and bilinear estimates in Bourgain $X^{s,b}$ spaces. The result is contained in \cite{BGTzv}, but the proof is included here for completeness.
\begin{prop}
The following two statements are equivalent:
\begin{enumerate}
\item For $u_1$ and $u_2 \in L^2(\tor^3),$ with spectra in $[-N_1, N_1]^3$ and  $[-N_2, N_2]^3$ respectively,
\begin{equation*}
\|e^{it\D}u_1e^{it\D}u_2\|_{L^2_t L^2_x([0,1]\times \tor^3)} \lesssim \min(N_1, N_2)^s\|u_1\|_{L^2_x(\tor^3)}\|u_2\|_{L^2_x(\tor^3)}
\end{equation*}
\item For any $b > \frac{1}{2}$ and any $v_1, v_2 \in X^{0,b}(\R \times \tor^3)$ with spectra in $[-N_1, N_1]^3$ and $[-N_2, N_2]^3$ respectively,
\begin{equation*}
\|v_1v_2\|_{L^2_t L^2_x(\R \times \tor^3)} \lesssim \min(N_1, N_2)^s\|v_1\|_{X^{0,b}(\R \times \tor^3)}\|v_2\|_{X^{0,b}(\R \times \tor^3)}
\end{equation*}
\end{enumerate}
\end{prop}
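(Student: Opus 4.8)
The plan is to prove the two implications separately; the direction $(2)\Rightarrow(1)$ is the soft one, while $(1)\Rightarrow(2)$ is the transference principle that carries the real content and where $b>\tfrac12$ enters.

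For $(2)\Rightarrow(1)$, I would test the hypothesis against cutoff free solutions. Fix $\psi\in C_c^\infty(\R)$ with $\psi\equiv 1$ on $[0,1]$ and set $v_j(t,x):=\psi(t)\,(e^{it\D}u_j)(x)$ for $j=1,2$. Since $e^{-it\D}v_j(t)=\psi(t)u_j$ is the product of a scalar time factor with a fixed spatial profile, the $X^{0,b}$ norm factorizes as $\|v_j\|_{X^{0,b}}=\|\psi\|_{H^b_t(\R)}\,\|u_j\|_{L^2_x}\lesssim\|u_j\|_{L^2_x}$, and the spatial spectrum of $v_j$ coincides with that of $u_j$, hence lies in $[-N_j,N_j]^3$. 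Because $v_1v_2=e^{it\D}u_1\,e^{it\D}u_2$ on $[0,1]$, restricting the global bound (2) to $[0,1]\times\tor^3$ yields (1), with an implicit constant depending only on $b$.

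For $(1)\Rightarrow(2)$ I would run the transference principle on a fixed unit time interval. Writing $f_j(t):=e^{-it\D}v_j(t)$, so that $\|v_j\|_{X^{0,b}}=\|f_j\|_{H^b_t(\R;L^2_x)}$, I expand $f_j$ in its temporal Fourier transform, $f_j(t)=\int_\R e^{it\mu}g_{j,\mu}\,d\mu$ with $\int_\R\la\mu\ra^{2b}\|g_{j,\mu}\|_{L^2_x}^2\,d\mu=\|v_j\|_{X^{0,b}}^2$. This exhibits $v_j(t)=\int_\R e^{it\mu}\,e^{it\D}g_{j,\mu}\,d\mu$ as a superposition of modulated free evolutions, whence $v_1v_2=\iint e^{it(\mu_1+\mu_2)}\,(e^{it\D}g_{1,\mu_1})(e^{it\D}g_{2,\mu_2})\,d\mu_1\,d\mu_2$. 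Applying Minkowski's integral inequality over a bounded time interval and invoking statement (1) for each pair $(g_{1,\mu_1},g_{2,\mu_2})$ produces the factor $\min(N_1,N_2)^s\,\|g_{1,\mu_1}\|_{L^2_x}\|g_{2,\mu_2}\|_{L^2_x}$, after which the surviving integral $\int\|g_{j,\mu}\|_{L^2_x}\,d\mu$ is controlled by Cauchy--Schwarz as $\|\la\mu\ra^{-b}\|_{L^2_\mu}\,\|v_j\|_{X^{0,b}}$. This step is exactly where $b>\tfrac12$ is forced, since $\la\mu\ra^{-b}\in L^2(\R)$ precisely then.

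The main obstacle is upgrading the unit-interval bound to all of $\R$ in (2). On the torus a pure free solution has infinite global bilinear norm $\|e^{it\D}u_1\,e^{it\D}u_2\|_{L^2_{t,x}(\R\times\tor^3)}$, so one cannot simply move $L^2_t(\R)$ inside the modulation integral; the needed time integrability must instead come from the temporal regularity $b>\tfrac12$ in the $X^{0,b}$ norm (which, consistently, already excludes pure free solutions). I would close this with a temporal partition of unity $1=\sum_k\chi(t-k)^2$ applied to a single factor: since $\chi(t-k)v_1$ is supported in $[k-1,k+1]$, the product $(\chi(t-k)v_1)\,v_2$ lives on a bounded interval where the argument above applies and gives $\min(N_1,N_2)^s\,\|\chi(\cdot-k)v_1\|_{X^{0,b}}\,\|v_2\|_{X^{0,b}}$; squaring, summing in $k$, and using the localization bound $\sum_k\|\chi(\cdot-k)v_1\|_{X^{0,b}}^2\lesssim\|v_1\|_{X^{0,b}}^2$, valid for $0\le b\le 1$, recovers (2). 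Verifying this last almost-orthogonal summation for fractional $b$ is the one genuinely technical point; the rest is the soft Fourier-analytic transference described above.
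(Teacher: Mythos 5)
Your proposal is correct and takes essentially the same route as the paper: for $(2)\Rightarrow(1)$ you test the bilinear bound against $\psi(t)e^{it\Delta}u_j$ and use the tensor-product factorization of the $X^{0,b}$ norm, and for $(1)\Rightarrow(2)$ you run the identical transference argument, writing each $v_j$ as a superposition of modulated free evolutions and closing with Cauchy--Schwarz, which is exactly where $b>\tfrac12$ enters in the paper as well. The only difference is one of completeness rather than method: the paper relegates the reduction from a unit time interval to all of $\R$ to ``a standard partition of unity argument,'' whereas you carry it out explicitly, applying the partition to a single factor and invoking the almost-orthogonality bound $\sum_k\|\chi(\cdot-k)v_1\|_{X^{0,b}}^2\lesssim\|v_1\|_{X^{0,b}}^2$ for $0\le b\le 1$, which indeed holds by interpolating the elementary $b=0$ and $b=1$ cases.
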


\begin{proof}
We show statement (1) implies statement (2) under the assumption that both $v_1$ and $v_2$ are supported on the time interval $(0,1).$ The general case easily follows using a partition of unity argument. By symmetry, suppose $N_1 \leq N_2.$ For $k \in \{1,2\},$ define $V_k:= e^{-it\D}v_k,$ so that we may write
\begin{equation*}
v_k = e^{it\D}V_k.
\end{equation*}
 Use $\Ft$ to denote the Fourier transform in the time variable, and observe
\begin{equation*}
(v_1v_2)(t) = (2\pi)^{-2}\int^{\infty}_{-\infty}\int^{\infty}_{-\infty}e^{it\t + \s}e^{it\D}\Ft V_1(\t)e^{it\D}\Ft V_2(\s)\,d\t d\s.
\end{equation*}
Let us simplify the notation, and write $L^2_tL^2_x:= L^2_t L^2_x([0,1]\times \tor^3).$ By statement (1) of the proposition, we have the estimate
\begin{align}
\|v_1v_2\|_{L^2_t L^2_x} &\leq (2\pi)^{-2}\int^{\infty}_{-\infty}\int^{\infty}_{-\infty}\|e^{it\D}\Ft V_1(\t)e^{it\D}\Ft V_2(\s)\|_{L^2_t L^2_x}\,d\t d\s\notag\\
&\lesssim N_1^s \int^{\infty}_{-\infty}\int^{\infty}_{-\infty} \|\Ft V_1(\t)\|_{L^2_x(\tor^3)}\|\Ft V_2(\s)\|_{L^2_x(\tor^3)}\,d\t d\s .\label{VL2}
\end{align}
Motivated by the observation that for $b>\frac{1}{2},$ $\la \t \ra^{-b} \in L^2_\t(\R),$ we use H\"older's inequality and proceed as follows
\begin{align}
\int^{\infty}_{-\infty}\|\Ft V_1(\t)\|_{L^2_x(\tor^3)}\,d\t &\leq C_b \left(\int^{\infty}_{-\infty}\la \t \ra^{2b} \|\Ft V_1(\t)\|^2_{L^2_x(\tor^3)}\,d\t\right)^{\frac{1}{2}}\notag \\
&= C_b\| V_1(t)\|_{H^b_{t}L^2_x(\R \times \tor^3)}\notag \\
&= C_b \|v_1\|_{X^{0,b}(\R \times \tor^3)}.\label{vXb}
\end{align}
Together, \eqref{VL2} and \eqref{vXb} imply statement (2) when $v_1(t), v_2(t)$ are supported on the time interval $(0,1).$ The general case follows from a standard partition of unity argument.

To see the reverse implication, suppose $u_k \in L^2(\tor^3)$ has spectral support $[-N_k,N_k]^3$ for $k = 1,2,$ and define $U_k(t) := e^{it\D}u_k.$ Let $\psi(t) \in C_0^\infty(\R)$ be supported in the interval $(0,1),$ so that $v_k(t) := \psi(t)U_k(t) \in X^{0,b}(\R).$ The equivalences 
\begin{equation*}
\|v_1v_2\|_{L^2_t L^2_x(\R \times \tor^3)} = \|U_1 U_2 \|_{L^2_t L^2_x((0,1) \times \tor^3)} = \| e^{it\D}u_1e^{it\D}u_2\|_{L^2_t L^2_x((0,1) \times \tor^3)}
\end{equation*}
and 
\begin{equation*}
\|v_k\|_{X^{0,b}(\R \times \tor^3)} = \|e^{-it\D}\psi(t)e^{it\D}u_k\|_{H^b_tL^2_x(\R \times \tor^3)} = C_{\psi}\|u_k\|_{L^2_x(\tor^3)}
\end{equation*}
are all that is needed to see that statement (2) implies statement (1).
\end{proof}


In the next proposition, we establish a range of bilinear estimates using the Bourgain spaces.
\begin{prop}\label{InterpProp}
For any $s>\frac{1}{2}$, 
there is some $\frac{1}{4} < b' < \frac{1}{2}$ such that for any $v_1, v_2 \in X^{0,b}(\R \times \tor^3),$ with spectral support on $[-N_1, N_1]^3$ and $[-N_2, N_2]^3,$ respectively, the following estimate holds:
\begin{equation*}
\|v_1v_2\|_{L^2_t L^2_x(\R \times \tor^3)} \lesssim \min (N_1,N_2)^s\|v_1\|_{X^{0,b'}}\|v_1\|_{X^{0,b'}}
\end{equation*}
\end{prop}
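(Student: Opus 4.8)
\textbf{Proof proposal for Proposition~\ref{InterpProp}.}

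The plan is to interpolate between two endpoint bilinear estimates. The first is the sharp bilinear Strichartz estimate, now transferred to Bourgain spaces by the previous proposition: taking $s$ in the previous proposition to be $\frac{1}{2}$ (the exponent supplied by Proposition~\ref{BSE}), I obtain for any $b > \frac{1}{2}$ the bound
\begin{equation*}
\|v_1 v_2\|_{L^2_t L^2_x(\R \times \tor^3)} \lesssim \min(N_1, N_2)^{\frac{1}{2}} \|v_1\|_{X^{0,b}} \|v_2\|_{X^{0,b}}.
\end{equation*}
This is the high-regularity, small-frequency-gain endpoint. The second endpoint is a cheap estimate with essentially no regularity in $b$ but a worse power of the frequency. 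Using Remark~\ref{EmbRem}(3), namely $X^{0,\frac{1}{4}} \hookrightarrow L^4_t L^2_x$, together with the Bernstein/Sobolev inequality $\|v_k\|_{L^\infty_x} \lesssim N_k^{\frac{3}{2}} \|v_k\|_{L^2_x}$ valid for functions with spectrum in $[-N_k,N_k]^3$, I would estimate
\begin{equation*}
\|v_1 v_2\|_{L^2_t L^2_x} \leq \|v_1\|_{L^4_t L^2_x} \|v_2\|_{L^4_t L^\infty_x} \lesssim \min(N_1,N_2)^{\frac{3}{2}} \|v_1\|_{X^{0,\frac{1}{4}}} \|v_2\|_{X^{0,\frac{1}{4}}},
\end{equation*}
placing the $L^\infty_x$ factor on whichever function carries the smaller frequency so that the gain is $\min(N_1,N_2)^{3/2}$. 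This is the low-regularity, large-frequency-gain endpoint.

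Next I would interpolate these two estimates in the $b$-variable. The $X^{0,b}$ spaces form a complex interpolation scale in $b$ (they are weighted $L^2$ spaces in the $(\tau,\xi)$ variables), so for $\theta \in (0,1)$ and $b_\theta := (1-\theta)\cdot b + \theta \cdot \frac14$ the bilinear map $(v_1,v_2) \mapsto v_1 v_2$ into $L^2_t L^2_x$ satisfies
\begin{equation*}
\|v_1 v_2\|_{L^2_t L^2_x} \lesssim \min(N_1,N_2)^{(1-\theta)\frac{1}{2} + \theta \frac{3}{2}} \|v_1\|_{X^{0,b_\theta}} \|v_2\|_{X^{0,b_\theta}}.
\end{equation*}
Choosing $\theta$ small pushes $b_\theta$ below $\frac{1}{2}$ while keeping the frequency exponent $(1-\theta)\frac12 + \theta\frac32 = \frac12 + \theta$ close to $\frac12$. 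Since the target exponent $s$ ranges over $(\frac12, \frac32)$, I solve $\frac12 + \theta = s$, i.e. $\theta = s - \frac12 \in (0,1)$, which yields exactly $\min(N_1,N_2)^s$ and a corresponding $b' := b_\theta = (1-\theta)b + \frac{\theta}{4}$. A short check confirms that for $b$ chosen slightly above $\frac12$ and $\theta = s-\frac12$ bounded away from $1$, the resulting $b'$ lands in the required window $(\frac14, \frac12)$: it exceeds $\frac14$ because it is a convex combination of $b > \frac12$ and $\frac14$ with positive weight on $b$, and it can be forced below $\frac12$ by taking $b$ close enough to $\frac12$.

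The main obstacle is the bookkeeping on the interpolation parameter: I must verify that a single $b'$ can be chosen \emph{simultaneously} in $(\frac14, \frac12)$ and so that the interpolated frequency exponent equals the prescribed $s$ across the whole range $\frac12 < s < \frac32$. For $s$ near $\frac32$ one has $\theta$ near $1$, which drives $b_\theta$ toward $\frac14$ from above; this is fine since $\frac14$ is the lower endpoint, but it means $b$ cannot be taken too large, so the admissible choice of $b$ depends on $s$. The rest — the complex interpolation of the weighted $L^2$ spaces, the Bernstein estimate, and the Hölder split — is routine, and I would only need to state the interpolation identity for the $X^{0,b}$ scale and record the algebra relating $\theta$, $b$, $b'$, and $s$.
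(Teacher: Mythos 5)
Your proposal is correct and follows essentially the same route as the paper's proof: both establish the endpoint bound $\min(N_1,N_2)^{1/2}\|v_1\|_{X^{0,b}}\|v_2\|_{X^{0,b}}$ for $b>\frac{1}{2}$ via the bilinear Strichartz estimate and its transference to Bourgain spaces, establish the cheap endpoint $\min(N_1,N_2)^{3/2}\|v_1\|_{X^{0,\frac{1}{4}}}\|v_2\|_{X^{0,\frac{1}{4}}}$ via H\"older, Bernstein, and the embedding $X^{0,\frac{1}{4}}\hookrightarrow L^4_t L^2_x$, and then interpolate in the $b$-parameter. Your explicit bookkeeping with $\theta = s-\frac{1}{2}$ and the resulting $b'$, including the observation that $b$ must be taken close to $\frac{1}{2}$ when $s$ is near $\frac{3}{2}$ so that $b'$ stays in $(\frac{1}{4},\frac{1}{2})$, simply spells out the interpolation step that the paper leaves implicit.
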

\begin{proof}
Let $v_1$ and $v_2$ have the required spectral support, and suppose $N_1 \leq N_2.$
From the previous lemma and the bilinear Strichartz estimate, for any $\e _0 > 0,$ if $v_1, v_2 \in X^{0,\frac{1}{2} + \e_0},$ then
\begin{equation}\label{interp1}
\|v_1v_2\|_{L^2_t L^2_x(\R \times \tor^3)} \lesssim \min (N_1,N_2)^{\frac{1}{2}}\|v_1\|_{X^{0,\frac{1}{2} + \e_0}}\|v_2\|_{X^{0,\frac{1}{2} + \e_0}}.
\end{equation}
Using H\"older's inequality, Bernstein's inequality, and the inclusion $X^{0,\frac{1}{4}} \hookrightarrow L^4_t L^2_x,$ we derive a second estimate as follows:
\begin{align}
\|v_1v_2\|_{L^2_t L^2_x(\R \times \tor^3)}  &\leq \|v_1 \|_{L^4_t L^{\infty}_x (\R \times \tor^3)} \|v_2 \|_{L^4_t L^2_x (\R \times \tor^3)} \notag \\
&\lesssim N_1^{s
} \|v_1 \|_{L^4_t L^2_x (\R \times \tor^3)} \|v_2 \|_{L^4_t L^2_x (\R \times \tor^3)} \notag \\
& \lesssim N_1^{s
} \|v_1 \|_{X^{0,\frac{1}{4}}}\|v_2 \|_{X^{0,\frac{1}{4}}}, \label{interp2}
\end{align}
for any $s\geq\frac32$.

Interpolating the bounds \eqref{interp1} and \eqref{interp2} gives the desired result.
\end{proof}


We may now prove Lemma \ref{ThreeFuncs}, our key multlinear estimate, using a duality argument combined with a frequency decomposition of the $\uj$ functions.
\begin{proof}[Proof of Lemma \ref{ThreeFuncs}]
Let $(b,b')$ satisfy the hypotheses, with values to be determined later. By duality, we  prove the equivalent estimate: \\*
for any $\uo \in X^{-s,b'}(\R \times \tor^3),$
\begin{equation}
\left\arrowvert\int_{\R}\int_{\tor^3} \ov{\uo} \uf \ov{\us} \ut \,dx dt\right\arrowvert \leq C \|\uo\|_{X^{-s,b'}}\prod_{j=1}^{3}\|u^{(j)}\|_{X^{s,b}(\R \times \tor^3)}.
\end{equation}
By density, we may assume $\uj \in C^{\infty}_0(\R \times \tor^3)$ for $j = 0, 1, 2, 3$ and we will decompose each of these functions into dyadic cubes in Fourier space. 

To this end, we adopt the notation $N_j$ to mean the family of dyadic numbers $\{2^{n_j}\}_{n_j \in \N},$ and the summation $\sum_{N_j} f(N_j)$ indicates to sum over all possible values of $N_j.$ Summing over the collection $\sN$ of all such dyadic decompositions, 
\[\sN = \big\{(N_0, N_1, N_2, N_3)\big\arrowvert N_j \in 2^\N \text{ for  } j = 0, 1, 2, 3\big\},\]
we observe
\begin{equation}
\left\arrowvert\int_{\R}\int_{\tor^3}\ov{\uo} \uf \ov{\us} \ut \,dx dt\right\arrowvert \leq \sum_{\sN}\left\arrowvert\int_{\R}\int_{\tor^3}\ov{\uno} \unf \ov{\uns} \unt \,dx dt\right\arrowvert.
\end{equation}
The integral on the right-hand side is zero unless the two highest frequencies are comparable. Using symmetry, we reduce the sum to two cases.

\emph{Case 1: } Define $\sN_1:=\left\{ N_0 \sim N_1 \geq N_2 \geq N_3\right\} \cap \sN,$ and suppose $s'$ satisfies $\frac{1}{2} < s' < s.$  We use H\"older's inequality, Proposition \ref{InterpProp}, and Bernstein's inequality to show that for some $\frac{1}{4} < b' < \frac{1}{2},$ 
\begin{align}
\sum_{\sN_1} \Bigg\arrowvert\int_{\R}\int_{\tor^3}\ov{\uno} \unf & \ov{\uns}\unt \,dx dt\Bigg\arrowvert \notag \\ 
&\leq \sum_{\sN_1} \| \uno \uns \|_{L^2_t L^2_x}\| \unf \unt \|_{L^2_t L^2_x}\notag \\
&\lesssim \sum_{\sN_1} N_2^{s'}N_3^{s'}\prod_{j=0}^3\|u^{(j)}_{N_j}\|_{X^{0,b'}}\notag\\
&\lesssim \sum_{\sN_1}\frac{N_0^s}{N_1^s}N_2^{s'-s}N_3^{s'-s}\|\uno\|_{X^{-s,b'}}\prod_{j=1}^3\|u^{(j)}_{N_j}\|_{X^{s,b'}}\label{Case1sum}
\end{align}
Noting that $s' - s \leq 0,$ and summing over $N_3 \leq N_2$ using Cauchy-Schwarz, we  bound the expression \eqref{Case1sum} above by
\begin{equation}\label{Case1sum2}
C \|\us\|_{X^{s,b'}} \|\ut\|_{X^{s,b'}} \sum_{N_0 \sim N_1}\|\uno\|_{X^{-s,b'}}\|\unf\|_{X^{s,b'}}.
\end{equation}
We use Cauchy-Schwarz again to sum on $N_0 \sim N_1$ in \eqref{Case1sum2}, concluding
\begin{align}
\sum_{\sN_1} \Bigg\arrowvert\int_{\R}\int_{\tor^3}\ov{\uno} \unf & \ov{\uns}\unt \,dx dt\Bigg\arrowvert \notag \\ 
&\lesssim \|\uo\|_{X^{-s,b'}}\prod_{j=1}^{3}\|u^{(j)}\|_{X^{s,b'}}\label{CSSum1b}.
\end{align}


\emph{Case 2: } Define $\sN_2:=\left\{N_0 \leq N_1 \sim N_2 \geq N_3\right\}\cap \sN$ As in the previous case, for $s'$ satisfying $\frac{1}{2} < s' < s,$  Proposition \ref{InterpProp} guarantees the existence of $b'$ with $\frac{1}{4} < b' < \frac{1}{2}$ such that 
\begin{align}
\sum_{\sN_2} \Bigg\arrowvert\int_{\R}\int_{\tor^3}\ov{\uno} \unf & \ov{\uns}\unt \,dx dt\Bigg\arrowvert \notag \\ 
&\leq \sum_{\sN_2}\| \uno \unf \|_{L^2_t L^2_x}\| \uns \unt \|_{L^2_t L^2_x}\notag \\
&\lesssim \sum_{\sN_2} N_0^{s'}N_3^{s'}\prod_{j=0}^3\|u^{(j)}_{N_j}\|_{X^{0,b'}}\notag \\
&\lesssim \sum_{\sN_2}\frac{N_0^{s' + s}}{N_1^s N_2^s}N_3^{s'-s}\|\uno\|_{X^{-s,b'}}\prod_{j=1}^3\|u^{(j)}_{N_j}\|_{X^{s,b'}},\label{Case2sum1}
\end{align}
where we have used H\"older's inequality for the first line, and Bernstein's inequality for the last. We find upper bounds for last expression above by first summing on $N_0$ and $N_3$, then on $N_1 \sim N_2,$ using Cauchy-Schwarz each time:
\begin{align}
 C \|\uo\|_{X^{-s,b'}}& \|\ut\|_{X^{s,b'}}\sum_{N_1 \sim N_2} \frac{N_1^{s'}}{N_2^s}\|\unf\|_{X^{s,b'}}\|\uns\|_{X^{s,b'}}\notag \\
&\lesssim \|\uo\|_{X^{-s,b'}} \|\ut\|_{X^{s,b'}} \sum_{N_1 \sim N_2} \|\unf\|_{X^{s,b'}}\|\uns\|_{X^{s,b'}}\notag \\
&\lesssim \|\uo\|_{X^{-s,b'}}\prod_{j=1}^{3}\|u^{(j)}\|_{X^{s,b'}}\label{CSSum2b}
\end{align}
Together, \eqref{CSSum1b} and \eqref{CSSum2b} conclude the proof of our lemma.
\end{proof}


\section{Well-posedness of the Cubic NLSS}
\label{sec-5}

We now use a contraction argument on the Duhamel formula for the cubic NLSS to show local well-posedness for initial data in $H^s_{\lu}(\tor^3)$ for $s>\frac{1}{2}$. 
In the defocusing cubic NLSS with initial data in $H^1_{\lu}(\tor^3),$ the local well-posedness combines with the conservation laws to extend the solution for all time. 
\begin{proof}[Proof of Theorem \ref{CubMainThm}]
We begin with the Duhamel formula for the $j$-th equation of the cubic NLSS
\begin{equation*}
\Phi_j(\uu)(t) = e^{it\D}u_{0,j} - i\s \int_{0}^{t} e^{i\D(t - t')}\r u_j(t') dt'
\end{equation*}
where $u_{0,j} = u_j(t=0, x),$ $\uu = \{u_j\}_{j=1}^\infty$ and $\r = \sum_{k \in \N} \l_k|u_k|^2.$ Define the map $\Phi(\uu) := \{ \Phi_j(\uu) \}_{j=1}^\infty.$
Fix $s$ so that $s>\frac{1}{2}$. 
and let $b' = b'(s)$ be the value guaranteed by Proposition \ref{InterpProp}. Choose $b = b(s) > \frac{1}{2}$ so that $b' + b < 1.$
 Suppose $\|\uu_0\|_{H^s_{\lu}(\mathbb{T}^3)} \leq \eta$ for some some $\eta$ to be chosen later. We will show that $\Phi$ is a contraction on the ball
\begin{equation*}
B:=\Big\{\uu \in X^{s,b}_{T,\lu} \cap C_t H^s_{\lu}([0,T]\times \mathbb{T}^3) \, \Big| \, \|\uu\|_{X^{s,b}_{T,\lu}}\leq 2\eta \Big\}
\end{equation*}
for  some $T \leq 1.$


By Proposition \ref{DuhamBds} and Lemma \ref{ThreeFuncs}, we have 
\begin{align*}
\|\Phi_j(\uu)\|_{X^{s,b}_T} &\leq \|e^{it\D} u_{0,j}\|_{X^{s,b}_T} + \left\Vert \int_0^t e^{i(t - t')\D}\r u_j(t')\,dt'\right\Vert_{X^{s,b}_T}\\
&\leq  \|u_{0,j}\|_{H^s(\tor^3)} + CT^{1-b-b'}\sum_k \l_k \||u_k|^2u_j\|_{X^{-s,b'}_T}\\
&\leq \|u_{0,j}\|_{H^s(\tor^3)} + CT^{1-b-b'}\sum_k \l_k \|u_k\|^2_{X^{s,b}_T}\|u_j\|_{X^{s,b}_T} \\
&\leq \|u_{0,j}\|_{H^s(\tor^3)} + CT^{1-b-b'}\|\uu\|^2_{X^{s,b}_{T,\lu}}\|u_j\|_{X^{s,b}_T}
\end{align*}
From the last inequality above, we square both sides, multiply by $\l_j,$ sum on $j$, then take the square root to find
\begin{equation*}
\|\Phi(\uu)\|_{X^{s,b}_{T,\lu}}  \leq \sqrt{2}\|\uu_0\|_{H^s_{\lu}(\tor^3)}+ CT^{1-b-b'}\|\uu\|^3_{X^{s,b}_{T,\lu}}.
\end{equation*}
For $\uu \in B$
\begin{equation*}
\|\Phi(\uu)\|_{X^{s,b}_{T,\lu}}  \leq \sqrt{2}\eta+ CT^{1-b-b'}\eta^3
\end{equation*}
and the right side is bounded above by $2\eta$ for $T$ small enough depending on $C$ and $\eta.$ 


For the contraction argument on $B$, we first observe that
\begin{align}
\sum_k \l_k\big\|\,|u_k|^2 u_j & - |v_k|^2v_j\big\|_{X^{-s,b'}_T}\notag \\
&\leq \sum_k \l_k \|u_k\|^2_{X^{s,b}_T}\|u_j - v_j \|_{X^{s,b}_T}\notag\\
&\qquad\qquad+\sum_k \l_k  \|u_k - v_k\|_{X^{s,b}_T}\Big(\|u_k\|_{X^{s,b}_T} + \|v_k\|_{X^{s,b}_T}\Big)\|v_j\|_{X^{s,b}_T}\notag\\
&\leq \|\uu\|^2_{X^{s,b}_{T,\lu}}\|u_j - v_j \|_{X^{s,b}_T}\notag\\
&\qquad\qquad +\| \uu- \vu\|_{X^{s,b}_{T,\lu}}\Big(\| \uu \|_{X^{s,b}_{T,\lu}}  + \| \vu \|_{X^{s,b}_{T,\lu}}  \Big)\|v_j\|_{X^{s,b}_T}\label{CubicContract}
\end{align}
where we have used Cauchy-Schwarz twice. We combine the above argument with Ginibre's estimate, then square, multiply by $\l_j$, sum on $j,$ then take the square root to find
\begin{equation*}
\|\Phi(\uu)- \Phi(\vu)\|_{X^{s,b}_{T,\lu}} \lesssim T^{1-b-b'} \| \uu- \vu\|_{X^{s,b}_{T,\lu}}\left(\| \uu \|_{X^{s,b}_{T,\lu}}  + \| \vu \|_{X^{s,b}_{T,\lu}}  \right)^2.
\end{equation*}
For small enough $T$ depending on $s,$ $\|\uu_0\|_{H^s_{\lu}(\tor^3)},$ and the implicit constant, $\Phi$ is a contraction on $B$ in the $X^{s,b}_{\lu}$ norm, and we obtain a unique solution to the Cauchy problem on $[0,T).$ Continuous dependence on initial data is obtained using a similar argument, and we conclude the cubic NLSS is locally well-posed in $H^s_{\lu}(\tor^3)$ for $s>\frac{1}{2}$. 


Now consider the defocusing cubic NLSS with initial data $\uu_0 \in H^1_{\lu}(\tor^3).$ Recall, the conserved energy is 
\[E_{\lu} = \frac{1}{2}\sum_k \l_k \int |\n u_k|^2\,dx + \frac{1}{4}\| \r \|^2_{L^2_x}.\] 
By H\"older's inequality and Sobolev embedding, we have
 \[ \|\r \|_{L^2(\tor^3)}  \leq \sum_k \l_k \|u_k \|_{L^4(\tor^3)}^2 \lesssim \sum_k \l_k \|u_k \|_{L^6(\tor^3)}^2 \lesssim \|\uu\|_{H^1_{\lu}(\tor^3)}^2,\]
so that 
\begin{align*}
\|\uu(t)\|^2_{H^1_{\lu}(\tor^3)}&\leq M(\uu(t)) + 2E(\uu(t))=M(\uu_0) + 2E(\uu_0)\\
&= \| \uu(0)\|^2_{H^1_{\lu}(\tor^3)} + \frac{1}{2} \|\r(0)\|_{L^2_x(\tor^3)}^2 \\
& \leq \|\uu(0)\|^2_{H^1_{\lu}(\tor^3)} + C\|\uu_0\|_{H^1_{\lu}(\tor^3)}^4, 
\end{align*}
 Thus for some $T' < T,$ depending on the constant in the above upper bound, we may repeat the local well-posedness argument on intervals of length $T'$ indefinitely.
\end{proof}

\section{Well-Posedness of the Quintic NLSS on $\tor^3$}
\index{Well-Posedness of the Quintic NLSS on $\tor^3$%
@\emph{Well-Posedness of the Quintic NLSS on $\tor^3$}}%
\label{sec-6}


The quintic nonlinear Schr\"odinger system is given by
\begin{equation}\label{QNLSS}
\begin{cases}
& i\pd_t u_j =-\D u_j+ \s\r^2 u_j, \qquad j \in \mathbb{N}\\
&u_j(0,x) = u_{j,0}(x), \qquad x \in \tor^3,
\end{cases}
\end{equation}
where $\s \in \{-1,1\},$ $\lu \in \ell^1_+$, and $\r(t,x) = \sum \l_j |u_j(t,x)|^2.$ The system has the conserved quantities of mass and energy, given by
\begin{align}\label{QuinCons}
M_{\lu}(\uu)&:=\sum_k \l_k \|u_k\|_{L^2(\tor^3)}^2 = \|\uu\|_{L^2_{\lu}}^2 \\
E_{\lu}(\uu) &:=\frac{1}{2}\sum_k \l_k \|\n u_k\|_{L^2(\tor^3)}^2 +\s \frac{1}{6}\int_{\tor^3} \r^3\,dx.
\end{align}

In this chapter, we prove the following theorem:
\begin{theorem}[Local and global well-posedness of the quintic NLSS]\label{QuinMainThm}
Let  $\lu \in \ell^1_+,$ and suppose $\uu_0 \in H_{\lu}^1(\tor^3).$ There exists a time $T$ depending on $\uu_0$ such that the system \eqref{QNLSS} is locally well-posed for $t \in [0,T)$. Moreover, there exists $\eta>0$ such that if $\| \uu_0\|_{H_{\lu}^1(\mathbb{T}^d)} \leq \eta,$ then the solution is global in time.
\end{theorem}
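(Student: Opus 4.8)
The proof of Theorem \ref{QuinMainThm} follows the same contraction-mapping template as Theorem \ref{CubMainThm}, but now the nonlinearity is quintic, $\r^2 u_j$ with $\r = \sum_k \l_k |u_k|^2$, so that the relevant Duhamel term expands into a sum of products of five factors $u^{(1)}\ov{u^{(2)}}u^{(3)}\ov{u^{(4)}}u^{(5)}$ indexed over pairs of $\lu$-sums. The plan is to set up Duhamel's formula $\Phi_j(\uu)(t)=e^{it\D}u_{0,j}-i\s\int_0^t e^{i\D(t-t')}\r^2 u_j\,dt'$ and show $\Phi$ is a contraction on a ball in the adapted space $X^{1,b}_{T,\lu}\cap C_t H^1_{\lu}$. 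Since the quintic problem is $H^1$-critical in $3$ dimensions, I do \emph{not} expect to gain a positive power of $T$ from a Ginibre-type estimate as in the cubic subcritical case; instead I would work in the $X^s, Y^s$ (or $U^p$--$V^p$) function spaces of Herr--Tataru--Tzvetkov \cite{HTaTzv} and Killip--Vi\c{s}an \cite{kv}, which are precisely engineered to close the critical quintic estimate on $\tor^3$. This explains the smallness hypothesis: the contraction is obtained from the criticality-adapted nonlinear estimate together with a smallness assumption on the data, rather than from shrinking $T$.

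The central analytic step is the multilinear estimate: I would establish that for initial data/solutions localized at frequency $1$ in the critical space there is a bound of the form
\begin{equation*}
\Big\|\sum_{k,\ell}\l_k\l_\ell\, |u_k|^2|u_\ell|^2 u_j\Big\|_{N^1} \lesssim \|\uu\|_{X^{1}_{\lu}}^4 \, \|u_j\|_{X^1},
\end{equation*}
where $N^1$ is the natural space in which the Duhamel integral gains back the full $X^1$ regularity. Summing the $\lu$-weighted quantities as in the cubic proof — squaring, multiplying by $\l_j$, summing on $j$, and applying Cauchy--Schwarz in the $k,\ell$ sums — should convert the per-mode estimate into the adapted-space estimate $\|\Phi(\uu)\|_{X^1_{\lu}}\lesssim \|\uu_0\|_{H^1_{\lu}} + \|\uu\|_{X^1_{\lu}}^5$. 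The single-function quintic estimate underlying this is exactly the $H^1$-critical trilinear/quintilinear estimate proved in \cite{HTaTzv} via bilinear Strichartz and the $X^s,Y^s$ machinery; my task is to verify that the $\lu$-weighting interacts with that estimate only through Cauchy--Schwarz, which it should, since each factor carries its own $X^1$ norm and the weights $\l_k,\l_\ell$ are summable.

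With the nonlinear estimate in hand, the fixed-point argument is short. For the ball $B=\{\uu : \|\uu\|_{X^1_{T,\lu}}\le 2\eta\}$ I would check the self-mapping bound $\sqrt 2\,\eta + C(2\eta)^5 \le 2\eta$, which holds once $\eta$ is small enough that $C(2\eta)^4\le (2-\sqrt2)/2$; the corresponding difference estimate $\|\Phi(\uu)-\Phi(\vu)\|_{X^1_{\lu}}\lesssim (\|\uu\|_{X^1_{\lu}}+\|\vu\|_{X^1_{\lu}})^4\|\uu-\vu\|_{X^1_{\lu}}$ gives a contraction for the same small $\eta$, using the algebraic identity for the difference of two quintic monomials expanded telescopically. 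This establishes local well-posedness; continuous dependence follows from the same difference estimate. For the global statement, the key observation is that in the small-data regime the contraction actually produces a global solution directly: because the estimate is critical (no $T$-dependence), the same $\eta$-small fixed point exists on $[0,\infty)$, so no iteration of the local theory is needed. Alternatively one can invoke the conserved energy $E_\lu$ from \eqref{QuinCons}: for small data the quintic potential term is controlled by the kinetic term via the critical Sobolev/Gagliardo--Nirenberg inequality $\int\r^3\,dx \lesssim \|\uu\|_{H^1_{\lu}}^6$, so $\|\uu(t)\|_{H^1_{\lu}}^2 \le M_\lu + 2E_\lu$ stays bounded (even in the focusing case, provided $\eta$ is below the threshold where the energy remains coercive), and the a priori bound upgrades the global existence of the small-data solution.

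The main obstacle I anticipate is the critical multilinear estimate in the $\lu$-weighted setting: unlike the cubic case, there is no room to trade regularity for a power of $T$, so every factor must be placed in $X^1$ (equivalently $H^1$-critical) and the bilinear Strichartz gains must be summed over dyadic frequency blocks with the scaling-invariant weights exactly balanced. Verifying that the Herr--Tataru--Tzvetkov estimate survives the insertion of the two extra $\lu$-sums — that the summability $\lu\in\ell^1_+$ suffices and that no logarithmic divergence appears from the high-high-low frequency interactions of five factors — is the delicate point that the rest of the proof rests on.
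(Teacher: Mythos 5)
Your small-data portion matches the paper's proof almost exactly: the multilinear estimate you postulate is the paper's Lemma \ref{NonlinEst} (proved by duality against $Y^{-1}$, dyadic decomposition, and the Strichartz/Bernstein bounds of Remark \ref{useful}), the contraction on the ball of radius $2\eta$ is the paper's ball $B_1$, and the conservation-law route to global existence is the paper's as well. Two caveats there: your primary global mechanism --- that the critical fixed point ``exists on $[0,\infty)$'' with no iteration --- is not available on $\tor^3$, where the Strichartz estimates underlying the $X^1$/$Y^1$ machinery hold only on finite time intervals and there is no scattering; your fallback (mass--energy conservation plus iteration of the unit-interval local theory) is the correct mechanism and is what the paper does. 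Also, the focusing small-data case requires an explicit continuity/bootstrap argument (the paper uses $f(x)=x-\tfrac{1}{3}Cx^3$ and continuity of $t\mapsto\|\uu(t)\|_{H^1_{\lu}}$), not just the remark that the energy ``remains coercive.''

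The genuine gap is that the theorem asserts local well-posedness for \emph{arbitrary} $\uu_0\in H^1_{\lu}(\tor^3)$, with $T$ depending on $\uu_0$; smallness enters only in the global statement. Your proposal explicitly builds a smallness hypothesis into the local theory, so it proves strictly less than the statement. The missing idea is the large-data argument, which occupies roughly half of the paper's proof: given $\|\uu_0\|_{H^1_{\lu}}\leq A$, choose $N=N(\uu_0,\delta)$ so that $\|P_{>N}\uu_0\|_{H^1_{\lu}}\leq\delta$, and run the contraction on the ball $B_2$ defined by the two simultaneous constraints $\|\uu\|_{X^1_{\lu}}\leq 2A$ and $\|P_{>N}\uu\|_{X^1_{\lu}}\leq 2\delta$. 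The point you dismiss --- that criticality forbids gaining a power of $T$ --- is true for the full nonlinearity but false for its low-frequency part: decomposing $\r^2 u_j$ into terms containing a factor $P_{>N}\r$ (small, handled by Lemma \ref{NonlinEst} together with the $2\delta$ constraint) and the term $(P_{\leq N}\r)^2u_j$, the latter is estimated in $L^1_tH^1_x$ via Remark \ref{useful}(3), which \emph{does} produce a factor of $T$, at the price of powers of $N$ supplied by Proposition \ref{rhos} (e.g. $\|P_{\leq N}\r\|_{L^{\infty}_{t,x}}\lesssim N\|P_{\leq N}\uu\|^2_{X^1_{\lu}}$). One closes by choosing $\delta$ small depending on $A$, and then $T$ small depending on $N$, $A$, $\delta$ --- which is precisely why $T$ depends on the profile of $\uu_0$ (through $N$) rather than merely on its norm. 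Without this step, the first sentence of the theorem is not proved.
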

As in the case of the quintic NLS equation, the time of existence depends on the function itself, and global well-posedness holds for initial data with sufficiently small $H^1_{\lu}$ norm. 

We will use some of the same tools as were used in the cubic case, namely establishing multilinear estimates using frequency decompositions and the Strichartz estimate \ref{Strich1}. However, following \cite{kv} and \cite{HTaTzv}, we will use the function spaces $X^s$ and $Y^s$ in our analysis, similar to the $X^{s,b}$ spaces, as they are well-suited the study of the energy-critical system.


\section{Relevant Function Spaces and Their Properties}

The definitions of the  $X^s$ and $Y^s$ spaces are based on underlying $U^p$ and $V^p$ spaces. We present an overview of this construction, and state some of the properties of these function spaces that we require for our analysis. For a thorough treatment of these spaces, we refer the interested reader to \cite{HTaTzv}.

We construct the $X^s$ and $Y^s$ on finite time intervals, and as in the previous chapter, our norms will be restriction norms on the given time interval. Let $H$ be a separable Hilbert space over $\mathbb{C}$, and $[0,T]$  a finite time interval. Let $\mathscr{T}$ be the set of partitions of the interval $[0,T],$ that is, $\displaystyle{\{t_j\} _{j=0}^M} \in \mathscr{T}$ whenever $0 = t_0 < t_1 < ...< t_M \leq T$ for some finite $M.$ For functions $u: [0,T) \rightarrow H,$ we define $u(T):=0$ at the endpoint of the interval. 


\begin{defin}
 A $U^p$-atom, $1 \leq p < \infty$ is a function $a:[0,T) \rightarrow H$ of the form
\begin{equation*}
a = \sum_{j=1}^M \chi_{[t_{j-1},t_j)}\phi_{j-1}
\end{equation*}
where $M < \infty,$ $\displaystyle{\{t_j\}_{j=0}^M} \in \mathscr{T}$ and the sequence $\{\phi_j\} \subset H$ satisfies $\sum_{j=0}^M \| \phi_j\|_H^p =1.$ Define $U^p([0,T);H)$ to be the space of all functions that may be represented in the form
\begin{equation*}
u = \sum_{k = 1}^\infty \mu_k a_k
\end{equation*}
where $\{\mu_k\} \in \ell^1(\mathbb{C})$ and  $\{a_k\}$ are $U^p$-atoms. $U^p$ is a Banach space with the norm
\begin{equation*}
\|u\|_{U^p([0,T);H)} :=\inf \Big\{\sum_{k=1}^{\infty}|\mu_k| \big\arrowvert u = \sum_{k= 1}^\infty \mu_k a_k \text{  with } \{\mu_k\} \in \ell^1(\mathbb{C}) \text{ and } U^p\text{-atoms } a_k \Big\}
\end{equation*}
\end{defin}
\begin{defin}
The space $V^p([0,T);H),$ $1 \leq p < \infty,$ is the space of all functions $v:[0,T) \rightarrow H$ such that
\begin{equation*}
\| v\|_{V^p([0,T);H)} := \sup_{ \{t_k\} \in \mathscr{T}}\Big(\sum_{k=1}^M \| v(t_k) - v(t_{k-1})\|_H^p \Big)^{1/p} < \infty .
\end{equation*} 
Define $V_{rc}^p$ to be the closed subspace of $V^p$ consisting of right-continuous functions $v(t)$ such that $v(0) = 0.$  $V^p_{rc}$ is a Banach space under the above norm.
\end{defin}


\begin{defin}
For $s \in \R,$ we define the spaces $X^s([0,T))$ and $Y^s([0,T))$ as the spaces of all functions $u: [0,T) \rightarrow H^s(\tor^d)$ such that  for every $\xi \in \mathbb{Z}^d,$  $e^{it|\xi|^2}\widehat{u(t)}(\xi)$ is in $U^2([0,T);\C 
	)$ and $V^2_{rc} ([0,T);\C
	),$  respectively, with finite norms
\eqnn
\|u\|_{X^s([0,T))}:=&\left( \displaystyle{\sum_{\xi \in \mathbb{Z}^d}}\la \xi \ra^{2s}\|\widehat{e^{-it\D}u(t)}(\xi) \|^2_{U^2}\right)^{1/2}\\
\|u\|_{Y^s([0,T))}:=&\left(\displaystyle{ \sum_{\xi \in \mathbb{Z}^d}}\la \xi \ra^{2s}\|\widehat{e^{-it\D}u(t)}(\xi) \|^2_{V^2}\right)^{1/2}.
\eeqnn
\end{defin}


\begin{rem}\label{useful}
 We record the following properties of $X^s$ and $Y^s$:
\begin{enumerate}[label = (\arabic*), labelindent=\parindent]
\item We have the continuous embeddings $X^s \hookrightarrow Y^s$ and $X^s \hookrightarrow C_tH^s_x$
\item  The $X^s$ and $Y^s$ spaces scale like $L^{\infty}_tH^s_x$ and have the same Fourier-based properties, including Bernstein inequalities and square summability.
\item Proposition 2.11 in \cite{HTaTzv} gives
\begin{equation*}
\left\Vert \int_0^t e^{i(t-t')\D}F(t')\,dt'\right\Vert_{X^s([0,T))} \lesssim \| F \|_{ L_t^1H_x^s([0,T) \times (\tor^3))}
\end{equation*}
\item For $p>\frac{10}{3}$ the Strichartz estimate \ref{Strich1} gives 
\begin{align}
\| P_{\leq N} u\|_{L^p_{t,x}([0,T) \times \tor^3)}& \lesssim N^{\frac{3}{2} - \frac{5}{p}} \|e^{-it\D} P_{\leq N} u \|_{U^p\left([0,T); L^2(\tor^3)\right)}\notag\\
& \lesssim  N^{\frac{3}{2} - \frac{5}{p}} \|P_{\leq N} u \|_{Y^0([0,T))}.\label{Strich2}
\end{align}
\end{enumerate}
\end{rem}


\section{Nonlinear Estimates}
We begin this section by stating the following proposition from \cite{HTaTzv}, which allows us to estimate the nonlinear term of the Duhamel formula using a dual formulation.
\begin{prop}\label{Dual}
Let $s \geq 0$ and $T > 0.$ If $F(t,x) \in L^1_tH^s_x([0,T) \times \tor^3),$ then $ \int_0^t e^{i(t-t')\D}F(t')\,dt' \in X^s([0,T)),$ and 
\begin{equation*}
\left\Vert \int_0^t e^{i(t-t')\D}F(t')\,dt'\right\Vert_{X^s([0,T))} \leq \sup_{v \in Y^{-s}([0,T)), \|v\|_{Y^{-s}}=1}\left| \int_0^T\int_{\tor^3} F(t,x)\ov{v(t,x)}\,dxdt\right|. \label{XsDual}
\end{equation*}
\end{prop}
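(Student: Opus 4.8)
The plan is to combine the mapping property of the Duhamel operator already recorded in Remark~\ref{useful}(3) with the duality between the $U^2$-based space $X^s$ and the $V^2$-based space $Y^{-s}$ established in \cite{HTaTzv}. Writing $u(t):=\int_0^t e^{i(t-t')\D}F(t')\,dt'$, the membership $u \in X^s([0,T))$ is immediate from Remark~\ref{useful}(3): since $F \in L^1_t H^s_x$, that estimate gives $\|u\|_{X^s} \lesssim \|F\|_{L^1_t H^s_x} < \infty$. It therefore remains only to replace this $\lesssim$ bound by the sharp dual characterization of $\|u\|_{X^s}$, and for this I would pass through the abstract $U^2$--$V^2$ duality frequency by frequency.

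First I would recall the bilinear duality for the atomic spaces from \cite{HTaTzv}: for each fixed $\xi \in \Z^3$, the $U^2$ norm of the profile $w_\xi(t):=\widehat{e^{-it\D}u(t)}(\xi)$ is realized as a supremum of a Riemann--Stieltjes-type pairing $B(w_\xi, z_\xi)$ taken over $z_\xi$ in the unit ball of $V^2$. Since the $X^s$ norm is by definition the weighted $\ell^2_\xi$ sum of the $U^2$ norms of the $w_\xi$, I would then combine these frequency-localized dualities with the Cauchy--Schwarz inequality in $\xi$, distributing the weights $\la \xi \ra^{s}$ onto $u$ and $\la \xi \ra^{-s}$ onto the dual variable. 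In this way $\|u\|_{X^s}$ is expressed as the supremum of a single, weight-free bilinear form $\sum_\xi B(w_\xi, z_\xi)$ taken over all $v$ whose profiles $z_\xi = \widehat{e^{-it\D}v(t)}(\xi)$ assemble into a function in the unit ball of $Y^{-s}$.

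The decisive step is to identify this abstract pairing with the concrete spacetime integral $\int_0^T\int_{\tor^3} F \,\ov v \,dx\,dt$. Using that $u$ is the Duhamel solution of $(\pd_t - i\D)u = F$ with $u(0)=0$, I would unwind the definition of $B$ on step-function approximants of $v$ (justified by density of smooth functions and the right-continuity built into $V^2_{rc}$), and recognize the telescoping Riemann--Stieltjes sum as the time integral of $\la F(t), v(t) \ra_{L^2_x}$: the $e^{it\D}$ and $e^{-it\D}$ factors cancel inside the pairing, leaving exactly $\int_0^T \int_{\tor^3} F \,\ov v \,dx\,dt$. Taking the supremum over the unit ball of $Y^{-s}$ then yields the claimed inequality.

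The main obstacle I anticipate is precisely this last identification: matching the abstract $U^2$--$V^2$ bilinear form with the spacetime integral requires an approximation argument in time, control of the boundary contribution at $t=T$ via the convention $u(T):=0$, and verification that the per-frequency near-optimal dual profiles assemble into a genuine element of $Y^{-s}$ of controlled norm rather than merely an abstract point in a product of unit balls. The weighted $\ell^2_\xi$ recombination is routine, but ensuring the supremum is attained over honest $Y^{-s}$ functions is the delicate point, and is exactly where the $U^2$--$V^2$ duality theorem of \cite{HTaTzv} does the heavy lifting.
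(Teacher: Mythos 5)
The paper does not actually prove this proposition: it is quoted from \cite{HTaTzv} (Proposition 2.11 there), so there is no in-paper argument to compare against. Your proposal essentially reconstructs the proof in \cite{HTaTzv}: frequency-by-frequency $U^2$--$V^2$ duality applied to the profiles $\widehat{e^{-it\D}u(t)}(\xi)$, whose time derivative is $\widehat{e^{-it\D}F(t)}(\xi)$ by the Duhamel formula; the weighted $\ell^2_\xi$ assembly of near-optimal dual profiles into a single element of the unit ball of $Y^{-s}$; and the cancellation of the propagator phases in the Plancherel pairing, which turns the abstract bilinear form into $\int_0^T\int_{\tor^3}F\,\ov{v}\,dx\,dt$. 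That is the right skeleton, and your list of delicate points (step-function approximation, the convention $u(T)=0$, and the assembly of per-frequency dual profiles into an honest $Y^{-s}$ function) is accurate.

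However, your opening step is circular. You invoke Remark \ref{useful}(3) to conclude $u \in X^s([0,T))$, but Remark \ref{useful}(3) is itself attributed to Proposition 2.11 of \cite{HTaTzv} --- that is, to the very proposition you are proving. Indeed, the $L^1_tH^s_x$ bound in that remark is a \emph{corollary} of the dual estimate: one bounds $\left|\int_0^T\int_{\tor^3} F\,\ov{v}\,dx\,dt\right| \leq \|F\|_{L^1_tH^s_x}\|v\|_{L^\infty_tH^{-s}_x} \lesssim \|F\|_{L^1_tH^s_x}\|v\|_{Y^{-s}}$, using the embedding of $V^2$ into bounded functions of time. The repair comes for free from your own argument: the frequency-wise duality identity of \cite{HTaTzv} holds for absolutely continuous profiles irrespective of whether the $U^2$ norm is known to be finite in advance, so once the assembled pairing is bounded by the right-hand side of the proposition --- which is finite by exactly the $L^\infty_tH^{-s}_x$ embedding just mentioned --- membership in $X^s$ follows as a conclusion rather than being needed as an input. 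You should restructure the write-up so that membership is an output of the duality computation, not a premise imported from Remark \ref{useful}(3).
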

 We will use this dual formulation, combined with a frequency decomposition argument similar to the argument in Chapter 2 to prove the following lemma:
 
\begin{lemma}\label{NonlinEst}
For $\lu \in \ell^1_+$ and a fixed value of $T$ satisfying $0< T \leq 1,$ there is a constant $C > 0$ (which does not depend on $T$) such that for any quintuple $u^{(j)}\in X^1([0,T))$, $j=1,\dots,5$,
\begin{equation}\label{eq-X1-bd-1}
\|  \int_0^t e^{i(t-s)\D}\Big(\prod_{j=1}^5 u^{(j)}(s)\Big) ds\|_{X^1([0,T])} \leq C \prod_{j=1}^5\|u^{(j)}\|_{X^1([0,T])}  .
\end{equation}
In particular,
\begin{equation}\label{eq-X1-bd-2}
\|  \int_0^t e^{i(t-s)\D}\r^2u_j(s) ds\|_{X^1([0,T])} \leq C \|\uu\|^4_{X^1_{\lu}([0,T])} \| u_j \|_{X^1([0,T])}.
\end{equation}
\end{lemma}
\prf
Fix $N \geq 1,$ and note $P_{\leq N}[ \r^2u_j] \in L^1([0,T];H^1(\mathbb{T}^3)).$ By duality, we have from \cite{HTaTzv}
\begin{multline*}
\left\Vert \int_0^t e^{i(t-s)\D} P_{\leq N}[ \r^2u_j(s)]\,ds\right\Vert_{X^1([0,T])}\\
\leq \sup_{\| \tilde{v} \|_{Y^{-1}([0,T])}=1}\left\vert\int_0^T \int_{\mathbb{T}^3} P_{\leq N}[ \r^2u_j(t,x)]\overline{\tilde{v}(t,x)}dxdt\right\vert.
\end{multline*}
 Letting $v:=\overline{P_{\leq N}\tilde{v}},$  
\begin{equation}
\left\vert\int_0^T \int_{\mathbb{T}^3} \r^2u_j v\,dxdt\right\vert \leq \sum_{k,l} \l_k \l_l \left\vert \int_0^T \int_{\mathbb{T}^3}|u_k|^2|u_l|^2u_j v\,dxdt\right\vert
\end{equation}
Observe that our problem reduces to finding an upper bound for the double integral on the right hand side for fixed $k$ and $l.$ To that end, for $i \in \{1,2,3,4,5\}$, let $u^{(i)}$ be one of the collection $\{u_k, \overline{u_k}, u_l, \overline{u_l}, u_j\}$ so that the list is exhausted as $i$ varies from $1$ to $5.$ We write each factor as a sum of dyadic frequency projections, that is,
\begin{equation*}
\left\vert \int_0^T \int_{\mathbb{T}^3}|u_k|^2|u_l|^2u_j v\,dxdt\right\vert \leq \sum_{\mathcal{N}}\left\vert\int_0^T\int_{\tor^3} \vno \unf\uns\unt\unfo\unj \,dxdt\right\vert
\end{equation*}
Where $\mathcal{N} = \big\{N_{i} \in 2^{\N}, \text{ for } i \in \{0,1,2,3,4,5\}\big\}$

Note that the integral on the right hand side above is only nonzero when the two largest frequencies are comparable. By this fact and symmetry, we may break $\mathcal{N}$ into two cases where, the two largest frequencies are $N_0\sim N_5$ and $N_5 \sim N_1.$  In the analysis of each case, we adopt the abbreviated notations $\|\cdot\|_p:=\| \cdot\|_{L^p_{t,x}([0,T],\tor^3)},$ $\| \cdot \|_{Y^s} :=\| \cdot \|_{Y^s([0,T])},$ and we use a similar abbreviation for the $X^s$ norm.


\emph{Case 1:} $\sN_1 =\{N_0 \sim N_5 \geq N_1\geq N_2 \geq N_3 \geq N_4\}\cap \sN.$ Subdivide $\Z^3$ into cubes $\cC_m$ of size $N_1,$ and write $\cC_m \sim \cC_n$ if the set $\cC_m +\cC_n$ overlaps the Fourier support of $P_{\leq 2 N_1}.$ Note that here are a bounded number of $\cC_m \sim\cC_n$ for a given $\cC_n.$ Using H\"older's inequality, Strichartz estimates, and Bernstein's inequalities, we have
\begin{align}
\sum_{\sN_1}&\int_0^T\int_{\tor^3}\left\vert \vno \unj \unf \uns\unt\unfo \right\vert\,dxdt \label{CaseQN1}\\
&\lesssim\sum_{\sN_1}\, \sum_{\cC_m\sim \cC_n} \| P_{\cC_m}\vno \|_{4} \|P_{\cC_n}\unj\|_{4} \| \unf \|_{4} \| \uns \|_{4} \| \unt \|_{\infty} \| \unfo \|_{\infty}\notag \\
&\lesssim \sum_{\sN_1}\, \sum_{\cC_m\sim \cC_n} N_1^{3/4}N_2^{1/4}N_3^{3/2}N_4^{3/2} \| P_{\cC_m}\vno \|_{Y^0} \|P_{\cC_n}\unj \|_{Y^0}\displaystyle{ \prod_{i=1}^{4}} \|u^{(i)}_{N_i} \|_{Y^0}\notag\\
& \lesssim\sum_{\sN_1}\, \sum_{\cC_m\sim\cC_n} \frac{N_0 N_3^{1/2}N_4^{1/2}}{N_5 N_1^{1/4}N_2^{3/4} } \| P_{\cC_m}\vno \|_{Y^{-1}} \|P_{\cC_n}\unj\|_{Y^1}\displaystyle{ \prod_{i=1}^{4}} \|u^{(i)}_{N_i} \|_{Y^1}\label{CaseQN1sum1}
\end{align}
We apply the Cauchy-Schwarz inequality, then sum on $N_4$ for $N_4 \leq N_3.$ We then repeat this process for $N_3 \leq N_2$ to see that \eqref{CaseQN1sum1} is controlled by
\begin{multline*}
 \|\ut \|_{Y^1} \|\ufo \|_{Y^1}\left( \sum_{N_0 \sim N_5}\, \sum_{\cC_m\sim \cC_n} \| P_{\cC_m}\vno \|_{Y^{-1}}\|P_{\cC_n}\unj\|_{Y^1} \right)\\
\times\left(\sum_{N_1 \geq N_2}\left(\frac{N_2}{N_1} \right)^{1/4} \| \unf \|_{Y^1} \| \uns \|_{Y^1}\right)
\end{multline*}
Using Cauchy-Schwarz to find an upper bound for each of the sums, we first sum on the set $N_2 \leq N_1 \leq N_5,$ then on the set $\cC_m \sim \cC_n,$ and finally on the set $N_5\sim N_0.$ We find that the previous expression is bounded above by  
\begin{equation*}
C\| v \|_{Y^{-1}} \|u_j \|_{Y^1}\displaystyle{\prod_{i=1}^4}\|u^{(i)}\|_{Y^1},
\end{equation*}
for some constant $C > 0.$ The embedding $X^1 \hookrightarrow Y^1$ proves that 
\begin{equation}
\sum_{\sN_1}\int_0^T\int_{\tor^3}\left\vert \vno \unf \uns\unt\unfo\unj  \right\vert\,dxdt \leq 
C\| v \|_{Y^{-1}} \prod_{j=1}^5\|u^{(j)}\|_{X^1}
\label{CaseQn1sum2}
\end{equation}
The implicit constant arises from the use of the Strichartz estimates, Bernstein inequalities, and the embbedding  $X^1 \hookrightarrow Y^1,$ thus is independent of $T.$


\emph{Case 2:} $\sN_2:=\left\{N_0 \leq N_5 \sim N_1 \geq N_2 \geq N_3 \geq N_4\right\}\cap \sN.$ In this case, it is not necessary to subivide into cubes. From H\"older's inequality, Strichartz estimates, and Bernstein's inequality, we find 
\begin{align}
\sum_{\sN_2}&\int_0^T\int_{\tor^3}\left\vert \vno \unj \unf \uns\unt\unfo \right\vert\,dxdt \label{CaseQN2}\\
&\sum_{\sN_2} \| \vno \|_{4} \|\unj\|_{4} \| \unf \|_{4} \| \uns \|_{4} \| \unt \|_{\infty} \| \unfo  \|_{\infty} \notag\\
&\lesssim \sum_{\sN_2}\, \left(N_0 N_5 N_1 N_2\right)^{\frac{1}{4}}\left(N_3 N_4 \right)^{\frac{3}{2}} \|\vno \|_{Y^0} \|\unj\|_{Y^0}\displaystyle{ \prod_{i=1}^{4}} \|u^{(i)}_{N_i} \|_{Y^0}\notag\\
&\lesssim \sum_{\sN_2}\, \frac{N_0^{\frac{5}{4}}N_3^{\frac{1}{2}}N_4^{\frac{1}{2}}}{N_5^{\frac{3}{4}}N_1^{\frac{3}{4}}N_2^{\frac{3}{4}}}\|\vno \|_{Y^{-1}}\|\unj\|_{Y^1} \prod_{i=1}^{4} \|u_{N_i}^{(i)} \|_{Y^1}\label{CaseQN2sum1}
\end{align}
Using Cauchy-Schwarz for each sum, we sum in the order $N_4 \leq N_3,$ $N_3 \leq N_2,$ $N_2 \leq N_1,$ and $N_0 \leq N_5.$ Thus there is a constant $C >0$ such that \eqref{CaseQN2sum1} is bounded above by
\begin{align*}
C \|v \|_{Y^{-1}}&\| \us \|_{Y^1}\| \ut \|_{Y^1} \| \ufo \|_{Y^1}\displaystyle{\sum_ {N_5 \sim N_1}}\frac{N_5^{\frac{1}{2}}}{N_1^{\frac{1}{2}}} \|\unj\|_{Y^1}\| \unf  \|_{Y^1}\\
&\lesssim \|v \|_{Y^{-1}} \|u^{(5)} \|_{Y^1} \displaystyle{\prod_{i=1}^4}\|u^{(i)}\|_{Y^1}.
\end{align*}
Finally, we again use the embedding  $X^s \hookrightarrow Y^s$ to conclude
\begin{eqnarray}\label{CaseQN2sum2} 
	\sum_{\sN_2}\int_0^T\int_{\tor^3}\left\vert \vno  \unf \uns\unt\unfo\unj \right\vert\,dxdt 
	\leq
	C\| v \|_{Y^{-1}} \prod_{j=1}^5\|u^{(j)}\|_{X^1},
\end{eqnarray}
where the implicit constant $C > 0$ arises in the same way as in Case 1.
Together, the bounds \eqref{CaseQn1sum2} and \eqref{CaseQN2sum2} yield
\begin{eqnarray}
	\left\vert \int_0^T \int_{\mathbb{T}^3}v\Big(\prod_{j=1}^5 u^{(j)}(t)\Big)\,dxdt\right\vert 
 	\leq  
	C\| v \|_{Y^{-1}} \prod_{j=1}^5\|u^{(j)}\|_{X^1}  \,.
\end{eqnarray}
Recalling that $v=\overline{P_{\leq N}\tilde{v}}$ where $\|\tilde{v}\|_{Y^{-1}}=1,$ and letting $N \rightarrow \infty$, 
we infer that the asserted bound \eqref{eq-X1-bd-1} holds.

Recalling that $u^{(i)}$, $i=1,\dots,5$, enumerates the collection $\{u_k, \overline{u_k}, u_l, \overline{u_l}, u_j\}$, we have
\begin{eqnarray}
	\left\| \int_0^T \int_{\mathbb{T}^3}|u_k|^2|u_l|^2u_j\,dxdt\right\|_{X^1} 
 	\leq 
	 C \|u_j \|_{X^1} \|u_k\|^2_{X^1} \| u_l\|^2_{X^1} \,.
\end{eqnarray}
Multiplying the above inequality by $\l _k \l _j$ and summing on $k,l$, we obtain
\begin{equation*}
\left\|\int_0^T \int_{\mathbb{T}^3} \r^2u_j \,dxdt\right\|_{X^1}  \leq C  \| \uu\|^4_{X^1_{\lu}} \|u_j \|_{X^1}
\end{equation*}
where $C >0 $ does not depend on time. This proves \eqref{eq-X1-bd-2}.

\endprf
 

\section{Proof of Main Result for the Quintic Case}
\prf[Proof of Theorem \ref{QuinMainThm}]
\label{sec-9}


We first show local well-posedness for small initial data.
 The Duhamel formula for the $j$-th equation in the quintic NLS system is given by
\begin{equation*}
\Phi_j(\uu)(t) = e^{it\D}u_{0_j} - i \s \int_0^t e^{i(t-t')\D}\r^2u_j(t')\, dt'.
\end{equation*}
Define the map $\Phi(\uu) := \{ \Phi_j(\uu) \}_{j=1}^\infty.$ Suppose $\|\uu_0\|_{H^1_{\lu}(\mathbb{T}^3)} \leq \eta$ for some small $\eta$ to be chosen later. We will show that $\Phi$ is a contraction on the ball
\begin{equation*}
B_1:=\Big\{\uu \in X^1_{\lu}([0,1]) \cap C_t H^1_{\lu}([0,1]\times \mathbb{T}^3) \, \Big| \, \|\uu\|_{X^1_{\lu}([0,1])}\leq 2\eta \Big\}
\end{equation*}
under the $X^1_{\lu}([0,1])$ norm. As we proceed, each $X^1$ and $X^1_{\lu}$ norm will be over the interval $[0,1]$ and each $H^1$ and $H^1_{\lu}$ norm will be over $\mathbb{T}^3.$

By Lemma \ref{NonlinEst} we have 
\begin{equation*}
\| \Phi_j(\uu) \|_{X^1} \leq \|u_{0_j}\|_{H^1} + C \|\uu\|^4_{X^1_{\lu}} \| u_j \|_{X^1},
\end{equation*}
square each side, multiply by $\l_j,$ sum on $j,$ and take the square root to find 
\begin{equation}\label{contract}
\|\Phi (\uu) \|_{X^1_{\lu}} \leq \sqrt{2}\|\uu_0\|_{H^1_{\lu}} + C \|\uu\|^5_{X^1_{\lu}} 
\end{equation}
For $\uu \in B_1,$ we have
$\|\Phi (\uu) \|_{X^1_{\lu}} \leq \sqrt{2}\eta + C(2\eta)^5 $.
The right hand side is bounded by $2\eta$ if $\eta$ is sufficiently small, thus $\Phi$ maps the ball $B_1$ to itself.

Next we show that $\Phi$ is a contraction on $B_1.$ Let $\uu, \vu \in B_1$ and consider $\| \Phi(\uu-\vu) \|_{X^1_{\lu}}.$ We use arguments similar to those leading to equation \eqref{CubicContract} in the cubic case, to show
\begin{equation}\label{QuinticContract}
\|\Phi_j(\uu - \vu) \|_{X^1} \lesssim \big( \| \uu \|_{X^1_{\lu}} + \| \vu \|_{X^1_{\lu}} \big)^3\| \uu - \vu \|_{X^1_{\lu}}\| u_j \|_{X^1} +  \| \vu \|^4_{X^1_{\lu}} \|u_j - v_j \|_{X^1}
\end{equation}
We then square the above estimate, multiply by $\l_j$ and take the square root to find
\begin{equation*}
\|\Phi (\uu - \vu) \|_{X^1_{\lu}}  \leq  C\| \uu - \vu \|_{X^1_{\lu}}\left( \| \uu\|_{X^1_{\lu}}+  \| \vu \|_{X^1_{\lu}} \right)^4
\end{equation*}
Thus, for $\uu, \vu \in B_1,$ we have
\begin{align*}
\|\Phi (\uu - \vu) \|_{X^1_{\lu}}&\leq C \| \uu - \vu \|_{X^1_{\lu}}\left(4\eta\right)^4\\
& \leq \frac{1}{2} \| \uu -\vu \|_{X^1_{\lu}}
\end{align*}
for $\eta$ sufficiently small. By the contraction mapping principle, we obtain a solution $\uu$ on the time interval $[0,1].$


The global well-posedness for the case of small initial data is obtained from the conserved mass and energy for the energy-critical NLS system:
\begin{equation*}
\mathcal{M}(\uu) = \| \uu \|^2_{L^2_{\lu}(\tor^3)} \qquad
\mathcal{E}(\uu) = \frac{1}{2} \sum_{j}\big( \l_j \int_{\mathbb{T}^3} \vert  \n u_j \vert^2\,dx \big)  + \s \frac{1}{6}\int_{\mathbb{T}^3}\r^3\,dx 
\end{equation*}
In the defocusing case, $\s = 1,$ we may expand $\r_0$ and apply the Sobolev embedding $H^1 \hookrightarrow L^6,$  to find 
\begin{align*}
\| \uu(t) \|^2_{H^1_{\lu}} &\leq \mathcal{M}(\uu(t)) + 2\mathcal{E}(\uu(t)) =\mathcal{M}(\uu_0) +2\mathcal{E}(\uu_0) \\
&\leq \| \uu_0 \|^2_{H^1_{\lu}} + \frac{1}{3}C\| \uu_0 \|^6_{H^1_{\lu}}
\end{align*}
 For $\| \uu_0 \|_{H^1_{\lu}}$ sufficiently small, we may ensure $\| \uu(t) \|_{H^1_{\lu}} \leq \eta$ throughout its time of existence. We may continue iterating the previous local well-posedness arguments indefinitely to obtain global well-posedness.
 
For global in time solutions in the focusing case, $\s = -1,$ we again use the conservation of mass and energy, combined with a continuity argument as follows. First, we observe
\begin{equation}
\| \uu(t) \|^2_{H^1_{\lu}} = \mathcal{M}(\uu(0)) + 2\mathcal{E}(\uu(0)) + \frac{1}{3}\| \r(t) \|^3_{L^3_{\lu}}
\end{equation}
Expand $\r(t)$ and again use Sobolev embedding to obtain the inequality that we will use for the continuity argument:
\begin{equation}
\| \uu(t) \|^2_{H^1_{\lu}} \leq  \| \uu_0 \|^2_{H^1_{\lu}} + \frac{1}{3}C\| \uu_0 \|^6_{H^1_{\lu}} + \frac{1}{3}C\| \uu(t) \|^6_{H^1_{\lu}}
\end{equation}

Define $f(x) = x - (1/3)Cx^3$ so that we have $f(\|\uu(t)\|_{H^1_{\lu}}^2) 
\leq \| \uu_0 \|^2_{H^1_{\lu}} + (1/3)C\| \uu_0 \|^6_{H^1_{\lu}}$ on the time interval $[0,1].$ On the interval $I:=[0,C^{-1/2}]$ the function $f(x)$ increases from $0$ to a maximum value of $(2/3)C^{-1/2}$ and satisfies $f(x) \geq (2/3)x$ for all $x \in I.$ 

Set $\eta_0^2 = \min \{ (2/3)C^{-1/2}, (2/3)\eta^2 \},$ and consider initial data satisfying 
\[ \| \uu_0 \|^2_{H^1_{\lu}} + \frac{1}{3}C\| \uu_0 \|^6_{H^1_{\lu}} \leq \eta_0^2.\]
 We then have $f(\|\uu(t)\|_{H^1_{\lu}}^2)\leq  (2/3)C^{-1/2}.$ The continuity of $\|\uu(t)\|_{H^1_{\lu}}$ in $t$ implies $\|\uu(t)\|^2_{H^1_{\lu}} \in I$ for $t \in [0,1],$ so that 
 \[ \|\uu(t)\|^2_{H^1_{\lu}}\, \leq \, \frac{3}{2}f(\|\uu(t)\|^2_{H^1_{\lu}})\,\leq \,\frac{3}{2}\eta_0^2\,\leq \,\eta^2\] 
 for all $t \in [0,1].$ Therefore we may iterate the local well-posedness argument to obtain global well-posedness for sufficiently small initial data. 
 

We now turn to the task of showing local well-posedness for large initial data.
Let $\Vert\uu_0 \Vert_{H^1_{\lu}(\tor^3)} \leq A$ for some $0 < A < \infty.$ Let $\delta = \delta(A) >0$ (to be chosen later) and $N = N(\uu_0,\delta) \geq 1$ such that $\| P_{>N}\uu_0\|_{H^1_{\lu}(\tor^3)} \leq \delta.$

For some $T = T(\uu_0),$ the mapping $\Phi(\uu)$ is a contraction on the ball
\begin{equation*}
B_2:=\Big\{\uu \in X^1_{\lu}([0,T))\cap C^1_t H^1_{\lu}\left([0,T) \times \tor^3\right) \Big| \|\uu\|_{X^1_{\lu}[0,T)}\leq 2A,\, \|P_{>N}\uu\|_{X^1_{\lu}[0,T)}\leq 2\delta \Big\}
\end{equation*}
under the $X^1_{\lu}$-norm. In what follows, norms in time will be taken over the interval $[0,T)$ and norms in space are on the domain $\tor^3.$ We use $C$ to denote any positive constant which does not depend on $T.$


To prove that $\Phi$ maps $B_2$ to itself, we write
\begin{equation}
	\Phi_{NL}(\uu) := 	-\sigma\int_0^t e^{i(t-t')\D}\r^2\uu(t')
\end{equation}
for its nonlinear part, and observe that
\begin{eqnarray}
\|P_{>N}\Phi(\uu)\|_{X^1_{\lu}} &\leq& \sqrt2\|P_{>N}\uu_{0}\|_{X^1_{\lu}} + \sqrt2\|P_{>N}\Phi_{NL}(\uu)\|_{X^1_{\lu}}
	\nonumber\\
	&\leq&\sqrt2\eta + \sqrt2 \|\Phi_{NL}(\uu)\|_{X^1_{\lu}}.
\end{eqnarray}
Clearly, $\Phi_{NL}(\uu)$ is quintic in $\uu=P_{\leq N}\uu + P_{>N}\uu$, and we decompose it into
\begin{eqnarray}
	\Phi_{NL}(\uu) = \Phi_{NL}^{(1)}(P_{\leq N}\uu,P_{>N}\uu)+\Phi_{NL}^{(2)}(P_{\leq N}\uu, P_{>N}\uu)
\end{eqnarray}
where $\Phi_{NL}^{(1)}$ is at least quadratic in $P_{>N}\uu$, and $\Phi_{NL}^{(2)}$ is at least quartic in $P_{\leq N}\uu$. Then, \eqref{eq-X1-bd-1} and the argument used to obtain \eqref{eq-X1-bd-2} imply that
\begin{eqnarray}
	\|\Phi_{NL}^{(1)}(P_{\leq N}\uu,P_{>N}\uu)\|_{X^1_{\lu}} 
	&\leq&
	C \|\uu\|_{X^1_{\lu}}^3\|P_{>N}\uu\|_{X^1_{\lu}}^2
	\nonumber\\
	&\leq&
	C_1 A^3 \eta^2 \,.
\end{eqnarray}
To bound $\Phi_{NL}^{(2)}$, we use the notation
\begin{equation}
		\|\uu\|_{L^q_t L^p_{\lu}} := \Big(\int_0^t ds \Big( \sum_j \lambda_j \|u_j(s)\|_{L^p_x}^2\Big)^{\frac q2}\Big)^{\frac1q} \,.
\end{equation}
Then, applying H\"older, we get
\begin{eqnarray}
	\lefteqn{
	\|\Phi_{NL}^{(2)}(P_{\leq N}\uu,P_{>N}\uu)\|_{X^1_{\lu}} 
	}
	\nonumber\\
	&\leq&
	C_1 \|\uu\|_{L^\infty_t H^1_{\lu}}\|P_{\leq N }\uu\|_{L^4_t L^\infty_{\lu}}^4 
	+C_1 N \|\uu\|_{L^\infty_t L^6_{\lu}} \|P_{\leq N }u\|_{L^4_t L^{12}_{\lu}}^4
\end{eqnarray}
where the first term on the r.h.s. bounds the expression obtained from the derivative in the definition of $X^1_{\lu}$ acting on $P_{>N}\uu$, and the second term from it acting on $P_{\leq N}\uu$. Using $\|P_{\leq N }\uu\|_{L^4_t L^\infty_{\lu}}^4 \leq \|P_{\leq N }\widehat\uu\|_{L^4_t L^1_{\lu}}^4 \leq C TN^2\|\uu\|_{L^\infty_t H^1_{\lu}}^4$, together with $\|P_{\leq N }u\|_{L^4_t L^{12}_{\lu}}^4\leq CT\|P_{\leq N }\uu\|_{L^\infty_t H_{\lu}^{\frac54}}^4\leq CTN \|\uu\|_{L^\infty_t H_{\lu}^{1}}^4$, and Sobolev embedding, this is bounded by 
\begin{eqnarray}  
	\|\Phi_{NL}^{(2)}(P_{\leq N}\uu,P_{>N}\uu)\|_{X^1_{\lu}} 
	\leq
	C T N^2  \|\uu\|_{L^\infty_t H^1_{\lu}}^5 
	\leq C_2 T N^2 A^5 \,.
\end{eqnarray}

To show that $\Phi$ is a contraction on $B_2,$ let $\vu \in B_2,$.
Then, similarly as above, one shows that
\begin{eqnarray}
	\|\Phi_{NL}^{(1)}(P_{\leq N}\uu,P_{>N}\uu) - 
	\Phi_{NL}^{(1)}(P_{\leq N}\vu,P_{>N}\vu)\|_{X^1_{\lu}} 
	&\leq& 
	C_1 A^3 \eta \|\uu-\vu\|_{X^1_{\lu}} \,.
\end{eqnarray}
Moreover, one obtains
\begin{eqnarray} 
	\|\Phi_{NL}^{(2)}(P_{\leq N}\uu,P_{>N}\uu)
	-\Phi_{NL}^{(2)}(P_{\leq N}\vu,P_{>N}\vu)\|_{X^1_{\lu}} 
	\leq C_1 T N^2 A^4  \|\uu-\vu\|_{X^1_{\lu}} \,.
\end{eqnarray}
Then, letting $0<T<\frac{\eta}{10 C_2 N^2 A^5}$, and choosing $\eta>0$ sufficiently small, it follows that $\Phi$ maps $B_2$ into itself, and is a strict contraction.  

While contraction mapping theorem gives a unique solution $\uu$ in $B_2,$ we must show that uniqueness holds in the larger space $X^1_{\lu}([0,T])\cap C^1_t H^1_{\lu}\left([0,T] \times \tor^3\right).$ Suppose that $\vu \in X^1_{\lu}([0,T])\cap C^1_t H^1_{\lu}\left([0,T] \times \tor^3\right)$ is a solution to the equation with $\vu(0) = \uu_0.$ There exists some $N' \geq 1$ such that $\| \vu\|_{X^1([0,T])} \leq 2\delta.$   If $N' > N,$ define a new ball $B'_2$ that contains both $\uu$ and $\vu$ and apply the contraction mapping argument to see that $\uu = \vu$ on a (possibly smaller) time interval $[0,T'].$ By repeating this argument, we achieve uniqueness in the larger space.
\endprf

\section{Stationary States of the NLSS on $\tor^3$}
\index{Stationary States of the NLSS on $\tor^3$%
@\emph{Stationary States of the NLSS on $\tor^3$}}%
\label{sec-10}


We now turn to the existence and nonlinear stability of stationary states of cubic and quintic NLS systems on three-dimensional flat tori. As in the previous chapters, the results hold for rectangular, rational and irrational tori. In this chapter we restrict ourselves to consider only the defocusing systems
\begin{equation}\label{DefNLSS}
\begin{cases}
& i\pd_t u_j = -\D u_j+ \r^{\a} u_j, \qquad j \in \mathbb{N}\\
&u_j(0,x) = u_{j,0}(x), \qquad x \in \tor^3,
\end{cases}
\end{equation}
where $\r:=\sum_{j \in \N} \l_j |u_j|^2$ for a given $\lu \in \ell^1_+,$ and $\a \in\{1,2\}.$ 

Stationary states $\{v_j\}_{j \in \N}$ are solutions to \eqref{DefNLSS} of the form
\begin{equation*}
v_j(t,x) = e^{-i\mu_j t}u_{j,0}(x)
\end{equation*}
where $\mu_j \in \R$ is the energy level of $u_{j,0}(x).$ As stated in the introductory chapter, the stationary states we find are minimizers of an energy-Casimir functional, which is the sum of the conserved energy and another function conserved by the flow of \eqref{DefNLSS}. 

In this chapter, we begin by defining Casimir-class functions and the stationary state equations corresponding to a Casimir-class function $f,$ then develop the definition and properties of the energy-Casimir functional $\sH_f$ determined by $f.$ Next, assuming the existence and uniqueness of the desired stationary states, we bound a nonlinear function of the distance between a stationary state and another solution to \eqref{DefNLSS} using the energy-Casimir functional. Finally, to prove the existence and uniqueness of the stationary states, we use the saddle point principle to find a dual functional to $\sH_f,$ for any Casimir-class $f,$ and use convexity theory to show that the dual functional has a unique maximizer. This maximizer corresponds to a stationary state of \eqref{DefNLSS} which minimizes $\sH_f.$

 
\section{Stationary states and Energy-Casimir Functionals}


 Define the state space for the NLSS as
\begin{align*}
\S=\Big\{ (\uu,\lu) \big\arrowvert\, &\uu = \{u_k\}_{k\in\mathbb{N}} \subset H^1(\tor^3) \text{ a complete orthonormal system in } L^2(\tor^3),\\
& \lu = \{\l_k\}_{k\in \mathbb{N}}\in \ell^1 \text{ with } \l_k \geq 0, \text{ and } \sum_{k=1}^{\infty} \l_k \| u_k \|_{H^1(\tor^3)}^2 < \infty \Big\}.
\end{align*}
In the previous chapters, we have shown that the defocusing cubic NLSS is globally well-posed in $\S,$ and for some $\eta > 0,$ the defocusing quintic NLSS is globally well-posed for initial data $(\uu_0, \lu) \in \S$ provided $\|\uu_0\|_{H^1_{\lu}} < \eta.$ 


\begin{defin}\label{def-Casimir-fct-1}
A function $f: \mathbb{R}\rightarrow\mathbb{R}$ is said to be of \textbf{Casimir class }  $\sC$ if it has the following properties:
\begin{enumerate}[label=(\roman*)]
\item $f$ is continuous, and there is $s_0 \in (0,\infty]$ such that $f(s) >0$ for $s \leq
 s_0$ and $f(s) =0$ for $s > s_0.$
 \item $f$ is strictly decreasing on $(-\infty, s_0]$ with $\lim_{s\rightarrow -\infty} f(s)=\infty.$
 \item there exist constants $\epsilon >0 $ and $C > 0$ such that 
 \begin{equation*}
 f(s) \leq C(1+s)^{(-5/2-\epsilon)}\text{ for }\quad s  \geq 0
\end{equation*}
\end{enumerate}
\end{defin}
An example of $f\in \sC$ with $s_0 = \infty$ is given by the Boltzmann distribution $f(s) = e^{-\beta s}$ for $\beta > 0.$


The stationary states that we seek are $(\uu_0,\lu_0) \in \S$ corresponding to a quadruple $(\uu_0,\lu_0,\muu_0, \r_0)$ with $\muu_0 = \{\mu_{0,k}\}_{k \in \mathbb{N}}\subset \R,$ and $\r_0 \in L^{\a +1}(\tor^3),$ such that for some $f \in \mathscr{C},$
\begin{equation}\label{StStates}
\left\{
\begin{aligned}
(-\D + \r_0^{\a}) u_{0,k} &=\mu_{0,k}u_{0,k}\; \text{ for all } k\in \mathbb{N}\\
\r_0 &=\sum_{k=1}^{\infty} \l_{0,k} | u_{0,k} |^2\\
\l_{0,k}&=f(\mu_{0,k}) \text{ for all } k \in \mathbb{N}
\end{aligned}
\right.
\end{equation}
where $\a = 1$ or $2$ throughout. The equation $\l_{0,k} = f(\mu_{0,k})$ demonstrates the role of the function $f \in \sC :$ a stationary state $u_{0,k}$ with energy $\mu_{0,k}$ has occupation probability $\l_{0,k}=f(\mu_{0,k}).$ We see that if $s_0$ is finite for $f \in \sC$, the NLSS is constrained to a finite number of occupied states. Thus we set $s_0 = \infty$ for the remainder of this chapter. 


The next proposition ensures that for any solution of the stationary state equations, $(\uu_0, \lu_0)$ is in the required state space, and $\r_0$ has the integrability required for the solution to have finite energy.
\begin{prop}\label{prop-fsum-conv-1}
Suppose the quadruple $(\uu_0,\lu_0,\muu_0, \r_0)$ satisfies the stationary state equations \eqref{StStates} with $f \in \sC,$ and  $\uu_0 = \{ u_{0,k}\}_{k=1}^{\infty}$ a complete orthonormal basis of $L^2(\tor^3).$ Then $\r_0 \in L^{\a + 1}(\tor^3),$ and $(\uu_0, \lu_0) \in \S.$
\end{prop}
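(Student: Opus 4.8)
The plan is to reduce all three conclusions---the $\ell^1$ summability of $\lu_0$, the finiteness of the total energy $\sum_k \l_{0,k}\|u_{0,k}\|_{H^1(\tor^3)}^2$, and the integrability $\r_0 \in L^{\a+1}(\tor^3)$---to two ingredients: a lower bound on the eigenvalues $\mu_{0,k}$ coming from the nonnegativity of the potential $\r_0^\a$, and the polynomial decay in property (iii) of the Casimir-class function $f$. Nonnegativity of the $\l_{0,k}$ is immediate, since $s_0=\infty$ forces $\l_{0,k}=f(\mu_{0,k})>0$ by property (i), so only the summability and integrability statements carry content.

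First I would bound the eigenvalues from below. Since $\r_0=\sum_k \l_{0,k}|u_{0,k}|^2\geq 0$, the potential $\r_0^\a$ is nonnegative, so by the min-max principle the $k$-th eigenvalue of $-\D+\r_0^\a$ dominates the $k$-th eigenvalue $\mu_k^{(0)}$ of $-\D$ on $\tor^3$; in particular $\mu_{0,k}\geq \mu_k^{(0)}\geq 0$, which also legitimizes applying (iii) to each $\mu_{0,k}$. The eigenvalues of $-\D$ are $4\pi^2 Q(\xi)$, $\xi\in\Z^3$, and counting lattice points in the ellipsoid $\{Q\leq \Lambda\}$ gives the Weyl asymptotic $\mu_k^{(0)}\gtrsim k^{2/3}$. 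Combining this with (iii), $\sum_k \l_{0,k}=\sum_k f(\mu_{0,k})\lesssim \sum_k (1+\mu_k^{(0)})^{-5/2-\e}\lesssim \sum_k k^{-5/3-2\e/3}<\infty$, which gives $\lu_0\in\ell^1$.

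For the energy, testing the eigenvalue equation against $u_{0,k}$ (which lies in the form domain $H^1(\tor^3)$, whence each $u_{0,k}\in H^1$) and discarding the nonnegative term $\int \r_0^\a|u_{0,k}|^2$ yields $\|\n u_{0,k}\|_{L^2(\tor^3)}^2\leq \mu_{0,k}$, hence $\|u_{0,k}\|_{H^1(\tor^3)}^2\leq 1+\mu_{0,k}$. Property (iii) then supplies exactly the one extra power needed for convergence: $\sum_k \l_{0,k}\|u_{0,k}\|_{H^1(\tor^3)}^2\lesssim \sum_k (1+\mu_{0,k})^{-3/2-\e}\lesssim \sum_k k^{-1-2\e/3}<\infty$, where I have again used $\mu_{0,k}\geq \mu_k^{(0)}$ and the Weyl bound. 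This is where the exponent $5/2+\e$ in (iii) is essential, and it is the step I expect to be the crux: the whole argument hinges on matching the decay rate of $f$ against the three-dimensional eigenvalue growth, and the lower bound $\mu_{0,k}\geq\mu_k^{(0)}$ is precisely what lets us sidestep any a priori knowledge of $\r_0$ when estimating the eigenvalues.

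Finally, the integrability of $\r_0$ follows from the finite energy. By the Sobolev embedding $H^1(\tor^3)\hookrightarrow L^6(\tor^3)$ each $|u_{0,k}|^2\in L^3(\tor^3)$, and the triangle inequality in $L^3$ gives $\|\r_0\|_{L^3(\tor^3)}\leq \sum_k \l_{0,k}\|u_{0,k}\|_{L^6(\tor^3)}^2\lesssim \sum_k \l_{0,k}\|u_{0,k}\|_{H^1(\tor^3)}^2<\infty$. Since $\tor^3$ has finite measure, $L^3(\tor^3)\hookrightarrow L^2(\tor^3)$, so $\r_0\in L^{\a+1}(\tor^3)$ in both cases $\a=1$ ($L^2$) and $\a=2$ ($L^3$). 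Together with $u_{0,k}\in H^1(\tor^3)$ for each $k$ and the two summability bounds above, this establishes $(\uu_0,\lu_0)\in\S$ and completes the proof.
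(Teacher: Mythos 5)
Your proof is correct, and it reaches the conclusions by a genuinely different route in one key step. Both arguments rest on the same two pillars: the comparison $\mu_{0,k} \geq \mu_{-\D,k} \gtrsim k^{2/3}$ (you get the eigenvalue growth by lattice-point counting, the paper cites the Li--Yau bound) and the decay hypothesis (iii) on $f$, which after absorbing one extra power of the eigenvalue still leaves a summable tail $k^{-1-2\e/3}$. The difference is in how $\r_0 \in L^{\a+1}(\tor^3)$ is extracted. The paper \emph{keeps} the potential term when testing the eigenvalue equation: weighting by $\l_{0,k}$ and summing over $k$, the term $\sum_k \l_{0,k}\int \r_0^{\a}|u_{0,k}|^2\,dx$ collapses, by the second equation of \eqref{StStates} and monotone convergence, to exactly $\int \r_0^{\a+1}\,dx$, so the single identity \eqref{SumConv} delivers the kinetic bound and the $L^{\a+1}$ bound simultaneously, with no Sobolev embedding. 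You instead \emph{discard} the potential term, bound $\|u_{0,k}\|_{H^1}^2 \leq 1+\mu_{0,k}$, and recover the integrability of $\r_0$ a posteriori via $H^1(\tor^3) \hookrightarrow L^6(\tor^3)$ and the countable triangle inequality in $L^3$; this is more modular, costs one embedding, and actually yields the slightly stronger conclusion $\r_0 \in L^3(\tor^3)$ even when $\a = 1$. Your handling of the mass condition is also more explicit than the paper's: since $\|u_{0,k}\|_{L^2}=1$, the mass sum is just $\sum_k f(\mu_{0,k})$, which you bound directly from the eigenvalue growth and (iii), whereas the paper invokes the Poincar\'e inequality at this point --- a step that is less transparent on the torus, where constants are not mean-zero. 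One shared caveat, not a gap relative to the paper: both you and the paper implicitly assume the given enumeration of $\{\mu_{0,k}\}$ is nondecreasing when invoking the eigenvalue comparison; since every sum involved has nonnegative terms, reordering is harmless.
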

\prf
First observe that the nonnegativity of $f$ immediately gives the nonnegativity of $\l_{0,k}$ for all $k,$ which implies $\r_0$ is nonnegative, thus $\mu_{0,k}$ is also nonnegative for all $k.$ From the first equation in \eqref{StStates}, we find
\begin{equation*}
\sum_{k=1}^{\infty}\l_{0,k}\int \left[|\n u_{0,k}|^2 + \r_0^{\a}|u_{0,k}|^2\right]\,dx = \sum _{k=1}^{\infty} \l_{0,k} \mu_{0,k}
\end{equation*}
The stationary state equations satisfied by $\r_0$ and $\l_{0,k}$ show that the previous equation may be rewritten in the form
\begin{equation}\label{SumConv}
\sum_{k=1}^{\infty} \left( \l_{0,k} \int | \n u_{0,k}|^2\,dx\right) + \int \r_0^{\a + 1}dx =  \sum _{k=1}^{\infty} f( \mu_{0,k}) \mu_{0,k}
\end{equation}

We claim the sum on the right hand side of \eqref{SumConv} is finite. Since $f \in \sC,$ 
\begin{equation*}
f(\mu_{0,k})\mu_{0,k} \leq C(1 + \mu_{0,k})^{-3/2 - \epsilon}
\end{equation*}
for each $k \in \N.$ Let $\{ \mu_{-\D,k} \}$ denote the complete set of eigenvalues of $-\D$ on $\tor^3,$ and observe that the nonnegativity of $\r_0$ implies $\mu_{0,k} \geq \mu_{-\D,k}.$ The estimate of Li and Yau \cite{LiYau} gives  $\mu_{-\D,k}\geq Ck^{2/3},$ where the constant $C$ depends only on the domain $\tor^3.$ Thus, for each $k \in \N,$ 
\begin{equation*}
f(\mu_{0,k})\mu_{0,k} \leq C(1 + \mu_{-\D,k})^{-3/2 - \epsilon} \leq Ck^{-1-\epsilon}
\end{equation*}
which proves our claim that the sum converges. 

As the right side of \eqref{SumConv} is finite, $\sum_{k=1}^{\infty} \l_{0,k} \int | \n u_{0,k}|^2\,dx$ is finite, and $\r_0 \in L^{\a + 1}(\tor^3).$ By the Poincar\'e inequality, $\sum_{k=1}^{\infty} \l_{0,k} \int | u_{0,k}|^2\,dx$ must also be finite, and we conclude $(\uu_0, \lu_0) \in \S.$
\endprf


\begin{rem}
In the cubic case, $\a=1,$ we have $\r \in L^2(\tor^3),$ and $\r$ serves as the potential function in the cubic NLSS. However, in the quintic case, $\a = 2,$  the potential function  is $\r^2.$ We have shown that $\r \in L^3(\tor^3),$ thus $\r^2 \in L^{3/2}(\tor^3).$ In order to generalize our arguments to apply to both the cubic and quintic NLSS, we will use potentials $V \in L_+^{\frac{\a + 1}{\a}}(\tor^3),$ which include the functions $\r^{\a}.$
\end{rem}


We now develop the energy-Casimir functional associated to a given Casimir-class function $f.$ For $f \in \mathscr{C}$, define
\begin{equation*}
F(s):= \int_s^\infty f(\sigma)\,d\sigma,\quad s \in \mathbb{R}.
\end{equation*}
$F$ is a nonnegative, continuously differentiable, decreasing function, strictly convex on its support. Furthermore, we have the bound
\begin{equation}\label{FRate}
F(s) \leq  C(1+s)^{(-3/2-\epsilon)} \text{ for }  s  \geq 0.
\end{equation}
The Legendre transform of F is given by
\begin{equation}\label{eq-Leg-def-1}
F^*(\l) = \sup_{s \in \mathbb{R}}\big(\l s - F(s) \big) \quad \l \in \mathbb{R}.
\end{equation}
Since $F$ is differentiable with $F'=-f$, $F^*$ is differentiable, and $(F^*)' =(-f)^{-1}$. In particular, the supremum in \eqref{eq-Leg-def-1} is attained at $s=f^{-1}(-\lambda)$, and the Legendre transform of $F$ is given by 
\begin{equation}\label{eq-Legendre-id-1}
	F^*(-\lambda)=\lambda \mu-F(\mu).
\end{equation}
where $\mu=f^{-1}(\lambda)$.
Moreover, the Legendre transform is an involution, $F^{**}=F$.

We recall that the energy of a solution $\uu$ to the defocusing NLSS \eqref{DefNLSS} determined by $\lu$ is defined as
\begin{equation*}
E_{\lu}(\uu) :=\frac{1}{2}\sum_k \l_k \|\n u_k\|_{L^2(\tor^3)}^2 + \frac{1}{2(\a +1)}\int_{\tor^3} \r^{\a+1}\,dx 
\end{equation*}
and is conserved by the flow of the system.
\begin{defin}
Let $(\uu,\lu) \in \S.$ For a fixed $f \in \sC,$ we define the \emph{energy-Casimir functional } determined by $f$ as
\begin{align}\label{ECF}
\sH_f(\uu, \lu) &:= \sum_{k}F^*(-\l_k) + 2E_{\lu}(\uu) \\
								&=\sum_{k}\big(F^*(-\l_k) + \l_k \int_{\tor^3}|\n u_k |^2\,dx\big) +\frac{1}{\a +1}\int_{\tor^3}\r_{\uu}^{\a +1}\,dx
\end{align}
Since $\lu$ and $E_{\lu}(\uu)$ are constant in time, $\sH_f$ is also a conserved quantity of the defocusing NLSS. We will prove the stability of stationary solutions of the NLSS employing $\sH_f$ in a similar way as Lyapunov functions are used for the corresponding problem in classical Hamiltonian dynamics. This approach is often referred to as the energy-Casimir method.

We remark that the convergence of $\sum_{k}F^*(-\l_k)$ follows from 
\begin{equation}
	\sum_k F^*(-\l_{k}) = \sum_k \l_{k}\mu_{k} -\sum_k F(\mu_{k}) 
\end{equation}
where $\lambda_k=f(\mu_k)$; see \eqref{eq-Legendre-id-1} and \eqref{eq-Fmu-sum-1}, below. The convergence of $\sum_k\l_k\mu_k $ is proven in Proposition \ref{prop-fsum-conv-1}, and that of $\sum_k F(\mu_k)$ in Lemma \ref{Lemma1}.

\end{defin}


We conclude this section with some useful properties of $f$ and $F$ for $f \in \sC.$
\begin{lemma}\label{Lemma1}
Let $f\in \sC.$
\begin{enumerate}[label=(\roman*)]
\item For every $\beta > 1,$ there exists $C = C(\beta) \in \mathbb{R}$ such that 
\begin{equation*} 
F(s) \geq -\beta s + C, \quad s\leq 0
\end{equation*}
\item If $V \in L_+^{\frac{\a + 1}{\a}}(\tor^3)$  then both $f(-\D + V)$ and $F(-\D + V)$ are trace class.
\end{enumerate}
\end{lemma}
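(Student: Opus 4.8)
The plan for (i) is to use convexity of $F$ together with the identity $F'(s)=-f(s)$ (immediate from the continuity of $f$ and the fundamental theorem of calculus). I would set $g(s):=F(s)+\beta s$; this is strictly convex, being the sum of the strictly convex $F$ and a linear term, with $g'(s)=-f(s)+\beta$. Since $f$ is continuous, strictly decreasing, with $\lim_{s\to-\infty}f(s)=\infty$ and $\lim_{s\to\infty}f(s)=0$ (the latter from property (iii)), $f$ is a bijection of $\R$ onto $(0,\infty)$; hence for $\beta>1$ there is a unique $s_\beta$ with $f(s_\beta)=\beta$, and $g'(s_\beta)=0$. Convexity then forces $g(s)\ge g(s_\beta)$ for all $s\in\R$, so the bound holds with $C(\beta):=F(s_\beta)+\beta s_\beta$ (in fact for every $s$, not only $s\le 0$). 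One may observe that $C(\beta)=-F^*(-\beta)$, so (i) is precisely the Fenchel--Young inequality for the conjugate pair $(F,F^*)$.

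For (ii) the strategy is to reduce the trace-class property to summability of the eigenvalue sequences $\{f(\mu_k)\}$ and $\{F(\mu_k)\}$, where $\{\mu_k\}$ are the eigenvalues of $A:=-\D+V$, and then to bound $\mu_k$ below by the eigenvalues of $-\D$. First I would make sense of $A$: since $V\ge 0$ lies in $L^{(\a+1)/\a}_+(\tor^3)$, the quadratic form $u\mapsto\|\n u\|_{L^2}^2+\int V|u|^2\,dx$ is closed and nonnegative, its form domain is contained in that of $-\D$, and it dominates the form of $-\D$ pointwise. It therefore defines a nonnegative self-adjoint operator $A$, and the min--max principle gives $\mu_k(A)\ge\mu_{-\D,k}$, exactly the comparison already used in the proposition preceding this lemma. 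Since $-\D$ has compact resolvent on $\tor^3$ we have $\mu_{-\D,k}\to\infty$, so $A$ has purely discrete spectrum; the nonnegative operators $f(A)$ and $F(A)$, defined by functional calculus, have eigenvalues $f(\mu_k)\ge 0$ and $F(\mu_k)\ge 0$ and are trace class exactly when these are summable, with $\text{Tr}\,f(A)=\sum_k f(\mu_k)$ and likewise for $F$.

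The summability is then a direct repetition of the estimate carried out in that proposition. Using $\mu_k\ge\mu_{-\D,k}$, the monotonicity of $f$ and $F$, the Li--Yau bound $\mu_{-\D,k}\ge Ck^{2/3}$ \cite{LiYau}, and the decay bounds $f(s)\le C(1+s)^{-5/2-\e}$ and $F(s)\le C(1+s)^{-3/2-\e}$ valid for $s\ge 0$, I would estimate $f(\mu_k)\lesssim k^{-5/3-2\e/3}$ and $F(\mu_k)\lesssim k^{-1-2\e/3}$. Both exponents are strictly smaller than $-1$, so both series converge and $f(A)$, $F(A)$ are trace class.

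I expect the only real obstacle to be the operator-theoretic setup in (ii): verifying that $-\D+V$ with $V$ merely in $L^{(\a+1)/\a}_+$ (that is, $L^2$ when $\a=1$ and $L^{3/2}$ when $\a=2$) genuinely defines a self-adjoint operator with purely discrete spectrum, and that the comparison $\mu_k(-\D+V)\ge\mu_{-\D,k}$ is legitimate at this regularity. The cleanest route avoids relative-boundedness altogether: because $V\ge 0$ the perturbation is a \emph{positive} form perturbation, so the form sum requires no smallness hypothesis, and the min--max comparison holds at the level of forms since the form domain of $A$ sits inside that of $-\D$ and the two forms are ordered. With the forms in place, the remaining estimates are routine and mirror the argument already given for the preceding proposition.
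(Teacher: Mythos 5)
Your proposal is correct and follows essentially the same route as the paper: part (i) is the paper's tangent-line argument (your critical point of $F(s)+\beta s$ is exactly the point where $F$ has tangent slope $-\beta$, since $F'=-f$), and part (ii) is the paper's eigenvalue comparison $\mu_{V,k}\geq\mu_{-\D,k}\geq Ck^{2/3}$ via Li--Yau combined with the decay bounds on $f$ and $F$. The only difference is that you supply the operator-theoretic justification (form-sum definition of $-\D+V$, discreteness of spectrum, min--max) that the paper leaves implicit, which is a worthwhile but inessential strengthening.
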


\prf
Part (i) of the lemma follows directly from the properties of $F(s).$ In particular, as $F(s)$  strictly convex on its support, its graph lies above any of its tangent lines. Since $F(s)$ is decreasing, with $\lim_{s \rightarrow -\infty}F(s) = \infty,$ for any $\beta > 1,$ there is some $s<0$ such that the tangent line to $F$ at $s$ has slope $-\beta.$ 

To prove part (ii) of the lemma, let $\{\mu_{V,k}\}_{k=1}^{\infty}$ be the complete set of eigenvalues of $-\D + V,$  and let  $\{\mu_{-\D,k}\}_{k=1}^{\infty}$ be the complete set of eigenvalues of $-\D$ on $L^2(\tor^3).$  As $V$ is nonnegative, we have $\mu_{V,k} \geq \mu_{-\D,k} \geq C k^{\frac{2}{3}}$ for each $k \in \N.$ As $F(s)$ decreases faster than $(1 + s)^{-3/2}$ for $s \geq 0,$ we find  
\begin{align*} 
\text{Tr} F(-\D + V) &= \sum_{k=1}^{\infty} F(\mu_{V,k}) \leq \sum_{k=1}^{\infty}F(\mu_{-\D,k})\\
& \leq \sum_{k=1}^{\infty}F\big( C k^{\frac{2}{3}}\big) \leq C \sum_{k=1}^{\infty} \big(1 + k^{\frac{2}{3}}\big)^{-\frac{3}{2} - \epsilon}.
\end{align*}
The last series is convergent, thus $F(-\D + V)$ is trace class. As $f(s)$ is nonnegative and decreases at a rate faster than $F(s),$  $f(-\D + V)$ is also trace class.
\endprf

\begin{lemma}\label{FConvex}
For $\phi \in H^1(\tor^3)$ with $\| \phi \|_{L^2(\tor^3)}= 1,$ and $V \in L^{\frac{\a + 1}{a}}_+(\tor^3),$ 
\begin{equation*}
F \left( \la \phi , (-\D + V)\phi \ra \right) \leq \la \phi, F(-\D + V)\phi \ra
\end{equation*}
with equality if $\phi$ is an eigenstate of $-\D + V.$
\end{lemma}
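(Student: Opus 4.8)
The plan is to recognize this as Jensen's inequality applied to the spectral measure of the self-adjoint operator $A := -\D + V$. First I would secure the operator-theoretic setup. Since $V \in L^{\frac{\a+1}{\a}}_+(\tor^3)$ (so $V \in L^2$ when $\a = 1$ and $V \in L^{3/2}$ when $\a = 2$) and $V \geq 0$, the quadratic form $\phi \mapsto \|\n \phi\|_{L^2(\tor^3)}^2 + \int_{\tor^3} V|\phi|^2\,dx$ is nonnegative and, by the Sobolev embedding $H^1(\tor^3)\hookrightarrow L^6(\tor^3)$ together with H\"older's inequality, finite on the form domain $H^1(\tor^3)$. Hence $A$ is a nonnegative self-adjoint operator with $\sigma(A) \subset [0,\infty)$, and, exactly as in the proof of Lemma \ref{Lemma1}, it has compact resolvent and therefore discrete spectrum. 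The hypotheses $\phi \in H^1(\tor^3)$ and $\|\phi\|_{L^2(\tor^3)} = 1$ then guarantee that $\phi$ lies in the form domain of $A$, so that $\la \phi, (-\D + V)\phi\ra$ is a well-defined finite real number.

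Next I would pass to the spectral measure. Letting $\{E_\lambda\}$ denote the spectral resolution of $A$, I set $d\mu_\phi(\lambda) := d\la \phi, E_\lambda \phi\ra$; because $\|\phi\|_{L^2(\tor^3)} = 1$, this is a probability measure supported on $\sigma(A)\subset[0,\infty)$. By the spectral theorem the two sides of the claimed inequality become
\[
\la \phi, (-\D + V)\phi\ra = \int_0^\infty \lambda\,d\mu_\phi(\lambda), \qquad \la \phi, F(-\D + V)\phi\ra = \int_0^\infty F(\lambda)\,d\mu_\phi(\lambda).
\]
Both integrals are finite: the spectrum lies in $[0,\infty)$, where $F$ is bounded (it decreases from $F(0)<\infty$ to $0$), and the finiteness of the second integral is in any case ensured by the trace-class property of $F(-\D + V)$ from Lemma \ref{Lemma1}(ii).

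With this in place, the inequality is precisely Jensen's inequality. Recall that $F' = -f$ with $f$ decreasing, so $F$ is convex; applying Jensen to the probability measure $\mu_\phi$ gives
\[
F\left(\int_0^\infty \lambda\,d\mu_\phi(\lambda)\right) \leq \int_0^\infty F(\lambda)\,d\mu_\phi(\lambda),
\]
which is exactly the assertion. For the equality statement, if $\phi$ is an eigenstate of $-\D + V$ with eigenvalue $\mu$, then $\mu_\phi = \delta_\mu$ is a point mass and both sides collapse to $F(\mu)$, so equality holds.

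The calculations are routine; the only step requiring genuine care is the functional-analytic justification that $A = -\D + V$ is self-adjoint with spectrum contained in $[0,\infty)$ and that the given $\phi \in H^1(\tor^3)$ lies in its form domain, so that both quantities in the statement are finite and the spectral calculus is legitimate. Once that framework is fixed, the heart of the argument reduces to the single application of Jensen's inequality above.
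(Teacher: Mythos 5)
Your proposal is correct and follows essentially the same route as the paper: both arguments pass to the spectral measure $d\nu_\phi$ of $-\D+V$ with respect to $\phi$, note it is a probability measure since $\|\phi\|_{L^2}=1$, apply Jensen's inequality using the convexity of $F$, and observe that for an eigenstate the spectral measure is a Dirac mass so both sides equal $F(\mu)$. The additional care you take with self-adjointness, the form domain, and finiteness of both sides is a welcome strengthening of the write-up but does not change the substance of the argument.
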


\prf
Using the spectral theorem, let $P_{\g}$ be the family of orthogonal projections onto the eigenspaces of $-\D + V,$ and write
\begin{equation*}
-\D + V = \int_0^{\infty}\g dP_{\g}
\end{equation*}
 For any $\phi$ satisfying the hypotheses, the spectral measure of $-\D + V$ with respect to $\phi$ is given by
\begin{equation*}
\la \phi , dP_{\g}\phi\ra =:d\nu(\g),
\end{equation*}
 which is indeed a probability measure. Since F is convex, we apply Jensen's inequality to conclude
\begin{equation*}
F\left(\int_0^{\infty}\g d \nu(\g)\right) \leq \int_0^{\infty} F(\g)d\nu(\g)
\end{equation*}
which is equivalent to the inequality in the lemma. 

If $\phi$ is an eigenstate of $-\D + V,$ with eigenvalue $\g_0$ then $d\nu(\g)$ is a Dirac measure at $\g_0,$ and each side of the above inequality is $F(\g_0).$ 
\endprf
\begin{cor}
For $\phi \in H^1(\tor^3)$ with $\| \phi \|_{L^2(\tor^3)}= 1,$ $V \in L^{\frac{\a + 1}{a}}_+(\tor^3),$ and fixed but arbitrary $\s \in \R$
\begin{equation*}
F \left( \la \phi , (-\D + V + \s)\phi \ra \right) \leq \la \phi, F(-\D + V + \s)\phi \ra
\end{equation*}
with equality if $\phi$ is an eigenstate of $-\D + V.$
\end{cor}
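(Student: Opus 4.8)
The plan is to mirror the proof of Lemma \ref{FConvex} almost verbatim; the only new feature is that the shifted operator $-\D + V + \s$ need no longer be nonnegative when $\s < 0,$ so its spectral integral must be taken over all of $\R$ rather than over $[0,\infty).$ One might hope to reduce directly to Lemma \ref{FConvex} by absorbing $\s$ into the potential, but this fails: the function $V + \s$ is not guaranteed to be nonnegative, hence need not lie in $L^{\frac{\a+1}{\a}}_+(\tor^3).$ For this reason I would redo the spectral argument directly rather than appeal to the lemma as a black box.

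Concretely, I would invoke the spectral theorem for the self-adjoint operator $-\D + V + \s,$ which has the same (form) domain as $-\D + V$ since adding a real constant does not alter it, and write
\begin{equation*}
-\D + V + \s = \int_{\R} \g\, dP_{\g},
\end{equation*}
where $\{P_{\g}\}$ is the associated family of orthogonal projections, now indexed over all of $\R$ to accommodate the possibly negative shifted spectrum. For $\phi$ with $\|\phi\|_{L^2(\tor^3)} = 1,$ the quantity $d\nu(\g) := \la \phi, dP_{\g}\phi\ra$ is again a probability measure on $\R.$

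The one point requiring verification is that $F$ is convex on all of $\R,$ not merely on $[0,\infty),$ since this is what Jensen's inequality demands. This is immediate: because $s_0 = \infty$ throughout this chapter, $f$ is strictly decreasing on all of $\R,$ and since $F'(s) = -f(s),$ the derivative $F'$ is increasing, so $F$ is (strictly) convex everywhere. Jensen's inequality then yields
\begin{equation*}
F\left( \int_{\R} \g\, d\nu(\g) \right) \leq \int_{\R} F(\g)\, d\nu(\g),
\end{equation*}
which is exactly the asserted estimate $F(\la \phi, (-\D+V+\s)\phi\ra) \leq \la \phi, F(-\D+V+\s)\phi\ra.$ For the equality claim, if $\phi$ is an eigenstate of $-\D + V$ with eigenvalue $\g_0,$ then it is an eigenstate of $-\D + V + \s$ with eigenvalue $\g_0 + \s,$ so $d\nu$ is the Dirac mass at $\g_0 + \s$ and both sides collapse to $F(\g_0 + \s).$

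I do not expect a genuine obstacle here; the substance of the corollary is simply the recognition that the nonnegativity of the operator played no essential role in Lemma \ref{FConvex} beyond fixing the range of the spectral integral, and that the convexity of $F$ on the entire line is precisely the property needed to handle the shifted spectrum. The only items worth double-checking are that $F(-\D+V+\s)$ is a well-defined bounded operator (which follows from the boundedness of $F$ via the functional calculus, so the right-hand side is finite and meaningful) and that the measure $d\nu$ is genuinely a probability measure, both of which are inherited unchanged from the proof of the lemma.
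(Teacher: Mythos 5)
Your proof is correct, but it takes a different route from the paper's. The paper disposes of the corollary in two lines by shifting the \emph{function} rather than the operator: it observes that $f_{\s}(s) := f(s+\s)$ is again of Casimir class $\sC$ (the cutoff and the decay condition survive the shift, after enlarging the constant), and then applies Lemma \ref{FConvex} verbatim to $F_{\s}(s) = F(s+\s)$, since $F_{\s}(-\D+V) = F(-\D+V+\s)$ by the functional calculus and $F_{\s}(\la\phi,(-\D+V)\phi\ra) = F(\la\phi,(-\D+V+\s)\phi\ra)$ because $\|\phi\|_{L^2}=1$. You correctly identified that absorbing $\s$ into the \emph{potential} fails (since $V+\s$ need not be nonnegative), but missed this alternative reduction, and so re-ran the spectral argument from scratch. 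Your re-derivation is sound: the spectral measure $d\nu$ over $\R$ is still a probability measure, $F$ is convex on all of $\R$ (indeed $F'=-f$ is nondecreasing even without $s_0=\infty$), and Jensen plus the Dirac-mass equality case go through unchanged. What your version buys is a self-contained proof making explicit that nonnegativity of the operator only fixed the range of the spectral integral; what the paper's version buys is brevity and reuse of the lemma as a black box, at the cost of the (glossed but true) verification that $f_\s \in \sC$. One small inaccuracy in your closing remark: $F$ is \emph{not} a bounded function on $\R$ (it blows up as $s \to -\infty$), so $F(-\D+V+\s)$ is bounded not because $F$ is bounded, but because the spectrum of $-\D+V+\s$ lies in $[\s,\infty)$ and $F$ is decreasing there, giving the operator norm bound $F(\s)$. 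This does not affect the argument.
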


\prf
Note that $f_{\s}(s):= f(s + \s) \in \sC$ for $f(s) \in \sC,$ since we may take the cutoff $s_0$ as large as we wish. The corollary follows by applying the previous lemma to $F_{\s}(s) = F(s + \s).$
\endprf


\section{Nonlinear Stability of Stationary States}

For a given $f \in \sC,$ we define the functional $\Pf(\uu,\lu, V),$ as follows, where $(\uu,\lu) \in \S,$ and $V \in L_+^{\frac{\a + 1}{\a}}(\tor^3),$ with $\a =1$ (cubic) or $\a=2$ (quintic). 
\begin{equation}\label{Pf}
\Pf(\uu,\lu, V):=\sum_{k=1}^{\infty}\Big[F^*(-\l_k)+\l_k \int\big(|\n u_k|^2 + V |u_k|^2\big)\,dx \Big]
\end{equation}
\begin{rem}\label{PsiEnCas}
Note that if $(\uu,\lu)$ is a solution of the NLSS \eqref{NLSS} with corresponding density function $\r \in L_+^{\a +1}(\tor^3),$
\begin{align}
\Pf(\uu,\lu, \r^\alpha):&=\sum_{k=1}^{\infty}\Big[F^*(-\l_k)+\l_k \int |\n u_k|^2\,dx\Big] + \int \r^{\a +1}\,dx\notag\\
&=\sH_f(\uu,\lu) + \frac{\a}{\a+1}\int \r^{\a + 1} \, dx
\end{align}
\end{rem}


\begin{lemma}\label{TrBound}
Let $V \in L_+^{\frac{\a + 1}{\a}}(\tor^3),$ with $\a =1$ or $\a=2.$ For any $(\uu, \lu) \in \S,$ and any $f \in \sC,$
\begin{equation}\label{TrBoundIneq}
 \Pf(\uu,\lu, V) \geq -\mathrm{Tr}[F(-\D + V)],
\end{equation}
with equality if $(\uu, \lu) = (\uu_V, \lu_V),$ where $\uu_V = \{u_{V,k} \}$ is an orthonormal sequence of eigenfunctions of $-\D + V$ with eigenvalues $\muu_V= \{ \mu_{V,k} \},$ satisfying $\l_{V,k}= f(\mu_{V,k})$ for all $k \in \N.$
\end{lemma}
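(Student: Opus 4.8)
The plan is to combine two term-by-term convexity estimates: the Young (Legendre--Fenchel) inequality for the pair $(F,F^*)$, and the Jensen-type bound of Lemma \ref{FConvex}. Throughout I would write $s_k := \la u_k, (-\D+V)u_k\ra = \int \big(|\n u_k|^2 + V|u_k|^2\big)\,dx$, and use that, since $(\uu,\lu)\in\S$, the family $\{u_k\}$ is a complete orthonormal system in $L^2(\tor^3)$; in particular $\|u_k\|_{L^2}=1$ and $s_k\geq 0$ because $V\geq 0$. First I would invoke the defining property $F^*(\l)=\sup_s(\l s - F(s))$, which gives the pointwise inequality $F^*(-\l_k) \geq -\l_k s_k - F(s_k)$ for every $k$, that is, $F^*(-\l_k)+\l_k s_k \geq -F(s_k)$. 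Summing over $k$ yields
\[
\Pf(\uu,\lu,V) = \sum_k\big(F^*(-\l_k)+\l_k s_k\big) \;\geq\; -\sum_k F(s_k),
\]
and if the left-hand side is $+\infty$ the asserted bound is trivial, so I may assume it finite.

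Next I would apply Lemma \ref{FConvex} to each normalized $u_k$ to get $F(s_k)=F(\la u_k,(-\D+V)u_k\ra)\leq \la u_k, F(-\D+V)u_k\ra$. Since $F\geq 0$, the operator $F(-\D+V)$ is positive, and by Lemma \ref{Lemma1}(ii) it is trace class; summing its diagonal matrix elements against the complete orthonormal system $\{u_k\}$ therefore recovers the trace, $\sum_k \la u_k, F(-\D+V)u_k\ra = \mathrm{Tr}[F(-\D+V)]$. Chaining these gives $\sum_k F(s_k)\leq \mathrm{Tr}[F(-\D+V)]$, hence
\[
\Pf(\uu,\lu,V)\;\geq\; -\sum_k F(s_k) \;\geq\; -\mathrm{Tr}[F(-\D+V)],
\]
which is \eqref{TrBoundIneq}.

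For the equality case I would set $(\uu,\lu)=(\uu_V,\lu_V)$. Since each $u_{V,k}$ is an eigenstate with $(-\D+V)u_{V,k}=\mu_{V,k}u_{V,k}$, we have $s_k=\mu_{V,k}$, and the equality clause of Lemma \ref{FConvex} turns the Jensen step into an identity. For the Young step, equality in $F^*(-\l_k)\geq -\l_k s_k - F(s_k)$ holds exactly when $-\l_k=F'(s_k)$; because $F(s)=\int_s^\infty f(\sigma)\,d\sigma$ gives $F'(s)=-f(s)$, this condition is $\l_k=f(s_k)=f(\mu_{V,k})$, precisely the hypothesis $\l_{V,k}=f(\mu_{V,k})$. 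Completeness of the eigenbasis then gives $\sum_k F(\mu_{V,k})=\mathrm{Tr}[F(-\D+V)]$, so every inequality is saturated and $\Pf(\uu_V,\lu_V,V)=-\mathrm{Tr}[F(-\D+V)]$.

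The two term-by-term estimates are routine; the care needed is purely in the bookkeeping. The main point to get right is the identification $\sum_k\la u_k, F(-\D+V)u_k\ra=\mathrm{Tr}[F(-\D+V)]$, which rests on $F(-\D+V)$ being positive and trace class (Lemma \ref{Lemma1}(ii)) together with completeness of $\{u_k\}$, and — in the equality case — the verification that the Young-equality condition $-\l_k=F'(s_k)$ is exactly the stationary relation. I would also remark that if $\{u_k\}$ happened not to be complete, positivity of $F(-\D+V)$ still gives $\sum_k\la u_k,F(-\D+V)u_k\ra\leq\mathrm{Tr}[F(-\D+V)]$, which preserves the direction of the bound, so the inequality survives and only the equality characterization uses completeness.
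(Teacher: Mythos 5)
Your proposal is correct and follows essentially the same route as the paper's proof: the Fenchel--Young inequality $F^*(-\l_k)+\l_k\mu_k\geq -F(\mu_k)$ applied termwise, then the Jensen bound of Lemma \ref{FConvex}, then summation against the complete orthonormal system to recover $-\mathrm{Tr}[F(-\D+V)]$, with the equality case handled via the eigenstate clause of Lemma \ref{FConvex} and the conjugacy relation $\l_{V,k}=f(\mu_{V,k})=-F'(\mu_{V,k})$. Your added remarks --- that the trace identification needs completeness and trace-class positivity, and that without completeness the inequality still survives --- are sound refinements of bookkeeping the paper leaves implicit.
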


\prf
Set 
\begin{equation*}
\mu_k=  \la u_k, (-\D + V) u_k \ra =\int \left(| \n u_k|^2 + V | u_k|^2\right)\,dx
\end{equation*} 
By the inequality $F^*(-\l) + \l \mu \geq -F(\mu)$ for $\l, \mu \in \mathbb{R},$ and Lemma \ref{FConvex}, we have
\begin{align*}
\sum_{k=1}^{\infty}\left[ F^*(-\l_k) + \l_k \int \left(| \n u_k|^2 + V | u_k|^2\right)\,dx \right] 
 &\geq -\sum_{k=1}^{\infty} F\left( \la u_k, (-\D +V) u_k \ra\right)\\
&\geq - \sum_{k=1}^{\infty} \la u_k, F(-\D +V) u_k \ra\\
&=-\mathrm{Tr}[ F(-\D +V)]
\end{align*}

Now suppose $\uu_{V} =  \{u_{V,k} \}$ is the orthonormal sequence of eigenfunctions of $-\D + V$ with eigenvalues $\muu_V= \{ \mu_{V,k} \},$ so that  $\mu_{V,k} = \la u_{V,k}, ( -\D + V)u_{V,k}\ra.$ 
\begin{equation*}
\text{Tr}[F(-\D +V)] = \sum_{k=1}^{\infty} F(\mu_{V,k})
\end{equation*}
By definition, $\l_{V,k} =f(\mu_{V,k}) = -F'(\mu_{V,k}),$ for each $k \in \N.$ By the conjugate relationship  
$\mu_{V,k} = (F^*)'(-\l_{V,k}),$ so that $F^*(-\l_{V,k}) = -\l_{V,k}\mu_{V,k} - F(\mu_{V,k}).$ Summing on $k$ gives
\begin{equation}\label{eq-Fmu-sum-1}
-\sum_{k=1}^{\infty} F(\mu_{V,k}) =  \sum_{k=1}^{\infty}\big[ F^*(-\l_{V,k}) + \l_{V,k}\mu_{V,k}\big],
\end{equation}
which is precisely the statement of equality in \eqref{TrBoundIneq}.
\endprf

\begin{cor}\label{TrBoundCor}
Let $V \in L_+^{\frac{\a + 1}{\a}}(\tor^3),$ with $\a =1$ or $\a=2.$ For any $\s \in \R,$ any  $(\uu, \lu) \in \S,$ and any $f \in \sC,$
\begin{equation}\label{sigTraceBound}
 \Pf(\uu,\lu, V) + \s\sum_{k=1}^{\infty} \l_k \geq -\mathrm{Tr}[F(-\D + V + \s)], 
\end{equation}
with equality if $(\uu, \lu) = (\uu_V, \lu_V),$ where $\uu_V = \{u_{V,k} \}$ is an orthonormal sequence of eigenfunctions of $-\D + V$ with eigenvalues $\muu_V= \{ \mu_{V,k} \},$ with $\l_{V,k}= f(\mu_{V,k} + \s)$ for all $k \in \N.$
\end{cor}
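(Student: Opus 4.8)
The plan is to reduce the corollary to Lemma \ref{TrBound} by exploiting the shift $f \mapsto f_\s$ introduced in the corollary to Lemma \ref{FConvex}. Recall that $f_\s(s) := f(s + \s)$ lies in $\sC$ whenever $f \in \sC$, and its primitive is $F_\s(s) = F(s+\s)$. Since $V$ is left unchanged, it remains in $L_+^{(\a+1)/\a}(\tor^3)$, so Lemma \ref{TrBound} applies verbatim to the pair $(f_\s, V)$, yielding
\begin{equation*}
\Psi_{f_\s}(\uu, \lu, V) \geq -\mathrm{Tr}[F_\s(-\D + V)].
\end{equation*}
It then remains only to rewrite each side in terms of $f$ and $\s$.

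First I would track the effect of the shift on the Legendre transform. A change of variables $\tau = s + \s$ in the supremum defining $F_\s^*$ gives the identity $F_\s^*(\l) = F^*(\l) - \l \s$, hence $F_\s^*(-\l_k) = F^*(-\l_k) + \l_k \s$. Substituting this into the definition \eqref{Pf} of the functional produces
\begin{equation*}
\Psi_{f_\s}(\uu, \lu, V) = \Pf(\uu, \lu, V) + \s \sum_{k=1}^{\infty} \l_k,
\end{equation*}
which is precisely the left-hand side of \eqref{sigTraceBound}. For the right-hand side, the functional calculus identity $F_\s(-\D + V) = F\big((-\D + V) + \s\big) = F(-\D + V + \s)$ (trace class by Lemma \ref{Lemma1}(ii) applied to $f_\s$) converts the trace term into $-\mathrm{Tr}[F(-\D + V + \s)]$. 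Combining the two displays gives the asserted inequality.

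For the equality case I would transfer it directly from Lemma \ref{TrBound}: equality holds when $(\uu, \lu) = (\uu_V, \lu_V)$ with $\uu_V$ the eigenfunctions of $-\D + V$ and occupation numbers $\l_{V,k} = f_\s(\mu_{V,k}) = f(\mu_{V,k} + \s)$, matching the stated condition. This reduction is essentially bookkeeping; the only points requiring care are the sign in the Legendre identity $F_\s^* = F^* - \l\s$ and the consistency of all three shifted objects ($F_\s^*$, $F_\s$, and the spectral calculus of $-\D + V + \s$), so that the extra term $\s\sum_k \l_k$ lands on the correct side. Since $\s$ may be negative, I would emphasize that no positivity of $V + \s$ is ever invoked: the shift is absorbed entirely into $f$ and $F$, leaving the admissible potential $V$ untouched, which is exactly why keeping $V$ fixed and shifting $f$ is the right move rather than shifting the potential.
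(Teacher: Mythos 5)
Your proposal is correct, and every step checks out: the Legendre identity $F_\s^*(\l) = F^*(\l) - \l\s$ is right (substitute $\tau = s + \s$ in the supremum), the spectral-calculus identity $F_\s(-\D + V) = F(-\D + V + \s)$ is right, the trace-class claim follows from Lemma \ref{Lemma1}(ii) applied to $f_\s$, and the equality case transfers exactly. The paper's own proof is a one-line replay: ``use the same argument of the previous lemma, replacing $\mu_k$ with $\mu_k + \s$ throughout,'' i.e., it shifts the spectral parameter inside the proof of Lemma \ref{TrBound}, re-running the Fenchel inequality at the shifted points $\mu_k + \s$ together with the shifted Jensen inequality (the corollary to Lemma \ref{FConvex}). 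Your argument is the dual repackaging: you shift the Casimir function instead of the eigenvalues, which lets you invoke Lemma \ref{TrBound} as a genuine black box rather than re-verifying its chain of inequalities. What your route buys is that all the work collapses into bookkeeping identities ($f_\s \in \sC$, $F_\s = F(\cdot + \s)$, $F_\s^* = F^* - \l\s$), and it makes transparent why no positivity of $V + \s$ is ever needed --- a point the paper leaves implicit. What the paper's route buys is that it needs no Legendre-transform identity at all. It is also worth noting that your reduction is exactly the mechanism the paper itself uses one step earlier, in deducing the corollary to Lemma \ref{FConvex} from that lemma via $F_\s(s) = F(s+\s)$, so your proof is arguably the more uniform way to organize this chapter.
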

\begin{proof}
We use the same argument of previous lemma, replacing $\mu_k$ with $\mu_k + \s$ throughout. 
\end{proof}


\begin{theorem}[Nonlinear Stability of Stationary States]\label{StabThm}
Let $(\uu_0,\lu_0,\muu_0, \r_0)$ be a stationary state of the defocusing NLSS \eqref{DefNLSS} with $\a \in \{1,2\}.$ Suppose $(\uu_0, \lu_0) \in \mathscr{S},$ and $\l_{0,k}=f(\mu_{0,k})$ for some $f \in \mathscr{C}$ and all $k \in \mathbb{N}.$ Let $\sH_f$ be the energy-Casimir functional determined by $f.$ If $(\uu(t), \lu)$ is another solution of the defocusing NLSS on the time interval $[0,T)$ with initial datum $(\uu(0), \lu) \in \mathscr{S},$ then 
\begin{equation*}
\frac{1}{\a+1} \| \r_{\uu(t), \lu} - \r_0\|^{\a+1}_{L^{\a+1}(\tor^3)} \leq 
|\sH_f(\uu(0),\lu) - \sH_f(\uu_0,\lu_0)|, \quad t \geq 0.
\end{equation*}
for all $t \in [0, T).$
\end{theorem}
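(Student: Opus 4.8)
The plan is to exploit the conservation of the energy-Casimir functional $\sH_f$ together with the trace lower bound of Lemma \ref{TrBound}, applied with the \emph{fixed} potential $V = \r_0^\a$ coming from the stationary state. Since $\lu$ and $E_{\lu}$ are constant along the flow, we have $\sH_f(\uu(t),\lu) = \sH_f(\uu(0),\lu)$, so it suffices to bound $\frac{1}{\a+1}\|\r - \r_0\|_{L^{\a+1}(\tor^3)}^{\a+1}$ by $\sH_f(\uu(t),\lu) - \sH_f(\uu_0,\lu_0)$ at each fixed time, where I abbreviate $\r = \r_{\uu(t),\lu}$. If $\sH_f(\uu(0),\lu) = +\infty$ the claim is trivial, so I may assume all quantities below are finite.

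First I would record the two consequences of Lemma \ref{TrBound} for the choice $V = \r_0^\a \in L_+^{(\a+1)/\a}(\tor^3)$. Because the stationary state $(\uu_0,\lu_0)$ consists of eigenfunctions of $-\D + \r_0^\a$ with $\l_{0,k} = f(\mu_{0,k})$ by \eqref{StStates}, it is precisely the equality configuration of Lemma \ref{TrBound}, which together with Remark \ref{PsiEnCas} gives
\[
\sH_f(\uu_0,\lu_0) = \Pf(\uu_0,\lu_0,\r_0^\a) - \frac{\a}{\a+1}\int \r_0^{\a+1}\,dx = -\mathrm{Tr}[F(-\D + \r_0^\a)] - \frac{\a}{\a+1}\int \r_0^{\a+1}\,dx.
\]
Applying the \emph{inequality} in Lemma \ref{TrBound} to the perturbed data $(\uu(t),\lu) \in \S$ with the \emph{same} potential $V = \r_0^\a$ yields $\Pf(\uu(t),\lu,\r_0^\a) \geq -\mathrm{Tr}[F(-\D + \r_0^\a)]$. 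The crucial point of the energy-Casimir method is that the trace term is identical in both lines and therefore cancels.

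Next I would rewrite $\Pf(\uu(t),\lu,\r_0^\a)$ in terms of $\sH_f(\uu(t),\lu)$. Expanding the definition \eqref{Pf} and using $\sum_k \l_k \int \r_0^\a|u_k|^2\,dx = \int \r_0^\a \r\,dx$ (monotone convergence, the cross term being finite by H\"older since $\r_0,\r \in L^{\a+1}$), the $\sum_k F^*(-\l_k)$ and kinetic terms coincide with those in $\sH_f$, so
\[
\Pf(\uu(t),\lu,\r_0^\a) = \sH_f(\uu(t),\lu) - \frac{1}{\a+1}\int \r^{\a+1}\,dx + \int \r_0^\a \r\,dx.
\]
Substituting the two displays into the inequality from Lemma \ref{TrBound} and cancelling the trace terms reduces the claim to the purely pointwise estimate
\[
\frac{1}{\a+1}\r^{\a+1} - \r_0^\a \r + \frac{\a}{\a+1}\r_0^{\a+1} \geq \frac{1}{\a+1}|\r - \r_0|^{\a+1},
\]
integrated over $\tor^3$.

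The main obstacle is this final pointwise inequality, which is where the specific structure of the nonlinearity enters. The left-hand side is exactly the Bregman divergence $g(\r) - g(\r_0) - g'(\r_0)(\r - \r_0)$ of the convex function $g(s) = \frac{1}{\a+1}s^{\a+1}$, which I would write as $\int_{\r_0}^{\r}(s^\a - \r_0^\a)\,ds$. After the substitution $s = \r_0 + t$ (resp. $s = \r_0 - t$ when $\r < \r_0$) this reduces to verifying $(x+y)^\a \geq x^\a + y^\a$ for $x,y \geq 0$ and integer $\a \in \{1,2\}$ — the superadditivity of $s \mapsto s^\a$, immediate from the binomial expansion — which bounds the divergence below by $\int_0^{|\r - \r_0|} t^\a\,dt = \frac{1}{\a+1}|\r-\r_0|^{\a+1}$; for $\a = 1$ the two sides are in fact equal. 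Integrating over $\tor^3$ and combining with the cancellation of the trace terms completes the proof.
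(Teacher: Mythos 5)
Your proposal is correct, and its skeleton is the same as the paper's: conservation of $\sH_f$, the trace bound of Lemma \ref{TrBound} applied with the fixed potential $V=\r_0^\a$ (the inequality for the perturbed solution, the equality case for the stationary state, combined with Remark \ref{PsiEnCas}), and cancellation of the trace terms, leaving a pointwise algebraic inequality in $\r$ and $\r_0$. Where you genuinely diverge is in that last step. The paper handles the two nonlinearities separately: for $\a=1$ the leftover expression $\tfrac12\r^2-\r_0\r+\tfrac12\r_0^2$ is exactly $\tfrac12(\r-\r_0)^2$, while for $\a=2$ it first bounds $|\r-\r_0|^3\le(\r-\r_0)^2(\r+\r_0)$ and then applies the arithmetic--geometric mean inequality $\r_0^2\r\le\tfrac12\r_0(\r_0^2+\r^2)$ to arrive at $\r^3-3\r_0^2\r+2\r_0^3$. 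You instead recognize the leftover expression as the Bregman divergence of $g(s)=\tfrac1{\a+1}s^{\a+1}$, write it as $\int_{\r_0}^{\r}(s^\a-\r_0^\a)\,ds$, and invoke superadditivity $(x+y)^\a\ge x^\a+y^\a$ for $x,y\ge0$; this treats both cases in one stroke (with equality when $\a=1$, matching the paper's cubic identity) and would extend verbatim to any integer $\a\ge1$, whereas the paper's AM--GM manipulation is tailored to $\a=2$. The ingredients are identical ($\r,\r_0\ge0$ pointwise and Lemma \ref{TrBound}), so nothing is lost, and your formulation arguably makes more transparent why the $L^{\a+1}$ distance between densities is the natural quantity controlled by the energy-Casimir functional.
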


\begin{proof}
\emph{ Cubic Case.} Let $(\uu_0, \lu_0, \muu_0, \r_0)$ be a stationary state  of the cubic NLSS with $f \in \sC$ satisfying $f(\mu_{0,k}) = \l_{0,k},$ for all $k\in \N$. Suppose $(\uu,\lu)$ is a solution of equation \eqref{DefNLSS} for $\a=1,$ with initial datum $(\uu(0), \lu)\in \S,$ and let $\r \in L^2_+(\tor^3)$ be the particle density corresponding to $(\uu,\lu).$ We have 
\begin{align}
\frac{1}{2}&\| \r - \r_0 \|_2^2 = \frac{1}{2}\int \big(\r^2 -2\r \r_0 +   \r_0^2\big)dx\notag\\
&=\sH_f(\uu,\lu) - \sum_{k=1}^{\infty}\Big[F^*(-\l_k)+\l_k \int\big(|\n u_k|^2 + \r_0 |u_k|^2\big)\,dx \Big] + \frac{1}{2} \int  \r_0^2\,dx \notag\\ 
& \leq\sH_f(\uu,\lu) + \text{Tr}[F(-\D + \r_0)]  + \frac{1}{2} \int  \r_0^2\,dx \label{TrIneq}\\
&=\sH_f(\uu,\lu) - \Pf(\uu_0,\lu_0,\r_0) + \frac{1}{2} \int  \r_0^2\,dx\label{TrEq}\\
&=\sH_f(\uu(0),\lu) - \sH_f(\uu_0,\lu_0),\notag
\end{align}
where we have used Lemma \ref{TrBound} to establish \eqref{TrIneq} and \eqref{TrEq}.

\emph{ Quintic Case.}
Let $(\uu_0, \lu_0, \muu_0, \r_0)$ be a stationary state of the quintic NLSS with $f \in \sC$ satisfying $f(\mu_{0,k}) = \l_{0,k},$ for all $k\in \N$. Suppose $(\uu,\lu)$ is a solution of \eqref{DefNLSS} for $\a =2$ on time interval $[0,T)$ with initial datum $(\uu(0), \lu)\in \S,$ and let $\r \in L^3_+(\tor^3)$ be the particle density corresponding to $(\uu,\lu).$

First, note that since $\r, \r_0 \geq 0$ on $\tor^3,$ we have
\begin{equation}\label{QStabIneq1}
\int | \r - \r_0 |^3\,dx \leq \int(\r - \r_0)^2(\r + \r_0)\, dx = \int (\r^3 - \r^2\r_0 - \r \r_0^2 + \r_0^3)\,dx 
\end{equation}
By the geometric-arithmetic means inequality, 
\begin{equation*}
\r_0^2\r = \r_0(\r_0 \r) \leq \frac{1}{2}\r_0 ( \r_0^2 + \r^2 ),
\end{equation*}
from which we obtain  
\begin{equation}\label{rhosIneq}
-\r^2\r_0 \leq -2\r_0^2\r + \r_0^3.
\end{equation}
Using estimates \eqref{QStabIneq1} and \eqref{rhosIneq}, we find
\begin{equation}\label{QStabIneq2}
\int | \r - \r_0 |^3\,dx  \leq \int \big(\r^3 - 3\r_0^2\r + 2\r_0^3\big)\,dx
\end{equation}
Proceeding as we did in the proof of the cubic case:
\begin{align}
\frac{1}{3}&\| \r - \r_0 \|_{L^3(\tor^3)}^3 \leq \frac{1}{3}\int \big(\r^3 - 3\r_0^2\r + 2\r_0^3\big)\,dx\notag\\
&=\sH_f(\uu,\lu) - \sum_{k=1}^{\infty}\big[F^*(-\l_k)+\l_k \int\big(|\n u_k|^2+\r_0^2|u_k|^2\big)\,dx \big] + \frac{2}{3} \int  \r_0^3\,dx\notag\\
&=\sH_f(\uu,\lu) - \Pf(\uu,\lu,\r_0^2) + \frac{2}{3} \int  \r_0^3\,dx\notag\\
& \leq \sH_f(\uu,\lu)  + \text{Tr}[F(-\D + \r^2_0)] + \frac{2}{3} \int  \r_0^3\,dx\label{QTrIneq}\\
&=\sH_f(\uu,\lu) - \Pf(\uu_0,\lu_0,\r_0^2) + \frac{2}{3} \int  \r_0^3\,dx\label{QTrEq}\\
&=\sH_f(\uu(0),\lu) - \sH_f(\uu_0,\lu_0)\notag,
\end{align}
where we have used Lemma \ref{TrBound} to establish \eqref{QTrIneq} and \eqref{QTrEq}.
\end{proof}


\section{Deriving the Dual Functional}


We now turn to the problem of existence of stationary states satisfying \eqref{StStates}. For each $f \in \sC,$ we define a dual functional to $\sH_f(\uu,\lu).$ First, for fixed $f \in \sC$ and fixed $\L > 0,$ we use the saddle point principle to define:
\begin{multline*}
\G(\uu, \lu, V, \s) := \sum_{k=1}^{\infty}\left[F^*(-\l_k) + \l_k \int \big(|\n u_k|^2 + V|u_k|^2\big)dx\right] \\
 - \frac{\a}{\a + 1}\int V^{\frac{\a+1}{\a}}\,dx + \s\left[\sum_{k=1}^{\infty}\l_k - \L\right]
\end{multline*}
where $\uu = \{u_k\}$ is an orthonormal basis of $L^2(\tor^3),$ $\lu = \{\l_k\} \in \ell^1_+,$ and $V \in L_+^{\frac{\a  + 1}{\a}}.$ The variable $\s \in \R$ plays the role of a Lagrange multiplier.


The following lemma illustrates the relationship between the functional $\G$ and the energy-Casimir functional.
\begin{lemma}
For any $\uu, \lu, \s$ we have 
\begin{equation}
\sup_V \G(\uu, \lu, V, \s) = \sH_f (\uu, \lu) + \s\left[\sum_{k=1}^{\infty} \l_k - \L\right],
\end{equation}
and the supremum occurs when $V = \big( \sum_{k=1}^{\infty} \l_k |u_k|^2 \big)^{\a}$
\end{lemma}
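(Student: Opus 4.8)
The plan is to observe that, among the four groups of terms making up $\G(\uu,\lu,V,\s)$, only the coupling term $\sum_k \l_k\int V|u_k|^2\,dx$ together with $-\frac{\a}{\a+1}\int V^{\frac{\a+1}{\a}}\,dx$ depend on $V$; the Casimir terms $F^*(-\l_k)$, the kinetic terms $\l_k\int|\n u_k|^2\,dx$, and the Lagrange term $\s[\sum_k\l_k-\L]$ are all constant in $V$. So the whole problem reduces to maximizing a single $V$-dependent functional over $V\in L_+^{\frac{\a+1}{\a}}(\tor^3)$, and the remaining constant terms will reassemble into $\sH_f(\uu,\lu)+\s[\sum_k\l_k-\L]$ once the maximization is carried out, using the second displayed form of $\sH_f$ in its definition.

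First I would, since $\l_k,\,|u_k|^2,\,V\ge 0$, invoke Tonelli to interchange the sum and the integral and write $\sum_{k}\l_k\int V|u_k|^2\,dx=\int V\r\,dx$ with $\r=\sum_k\l_k|u_k|^2$. The $V$-functional then becomes the pointwise expression $\int\big(V\r-\tfrac{\a}{\a+1}V^{\frac{\a+1}{\a}}\big)\,dx$. The key step is to recognize this as the equality case of Young's inequality with conjugate exponents $p=\frac{\a+1}{\a}$ and $q=\a+1$, which satisfy $\tfrac1p+\tfrac1q=\tfrac{\a}{\a+1}+\tfrac{1}{\a+1}=1$. Applying $V\r\le \tfrac{1}{p}V^{p}+\tfrac{1}{q}\r^{q}=\tfrac{\a}{\a+1}V^{\frac{\a+1}{\a}}+\tfrac{1}{\a+1}\r^{\a+1}$ pointwise and integrating yields
\begin{equation*}
\int\Big(V\r-\tfrac{\a}{\a+1}V^{\frac{\a+1}{\a}}\Big)\,dx\ \le\ \tfrac{1}{\a+1}\int \r^{\a+1}\,dx,
\end{equation*}
with equality precisely when $V^{\frac{\a+1}{\a}}=\r^{\a+1}$ pointwise, i.e. when $V=\r^{\a}=\big(\sum_k\l_k|u_k|^2\big)^{\a}$. (The same conclusion can be reached by the elementary calculus of $g(V)=V\r-\tfrac{\a}{\a+1}V^{\frac{\a+1}{\a}}$, whose only critical point $V=\r^\a$ is a maximum by concavity, giving $g(\r^\a)=\tfrac{1}{\a+1}\r^{\a+1}$; either route gives the stated maximizer uniformly in $\a\in\{1,2\}$.) Substituting the maximal value $\tfrac{1}{\a+1}\int\r^{\a+1}\,dx$ back and recombining with the $V$-independent terms produces exactly $\sH_f(\uu,\lu)+\s[\sum_k\l_k-\L]$.

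The only point requiring care — and the main (if modest) obstacle — is admissibility: I must check that the formal maximizer $V=\r^{\a}$ actually lies in the competitor class $L_+^{\frac{\a+1}{\a}}(\tor^3)$, so that the supremum is genuinely attained rather than merely approached. This amounts to verifying $\int (\r^{\a})^{\frac{\a+1}{\a}}\,dx=\int\r^{\a+1}\,dx<\infty$, i.e. $\r\in L^{\a+1}(\tor^3)$. For $(\uu,\lu)\in\S$ this follows from the Sobolev embedding $H^1(\tor^3)\hookrightarrow L^6(\tor^3)$ exactly as in the energy bounds of the earlier chapters ($\a=1$ needs $\r\in L^2$, $\a=2$ needs $\r\in L^3$), so $\r^\a$ is a legitimate competitor and the supremum is a maximum, completing the identification.
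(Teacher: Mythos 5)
Your proof is correct, but it takes a genuinely different route from the paper. The paper splits into the two cases $\a=1$ and $\a=2$: for $\a=1$ it adds and subtracts $\frac{1}{2}\int\r^2\,dx$ and completes the square, rewriting $\G = \sH_f - \frac{1}{2}\|\r - V\|_{L^2}^2 + \s[\sum_k\l_k - \L]$; for $\a=2$ it substitutes $V = \r'^2$ with $\r' = \sqrt{V}$ and uses the algebraic factorization $\r_{\uu,\lu}^3 - 3\r'^2\r_{\uu,\lu} + 2\r'^3 = (\r_{\uu,\lu} + 2\r')(\r_{\uu,\lu} - \r')^2 \ge 0$, with equality iff $\r' = \r_{\uu,\lu}$. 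Your argument replaces both case-specific manipulations with a single pointwise application of Young's inequality with conjugate exponents $\frac{\a+1}{\a}$ and $\a+1$, whose equality case directly identifies the maximizer $V = \r^\a$; this is shorter, treats both exponents uniformly, and would extend verbatim to any $\a > 0$. What the paper's approach buys in exchange is an explicit nonnegative defect term (the squared $L^2$ distance, resp.\ the factored cubic) quantifying how far $\G$ falls below its supremum for non-optimal $V$ — precisely the algebraic expressions that reappear in the proof of the nonlinear stability theorem, so the case-by-case computation is doing double duty there. Your admissibility check that $\r^\a$ actually lies in $L_+^{\frac{\a+1}{\a}}(\tor^3)$ (via $H^1 \hookrightarrow L^6$ for $(\uu,\lu)\in\S$) is a point the paper leaves implicit, and is a worthwhile addition: without it one only knows the supremum is approached, not attained.
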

\begin{proof}
For arbitrary $(\uu, \lu) \in \S,$ let $\r_{\uu,\lu} = \sum_k \l_k |u_k|^2.$ Suppose $\a = 1.$ We may write $\G(\uu,\lu,V,\s)$ equivalently as 
\begin{align*}
\G(\uu,\lu,V,\s)& = \sum_{k=1}^{\infty}\left[F^*(-\l_k) + \l_k \int |\n u_k|^2 dx\right]  + \frac{1}{2} \int \r_{\uu,\lu}^2 dx - \frac{1}{2} \int \r_{\uu,\lu}^2 dx \\
&\phantom{ = \sum_k\big[F^*(-\l_k) + \l_k }+ \int V \r_{\uu,\lu}\, dx - \frac{1}{2}\int V^2\,dx + \s\left[\sum_{k=1}^{\infty} \l_k - \L\right] \\
&= \sH_f(\uu,\lu) - \frac{1}{2}\| \r_{\uu,\lu} - V \|_{L^2(\tor^3)}^2 +  \s\left[\sum_{k=1}^{\infty} \l_k - \L\right].
\end{align*}
Clearly, $\G(\uu,\lu,V,\s)$ has a maximum for $V = \r_{\uu,\lu},$ and we have the desired supremum.

Now, consider the case $\a = 2.$ This time, we write $\G(\uu,\lu,V,\s)$ in the equivalent form
\begin{align*}
\G(\uu,\lu,V,\s)& = \sum_{k=1}^{\infty}\left[F^*(-\l_k) + \l_k \int |\n u_k|^2 dx\right]  + \frac{1}{3} \int \r_{\uu,\lu}^3 dx - \frac{1}{3} \int \r_{\uu,\lu}^3 dx \\
&\phantom{ = \sum_k\big[F^*(-\l_k) }+ \int V \r_{\uu,\lu}\, dx - \frac{2}{3}\int V^{3/2}\,dx + \s\left[\sum_{k=1}^{\infty} \l_k - \L\right] \\
&= \sH_f(\uu,\lu) - \frac{1}{3}\int \big(\r_{\uu,\lu}^3 - 3\r_{\uu,\lu}V + 2V^{3/2}\big) dx+  \s\left[\sum_{k=1}^{\infty} \l_k - \L\right].
\end{align*}
Since $V \in L^{3/2}_+,$ it has a nonnegative square root. Let $\r := \sqrt{V} \in L^3_+,$ so that
\begin{align*}
\G(\uu,\lu,\r^2,\s) & = \sH_f(\uu,\lu) - \frac{1}{3}\int \big(\r_{\uu,\lu}^3 - 3\r^2\r_{\uu,\lu} + 2\r^3\big) dx+  \s\left[\sum_{k=1}^{\infty}\l_k - \L\right]\\
&= \sH_f(\uu,\lu) - \frac{1}{3}\int (\r_{\uu,\lu} + 2\r)(\r_{\uu,\lu} - \r)^2 dx+  \s\left[\sum_{k=1}^{\infty} \l_k - \L\right].
\end{align*} 
As $\r_{\uu,\lu}$ and $\r$ are nonnegative, we have $- \frac{1}{3}\int (\r_{\uu,\lu} + 2\r)(\r_{\uu,\lu} - \r)^2 dx \leq 0$ with equality precisely when $\sqrt{V} =  \r = \r_{\uu,\lu} ,$ which proves the lemma in the case $\a = 2.$
\end{proof}


Let us now derive a useful representation of the dual function defined by $\Phi(V,\s):=\inf_{\uu, \lu}\G(\uu,\lu,V\s).$ First, note that $\G$ can be written in the form
\begin{equation*}
\G(\uu, \lu, V, \s) = \Pf(\uu, \lu, V) - \frac{\a}{\a+1}\int V^{\frac{\a + 1}{\a}}dx + \s\left[\sum_{k=1}^{\infty}\l_k - \L\right].
\end{equation*}
By Lemma \ref{TrBound}, we have the lower bound
\begin{equation*}
\G(\uu, \lu, V, \s) \geq -\text{Tr}\big[F(-\D + V + \s)\big] - \s \L - \frac{\a}{\a+1}\int V^{\frac{\a + 1}{\a}}dx
\end{equation*}
with equality if $(\uu, \lu) = (\uu_V, \lu_V),$ where $\uu_V$ is the complete set of eigenstates of $-\D + V$ with corresponding eigenvalues $\muu_V$ such that $\l_{V,k} = f(\mu_{V,k} + \s)$ for all $k \in \N .$ Therefore, we have the following expression for $\Phi$
\begin{equation*}
\Phi(V,\s):=  - \frac{\a}{\a+1}\int V^{\frac{\a + 1}{\a}}dx -\text{Tr}\big[F(-\D + V + \s)\big] - \s \L. 
\end{equation*}


\section{Existence and Uniqueness of Stationary States}
\label{sec-14}

\begin{theorem}\label{PhiMax}
The functional $\Phi(V,\s)$, determined by a given $\L >0$ and $f \in \sC$, is continuous, strictly concave, bounded from above, and $-\Phi(V,\s)$ is coercive. Thus $\Phi(V,\s)$ has a unique maximizer $(V_0, \s_0).$ This maximizer uniquely determines a stationary state $(\uu_0, \lu_0,\muu_0,\r_0)$ as follows: $\uu_0 = \{u_{0,k}\}$ is the set of orthonormal eigenstates  of $-\D + V_0$ with corresponding eigenvalues $\muu_0 = \{\mu_{0,k}\},$   $ \l_{0,k}:=f(\mu_{0,k} + \s_0)$ for $k \in \N$ satisfies $\sum_{k=1}^{\infty} \l_{0,k} = \L ,$  and $V_0 = \r_0^\a,$ where $\r_0 := \sum_{k=1}^{\infty} \l_{0,k}|u_{0,k}|^2.$ 
\end{theorem}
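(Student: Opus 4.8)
The plan is to verify the four analytic properties of $\Phi$ in the order listed, deduce existence and uniqueness of a maximizer by the direct method together with strict concavity, and then read off the stationary-state equations \eqref{StStates} from the first-order optimality conditions. I work on the domain $L_+^{\frac{\a+1}{\a}}(\tor^3) \times \R$, with the norm topology on the first factor (which is reflexive since $\frac{\a+1}{\a} > 1$). \emph{Concavity} is cleanest from the representation $\Phi(V,\s) = \inf_{\uu,\lu}\G(\uu,\lu,V,\s)$ derived above: for each fixed $(\uu,\lu)$ the map $(V,\s) \mapsto \G$ is linear in $\s$ and concave in $V$ (a linear term $\sum_k \l_k\int V|u_k|^2$ plus the strictly concave $-\frac{\a}{\a+1}\int V^{\frac{\a+1}{\a}}\,dx$), and an infimum of concave functions is concave. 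To upgrade to \emph{strict} concavity I argue directly on two distinct points: if their $V$-components differ, the potential term is strictly concave; if only the $\s$-components differ, the map $\s \mapsto \mathrm{Tr}[F(-\D+V+\s)] = \sum_k F(\mu_{V,k}+\s)$ is strictly convex because $F'' = -f' > 0$ on all of $\R$ (here $f$ is strictly decreasing and $s_0=\infty$). Continuity of the potential term in $L^{\frac{\a+1}{\a}}$ is immediate, and continuity of the trace term I would obtain from resolvent/eigenvalue perturbation together with the uniform trace-class control of Lemma \ref{Lemma1}(ii) and the decay \eqref{FRate}.

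For \emph{boundedness above and coercivity} of $-\Phi$, I use the linear minorant of Lemma \ref{Lemma1}(i), first extending it to all of $\R$: for each $\beta > 1$ there is $C'(\beta)$ with $F(s) \geq -\beta s + C'(\beta)$ for every $s$ (for $s>0$ use $F \geq 0$). The geometric input is the variational bound $\mu_{V,1} \leq \la 1,(-\D+V)1\ra = \int V\,dx$, obtained by testing with the normalized constant. Choosing $\beta > \L$ and keeping only the ground-state term, $\mathrm{Tr}[F(-\D+V+\s)] \geq F(\mu_{V,1}+\s) \geq -\beta\int V\,dx - \beta\s + C'(\beta)$, so that
\[
-\Phi(V,\s) \geq \frac{\a}{\a+1}\|V\|_{L^{\frac{\a+1}{\a}}}^{\frac{\a+1}{\a}} - \beta\int V\,dx + (\L-\beta)\s + C'(\beta).
\]
On $\{\s < 0\}$ the term $(\L-\beta)\s = (\beta-\L)|\s|$ is positive and coercive in $|\s|$, while Young's inequality absorbs $\beta\int V\,dx \le \beta\|V\|_{L^{\frac{\a+1}{\a}}}$ into half of the potential term; on $\{\s \ge 0\}$ I instead use $\mathrm{Tr}[F] \ge 0$ and $\s\L \ge 0$ directly. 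In both regions $-\Phi$ is bounded below and tends to $+\infty$ as $\|V\|_{L^{\frac{\a+1}{\a}}} + |\s| \to \infty$.

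For \emph{existence, uniqueness, and the Euler--Lagrange equations}: a maximizing sequence is bounded by coercivity, so by reflexivity and compactness in $\s$ a subsequence satisfies $V_n \rightharpoonup V_0 \ge 0$ (the cone $L_+^{\frac{\a+1}{\a}}$ is convex and strongly closed, hence weakly closed) and $\s_n \to \s_0$; since $\Phi$ is concave and continuous it is weakly upper semicontinuous, so $(V_0,\s_0)$ attains the supremum, and strict concavity gives uniqueness. Stationarity in $\s$ yields $\mathrm{Tr}[f(-\D+V_0+\s_0)] = \L$, i.e. $\sum_k \l_{0,k} = \L$ with $\l_{0,k} := f(\mu_{0,k}+\s_0)$. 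Stationarity in $V$, using $\frac{d}{dt}\mathrm{Tr}[F(A+tB)]|_{t=0} = \mathrm{Tr}[F'(A)B]$ with $B$ multiplication by a test function $h$ and $F'=-f$, gives $\int h\,(\r_0 - V_0^{1/\a})\,dx = 0$ for all $h$, where $\r_0 = \sum_k \l_{0,k}|u_{0,k}|^2$; hence $V_0 = \r_0^{\a}$. Combined with the eigenvalue relations $(-\D+V_0)u_{0,k} = \mu_{0,k}u_{0,k}$, these are exactly \eqref{StStates}, the multiplier $\s_0$ being absorbed into the admissible Casimir function $f_{\s_0}(\cdot)=f(\cdot+\s_0)\in\sC$.

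The hard part will be the analysis of the trace functional $V \mapsto \mathrm{Tr}[F(-\D+V+\s)]$: establishing its strong continuity and, above all, its weak upper semicontinuity under $V_n \rightharpoonup V_0$, and rigorously justifying the differentiation identity $\frac{d}{dt}\mathrm{Tr}[F(A+tB)] = \mathrm{Tr}[F'(A)B]$. Both rest on operator-perturbation estimates and uniform trace-class bounds drawn from Lemma \ref{Lemma1}(ii) and the decay \eqref{FRate}; once these are secured, the coercivity estimate and the formal Euler--Lagrange computation are comparatively routine.
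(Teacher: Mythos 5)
Your overall variational strategy coincides with the paper's (concavity $+$ coercivity $+$ continuity $\Rightarrow$ unique maximizer, then Euler--Lagrange), and two of your steps are genuinely cleaner than the paper's. For concavity you exploit the representation $\Phi(V,\s)=\inf_{\uu,\lu}\G(\uu,\lu,V,\s)$, an infimum of functions affine in $\s$ and concave in $V$; since that infimum is attained (this is exactly Lemma \ref{TrBound} and Corollary \ref{TrBoundCor}), strict concavity in the $V$-slot does transfer to $\Phi$, and your separate treatment of the case $V_1=V_2$, $\s_1\neq\s_2$ is sound --- with the small correction that $f$ is only assumed continuous, so you should argue from $F'=-f$ being strictly increasing rather than from $F''=-f'>0$. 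The paper instead proves convexity of the trace term directly via Jensen's inequality (Lemma \ref{FConvex}) applied to the eigenbasis of the interpolated operator, and then extracts strictness from the two convexity gaps; both routes work. Your coercivity argument is also tighter: by extending the linear minorant of Lemma \ref{Lemma1}(i) to all of $\R$ you handle the whole half-plane $\s<0$ in one estimate, whereas the paper's split into $\s\geq 0$ and $\s\leq -C_1\|V\|_{L^{\frac{\a+1}{\a}}}$ leaves the intermediate region $-C_1\|V\|_{L^{\frac{\a+1}{\a}}}<\s<0$ implicit.

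The genuine deficiency is that you never prove continuity of $(V,\s)\mapsto \mathrm{Tr}\big[F(-\D+V+\s)\big]$: you defer it to ``resolvent/eigenvalue perturbation'' and flag it as the hard part, so as written the proof is incomplete at precisely the point where the direct method needs upper semicontinuity. That route is also needlessly heavy. The paper closes the gap softly: the trace term is convex, it is proper, and --- because $F$ is decreasing and $V\geq 0$ --- it is locally bounded above, since $\mathrm{Tr}\big[F(-\D+V+\s)\big]\leq \mathrm{Tr}\big[F(-\D+\s_0)\big]<\infty$ whenever $\s>\s_0$; a proper convex function bounded above on an open set is continuous there (the Ekeland--Temam criterion cited in the paper). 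Note that this soft argument is fully available inside your own framework: Corollary \ref{TrBoundCor} exhibits $-\mathrm{Tr}\big[F(-\D+V+\s)\big]$ as an infimum of functions affine in $(V,\s)$, which gives the needed convexity for free, and then concavity plus norm continuity of $\Phi$ yields the weak upper semicontinuity you invoke along the maximizing sequence, so no operator perturbation theory is required anywhere. (Your concern about rigorously justifying $\tfrac{d}{dt}\mathrm{Tr}[F(A+tB)]\big|_{t=0}=\mathrm{Tr}[F'(A)B]$ is fair, but the paper performs the Euler--Lagrange computation at the same formal level as you do, so that step is not a point of difference between the two arguments.)
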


\begin{proof}
 For notational convenience, let $q:= \frac{\a + 1}{\a},$ so that $q = 2$ corresponds to the cubic NLSS while $q = \frac{3}{2}$ corresponds to the quintic NLSS. For fixed but arbitrary $f \in \sC$ and $\L > 0,$ $\Phi : L^q_+(\tor^3) \times \R \rightarrow \R$ is given by
\begin{equation*}
\Phi(V,\s)= - \frac{1}{q}\int V^q\,dx -\text{Tr}\big[F(-\D + V + \s)\big] - \s \L .
\end{equation*}


\emph{ $\Phi$ is strictly concave.} We begin by proving that $\text{Tr}\big[F(-\D + V + \s)\big]$ is convex. To this end, suppose $(V_j, \s_j) \in L^q(\tor^3) \times \R$ for $j = 1, 2,$ and consider the expression
\begin{equation*}
F\left(\big\la \psi, [ r(-\D + V_1 + \s_1) + (1-r)(-\D + V_2 + \s_2)]\psi \big\ra \right),
\end{equation*}
where $\psi  \in H^1(\tor^3)$  with $\| \psi \|_{L^2(\tor^3)} = 1,$ and  and $0 < r < 1$
By Lemma \ref{FConvex} and convexity of $F,$ we have 
\begin{multline*}
F\left(\big\la \psi, [ r(-\D + V_1 + \s_1) + (1-r)(-\D + V_2 + \s_2)]\psi \big\ra \right) \\
\leq r \big\la \psi, F(-\D + V_1 + \s_1) \psi \big\ra + (1-r) \big\la \psi, F(-\D + V_2 + \s_2) \psi \big\ra
\end{multline*}

Now, let $\{\psi_k\}$ be the complete set of eigenstates of  
\[r(-\D + V_1 + \s_1) + (1-r)(-\D + V_2 + \s_2).\] 
Using the definition of trace and the previous inequality, we have
\begin{multline} \label{TrConv}
\sum_{k=1}^{\infty} F \left(\big\la \psi_k, [ r(-\D + V_1 + \s_1) + (1-r)(-\D + V_2 + \s_2)]\psi_k \big\ra \right)\\
\leq r \sum_{k=1}^{\infty}\big\la \psi_k, F(-\D + V_1 + \s_1) \psi_k \big\ra \\+ (1 - r)\sum_{k=1}^{\infty}\big\la \psi_k, F(-\D + V_2 + \s_2) \psi_k \big\ra .
\end{multline}
Thus, $\text{Tr}\big[F(-\D + V + \s)\big]$ is convex.

As the remaining two terms $ - \frac{1}{q}\int V^q\,dx -\s \L $ are clearly concave in $(V,\s),$ we conclude $\Phi(V, \s)$ is concave. To show that $\Phi(V,\s)$ is strictly concave, suppose equality holds in the concavity statement for $\Phi(V,\s).$ This reduces to the equality of the expressions
\begin{multline}\label{strictConvTr}
\text{Tr}\big[F\big(r(-\D + V_1 + \s_1) + (1-r)(-\D + V_2 + \s_2)\big)\big] \\
- r\text{Tr}\big[F(-\D + V_1 + \s_1)] - (1-r)\text{Tr}\big[F(-\D + V_2 + \s_2)\big]
\end{multline}
and
\begin{equation}\label{strictConvInt}
 \frac{1}{q}\left(r\int V_1^q dx + (1-r)\int V_2^q dx - \int(rV_1 + (1-r)V_2)^q dx\right). 
\end{equation}
By convexity of $\text{Tr}\big[F(-\D + V + \s)\big],$ the expression \eqref{strictConvTr} is nonpositive. On the other hand, by the convexity of $\int V^q\,dx$ for $q=2$ and for $q=\frac{3}{2},$ the expression \eqref{strictConvInt} is nonnegative. For equality to hold, both expressions must equal zero. 

As $\int V^q dx$ is indeed strictly convex on the domain $V \in L^q_+$ for $q = 2$ and $q = \frac{3}{2},$ setting \eqref{strictConvInt} equal zero yields $V_1 = V_2.$ Next we set the expression \eqref{strictConvTr} equal to zero, which is equivalent to the case of equality in \eqref{TrConv}. In this case, the strict convexity of $F$ implies that for all $k \in \N,$
\begin{equation*}
\la\psi_k, F(-\D + V_1 + \s_1) \psi_k \ra = \la\psi_k, F(-\D + V_2 + \s_2) \psi_k \ra
\end{equation*}

Thus, the operators $-\D + V_1 + \s_1$ and $-\D + V_2 + \s_2$ have the same set of eigenvectors $\{\psi_k\}$ with the same eigenvalues. We combine this with the previous requirement that $V_1 = V_2$ to see that we must have $\s_1 = \s_2,$ which proves the strict concavity of $\Phi(V, \s).$


\emph{$\Phi$ is bounded from above, and $-\Phi$ is coercive.} Note that since $\L >0,$ we must distinguish the cases $\s \geq 0$ and $\s <0.$ First, suppose $\s$ is nonnegative. As $F$ is a positive, decreasing function, we immediately find
\begin{equation}\label{PhiCoerPos}
\Phi(V,\s) \leq  -\frac{1}{q}\|V\|_{L^q(\tor^3)}^q  - \s \L \leq 0
\end{equation}

Now consider the case  $\s < 0.$ Let $\mu_{V,1}$ be the ground state energy of $-\D + V.$ Again using positivity of $F,$ we obtain the upper bound
\begin{equation}\label{PhiUpBd2}
\Phi(V,\s) \leq -\frac{1}{q}\|V\|_{L^q(\tor^3)}^q-F(\mu_{V,1}+\s) - \s \L 
\end{equation}
By definition, 
\begin{equation*}
\mu_{V,1} := \inf_{\psi} \int |\n \psi |^2 + V|\psi|^2 dx
\end{equation*}
where the infimum is taken over all $\psi \in H^1(\tor^3)$ satisfying $\| \psi \|_{L^2(\tor^3)} = 1.$ Choose $\psi= \big(\text{vol }\tor^3\big)^{-\frac{1}{2}}$ in the above integral.  We have
\begin{align*}
\mu_{V,1}& \leq \int |\n \psi |^2 + V|\psi|^2 dx =\int V\psi^2  \,dx\\
&\leq \| V\|_{L^q(\tor^3)} \| \psi^2\|_{L^{q'}(\tor^3)} = C_1 \| V\|_{L^q(\tor^3)},
\end{align*}
where $q'$ is the H\"older conjugate of $q,$ and  $C_1:=\big(\text{vol }\tor^3\big)^{-\frac{1}{q}}.$ For $\s$ satisfying $\s \leq -C_1 \|V\|_{L^q} \leq -\mu_{V,1},$ part (i) of  Lemma \ref{Lemma1} guarantees that for any $\beta > 1$ there is some constant $C_2$ such that $-F(\mu_{V,1} + \s) \leq \beta(\mu_{V,1} + \s) - C_2.$ Choosing $\beta > \max\{ \L, 1 \}$ yields 
\begin{align*}
\Phi(V,\s) & -\frac{1}{q}\|V\|_{L^q(\tor^3)}^q-F(\mu_{V,1}+\s) - \s \L \\
&\leq  -\frac{1}{q}\|V\|_{L^q(\tor^3)}^q + \beta\mu_{V,1} +(\beta - \L)\s - C_2 \\
& \leq -\frac{1}{q}\|V\|_{L^q(\tor^3)}^q + \beta C_1 \|V \|_{L^2 }+(\beta - \L)\s   - C_2
\end{align*}
By the last inequality above and elementary calculus, there exists a positive constant $C_3$ such that 
\begin{equation}\label{PhiCoerNeg}
\Phi(V,\s) \leq -\frac{1}{2q}\|V\|_{L^q(\tor^3)}^q  + C_3 + (\beta - \L)\s - C_2
\end{equation}
on the interval $\s \leq -C_1 \|V\|_{L^q}.$ The inequalities \eqref{PhiCoerPos} and \eqref{PhiCoerNeg} together show that $\Phi(V, \s)$ is bounded above and that $-\Phi(V, \s)$ is coercive.


\emph{$\Phi$ is continuous.} The continuity of $\Phi$ is clear for all but the trace term. As it is convex, Proposition 2.5 in Chapter 1 of \cite{EkTem} implies the trace term is continuous on its support, provided it is proper and bounded above by a constant on some open set. The trace term is  proper as $F$ is nonnegative and trace class, and the local upper bound follows from the fact that $F$ is decreasing. Indeed, for any fixed $\s_0 \in \R,$ $\text{Tr}\big[F(-\D + V + \s)\big]$ is bounded above by $\text{Tr}\big[F(-\D + \s_0)\big] < \infty$ on the interval $\s > \s_0.$ 


\emph{$\Phi$ has a unique maximizer, corresponding to a stationary state.} By standard convexity theory, $\Phi$ has a unique maximum, occurring at some $(V_0, \s_0).$ Let $\uu_0 = \{u_{0,k}\}$ denote the complete set of orthonormal eigenfunctions of $-\D + V_0,$ with corresponding eigenvalues $\muu_0 = \{\mu_{0,k}\},$ and let $\l_{0,k} = f(\mu_{0,k} + \s_0).$ As $\s_0$ is a critical point for $\Phi(V_0, \s)$ and $F' = -f,$ we find
\begin{align*}
0 &=\left. \frac{d\Phi(V_0, \s)}{ds}\right|_{\s = \s_0} = \text{Tr}\big[f(-\D + V_0 + \s_0)\big]  - \L\\
&=\sum_{k=1}^{\infty} f(\mu_{0,k} + \s_0) - \L = \sum_{k=1}^{\infty} \l_{0,k} - \L
\end{align*} 
Thus $\sum_k \l_{0,k} = \L,$ as claimed. 

As $V_0$ is the maximizer of $\Phi(V,\s_0),$ it satisfies the Euler-Lagrange equation
\begin{equation*}
-V_0^{q-1} + \sum_{k=1}^{\infty} f(\mu_{0,k} + \s_0) |u_{0,k}|^2 = 0.
\end{equation*}
Note that since $q - 1 = \frac{1}{\a},$ the equation above gives $V_0^{\frac{1}{\a}} = \sum_{k=1}^{\infty} \l_{0,k} |u_{0,k}|^2 = \r_0.$ This concludes the proof for the existence and uniqueness of stationary states.
\end{proof}


\begin{prop}
Let the hypotheses of Theorem \ref{PhiMax} be satisfied. Suppose $(\uu_0,\lu_0,\muu_0, \r_0)$ is the unique stationary state corresponding to the maximizer $(V_0, \s_0)$ of the functional $\Phi$ determined by $f$ and $\L.$ Then $\Phi(V_0, \s_0) = \sH_f(\uu_0, \lu_0).$
\end{prop}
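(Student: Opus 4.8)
The plan is to exploit the saddle-point structure linking $\G$, $\Phi$, and $\sH_f$: the stationary state $(\uu_0,\lu_0)$ simultaneously realizes the infimum that defines $\Phi(V_0,\s_0)$ and the supremum that relates $\G$ to $\sH_f$, while the constraint $\sum_k \l_{0,k} = \L$ from Theorem \ref{PhiMax} annihilates the Lagrange-multiplier term. No new estimate is required; the whole argument is a matter of evaluating the two optimization statements at the common point $(\uu_0,\lu_0,V_0,\s_0)$.

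First I would evaluate $\Phi$ at its maximizer by returning to the definition $\Phi(V,\s):=\inf_{\uu,\lu}\G(\uu,\lu,V,\s)$. The derivation of the dual functional, built on Lemma \ref{TrBound} and Corollary \ref{TrBoundCor}, shows that for fixed $(V,\s)$ this infimum over $(\uu,\lu)$ is attained exactly when $\uu$ is the orthonormal set of eigenstates of $-\D + V$ and $\l_k = f(\mu_{V,k}+\s)$. By the construction of the stationary state in Theorem \ref{PhiMax}, at $(V,\s)=(V_0,\s_0)$ this attaining configuration is precisely $(\uu_0,\lu_0)$, since $\uu_0$ diagonalizes $-\D + V_0$ with eigenvalues $\muu_0$ and $\l_{0,k}=f(\mu_{0,k}+\s_0)$. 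Hence $\Phi(V_0,\s_0) = \G(\uu_0,\lu_0,V_0,\s_0)$.

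Next I would compute $\G(\uu_0,\lu_0,V_0,\s_0)$ using the lemma asserting $\sup_V \G(\uu,\lu,V,\s) = \sH_f(\uu,\lu) + \s\big[\sum_k \l_k - \L\big]$, with the supremum attained at $V = \big(\sum_k \l_k|u_k|^2\big)^{\a} = \r_{\uu,\lu}^{\a}$. By Theorem \ref{PhiMax} we have $V_0 = \r_0^{\a} = \r_{\uu_0,\lu_0}^{\a}$, so $V_0$ is exactly this maximizing potential, giving $\G(\uu_0,\lu_0,V_0,\s_0) = \sH_f(\uu_0,\lu_0) + \s_0\big[\sum_k \l_{0,k} - \L\big]$. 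Invoking the constraint $\sum_k \l_{0,k} = \L$, also established in Theorem \ref{PhiMax}, makes the bracketed term vanish, and combining with the previous paragraph yields $\Phi(V_0,\s_0) = \sH_f(\uu_0,\lu_0)$.

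The only subtlety, rather than a genuine obstacle, is confirming that the single pair $(\uu_0,\lu_0)$ is consistent with both extremal characterizations at once: the minimizer of $\G(\cdot,\cdot,V_0,\s_0)$ over $(\uu,\lu)$ must coincide with the configuration whose density raised to the power $\a$ returns $V_0$. This compatibility is precisely the content of the fixed-point relations $V_0 = \r_0^{\a}$ and $\l_{0,k} = f(\mu_{0,k}+\s_0)$ encoded in the maximizer of Theorem \ref{PhiMax}, so the identity follows by assembling the two optimization lemmas with no further analysis.
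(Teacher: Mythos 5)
Your proof is correct, and it is a genuinely cleaner packaging of the same ingredients, but it routes through a different lemma than the paper does. The paper's proof starts from the explicit dual formula $\Phi(V,\s)= -\tfrac{\a}{\a+1}\int V^{\frac{\a+1}{\a}}dx-\mathrm{Tr}\big[F(-\D+V+\s)\big]-\s\L$, substitutes $V_0=\r_0^{\a}$, converts the trace term back into $\Pf(\uu_0,\lu_0,\r_0^{\a}+\s_0)$ via the equality case of Corollary \ref{TrBoundCor}, peels off the $\s_0\sum_k\l_{0,k}$ term using $\sum_k\l_{0,k}=\L$, and finishes with Remark \ref{PsiEnCas}, which identifies $\Pf(\uu_0,\lu_0,\r_0^{\a})-\tfrac{\a}{\a+1}\int\r_0^{\a+1}dx$ with $\sH_f(\uu_0,\lu_0)$. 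You instead never touch the trace formula: you compose the two attainment statements directly --- the infimum defining $\Phi(V_0,\s_0)$ is attained at $(\uu_0,\lu_0)$ (equality clause of Corollary \ref{TrBoundCor}), and the supremum in the unnamed lemma $\sup_V\G(\uu,\lu,V,\s)=\sH_f(\uu,\lu)+\s\big[\sum_k\l_k-\L\big]$ is attained at $V=\r_{\uu,\lu}^{\a}$, which equals $V_0$ by the fixed-point relation of Theorem \ref{PhiMax} --- and then the constraint $\sum_k\l_{0,k}=\L$ kills the multiplier term. Notably, that $\sup_V$ lemma is proved in the paper but never actually used in the paper's own proof of this proposition, so your argument puts it to work and makes the saddle-point structure of the identity transparent; the paper's version, by contrast, is a self-contained computation that only needs the formula for $\Phi$ derived immediately beforehand. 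One small overstatement on your side: you claim the infimum over $(\uu,\lu)$ is attained \emph{exactly when} $(\uu,\lu)=(\uu_V,\lu_V)$, whereas the paper only asserts equality \emph{at} that configuration (an ``if'', not ``only if''); attainment is all your argument needs, so this is harmless, but the uniqueness of the minimizer is not something the paper establishes.
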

\prf
At $(V_0, \s_0),$ we have $V_0 = \r_0^{\a},$ and $\sum_{k =1}^{\infty} \l_{0,k} = \L.$ Using Corollary \ref{TrBoundCor} and Remark \eqref{PsiEnCas}, we find
\begin{align*}
\Phi(V_0, \s_0) &= -\text{Tr}\big[F(-\D + \r_0^{\a} + \s_0)\big] - \s_0 \L - \frac{\a}{\a+1}\int \r_0^{\a + 1}dx\\
&= \Pf(\uu_0,\lu_0, \r_0^{\a}+\s_0) -\s_0\L -\frac{\a}{\a+1}\int \r_0^{\a + 1}dx\\
&=\Pf(\uu_0,\lu_0, \r_0^{\a}) + \s_0(\sum_{k=1}^{\infty}\l_k - \L) -\frac{\a}{\a+1}\int \r_0^{\a + 1}dx\\
&=\sH_f(\uu_0, \lu_0).
\end{align*} 
\endprf


\begin{rem}
We note that the stationary states $(\uu_0,\lu_0,\muu_0, \r_0)$ for the quintic NLSS are shown to exist without necessary restriction on $\|\uu_0\|_{H^1_{\lu}(\tor^3)},$ but that we have only proven global existence of solutions to the quintic NLSS in the case $\|\uu_0\|_{H^1_{\lu}(\tor^3)} < \eta,$ for some $\eta > 0.$ It may be the case that there exists a choice of $f \in \sC$ that corresponds to a stationary state with large initial data, which would be an improvement to our results from Chapter 3. However, recall from the introduction that Ionescu and Pausader established the existence and uniqueness of global in time solutions to the defocusing quintic NLS on the square, rational 3-torus, for all $H^1$ initial data. It is our hope that future research will establish analogous results for the quintic NLS system, so that the nonlinear stability statement in  the quintic case of Theorem \ref{StabThm} will hold for all time.
\end{rem}

\subsection*{Acknowledgements}
We thank the anonymous referees for very helpful comments.
T.C. gratefully acknowledges support by the NSF through grants DMS-1151414 (CAREER), DMS-1716198, DMS-2009800, and the RTG Grant DMS-1840314 {\em Analysis of PDE}. A.U. was supported by NSF grants DMS-1716198 and DMS-2009800 through T.C., and by the RTG Grant DMS-1840314 {\em Analysis of PDE}.
\nocite{*}      
\bibliographystyle{plain}  
\bibliography{Chen-Urban-NLSS}        
\index{Bibliography@\emph{Bibliography}}

\end{document}